\newcommand {\cCqyusRS} {Multi-Prover Quantum Merlin-Arthur Proof Systems with Small Gap}
\newcommand {\ZNbYlnmP} {\REQUIRE}
\newcommand {\tkOoHzex} {\ENSURE}
\newcommand {\ouhvXczx} {\usepackage[T1]{fontenc} \usepackage{mathptmx} \usepackage[scaled]{helvet} \usepackage{courier}}
\newcommand {\VCaaPKmo} [1] {\href{mailto:#1}{\texttt{#1}}}
\newcommand {\YALAmuJD} {Attila Pereszl\'{e}nyi}
\newcommand {\jpqbZwYu} {\VCaaPKmo{attila.pereszlenyi@gmail.com}}
\newcommand {\HQONrXtZ} {Centre for Quantum Technologies, National University of Singapore}
\newcommand {\cuIynkaR} {Without loss of generality}
\newcommand {\bmZlBfAS} {without loss of generality}
\newcommand {\iZJgfzde} {if and only if}
\newcommand {\ygMPbVVP} {such that}
\newcommand {\qpcpnInM} {I.e.,}
\newcommand {\WpedDmmP} {i.e.,}
\newcommand {\eRAJAvVL} {e.g.,}
\newcommand {\vHtwIXhr} {Cauchy--Schwarz inequality}
\newcommand {\fKqDCezI} {Chernoff bound}
\newcommand {\QgZuCZdC} {et al.}
\newcommand {\XaOdbtsT} {quantum interactive proof system}
\newcommand {\PmUQSIgQ} {\author {\YALAmuJD\thanks{E-mail: \jpqbZwYu.}\\ \textsl{\small \HQONrXtZ}}}
\newcommand {\LeqcErMT} {\bibliographystyle{halpha} \bibliography{./bib}}
\newcommand {\MJqbMYrm} [1] {Theorem~\ref{#1}}
\newcommand {\dWQdKYxc} [1] {Lemma~\ref{#1}}
\newcommand {\dOYmqxlU} [1] {Corollary~\ref{#1}}
\newcommand {\MOVPSHbF} [1] {Definition~\ref{#1}}
\newcommand {\FxucuSoM} [1] {Algorithm~\ref{#1}}
\newcommand {\HVRLKXNY} [1] {Section~\ref{#1}}
\newcommand {\IhApwfvN} [1] {Appendix~\ref{#1}}
\newcommand {\DnoYoBfx} [1] {\ensuremath{ \left( #1 \right) }}
\newcommand {\BKbTllLo} [1] {\ensuremath{ \left[ #1 \right] }}
\newcommand {\GGfSpoIn} [1] {\ensuremath{ \left\lbrace #1 \right\rbrace }}
\newcommand {\tXuOGzXr} {\: \! \!}
\newcommand {\HxLQfJAj} [2] {\ensuremath{ #1 \tXuOGzXr \DnoYoBfx{ #2 } }}
\newcommand {\QgeRYsKQ} [2] {\ensuremath{ #1 \tXuOGzXr \BKbTllLo{ #2 } }}
\DeclareMathOperator*{\argmax}{arg\,max}
\DeclareMathOperator*{\opE}{\mathbb{E}}
\newcommand {\faJHPjTb} {\ensuremath{ \stackrel {\mathrm{def}} {=} }}
\newcommand {\WHYFCIFc} {\ensuremath{ \iota }}
\newcommand {\QUPOtzml} [1] {\HxLQfJAj {O} {#1}}
\newcommand {\CNzHFRry} [1] {\HxLQfJAj {\widetilde{O}} {#1}}
\newcommand {\uaivIgOh} [1] {\HxLQfJAj {\Omega} {#1}}
\newcommand {\FIqGvTZG} [1] {\QUPOtzml{\log #1}}
\newcommand {\iteYiYHH} [1] {\GGfSpoIn{ 1, 2, \dotsc, #1 }}
\newcommand {\HelzWRDf} [1] {\GGfSpoIn{ 0, 1, \dotsc, #1 - 1 }}
\newcommand {\lUlEWeUc} [2] {\GGfSpoIn{ #1 : #2 }}
\newcommand {\NpwuHusD} [1] {\ensuremath{ \left\lceil #1 \right\rceil }}
\newcommand {\FyOFSCmf} [3] {\ensuremath{ #1 : \, #2 \rightarrow #3 }}
\newcommand {\ipeoICuS} {\ensuremath{ \mathsf{poly} }}
\newcommand {\wrDoIJTw} [1] {\HxLQfJAj {\ipeoICuS} {#1}}
\newcommand {\jVGfTsvf} [1] {\ensuremath{ \mathbb{#1} }}
\newcommand {\VWqEPttf} {\jVGfTsvf{C}}
\newcommand {\VrXQeMta} {\jVGfTsvf{N}}
\newcommand {\CYLQkbAO} {\ensuremath{ \jVGfTsvf{Z}^{+} }}
\newcommand {\fHAwcqZo} [1] {\ensuremath{ \left\langle #1 \right| }}
\newcommand {\raIfJZxu} [1] {\ensuremath{ \left| #1 \right\rangle }}
\newcommand {\ZFdmJtwX} [2] {\ensuremath{ \raIfJZxu{#1} \! \fHAwcqZo{#2} }}
\newcommand {\GyFczGjW} [1] {\ZFdmJtwX{#1}{#1}}
\newcommand {\NChmtXsR} [2] {\ensuremath{ \left\langle #1 \middle\vert #2 \right\rangle }}
\newcommand {\NUxdODWZ} {\ensuremath{ \otimes }}
\newcommand {\pTaTOfgX} {\ensuremath{ \mathrm{Tr} }}
\newcommand {\AfBMaxZI} [1] {\ensuremath{ \left| #1 \right| }}
\newcommand {\GoyqnUbk} [1] {\ensuremath{ \left\| #1 \right\| }}
\newcommand {\cbKdRoIN} [2] {\ensuremath{ \GoyqnUbk{#1}_{#2} }}
\newcommand {\BdlsvLWK} [1] {\cbKdRoIN{#1}{\pTaTOfgX}}
\newcommand {\xOJdCehH} [1] {\cbKdRoIN{#1}{1}}
\newcommand {\vPiROeRG} [2] {\ensuremath{ \frac{1}{2} \BdlsvLWK{#1 - #2} }}
\newcommand {\TdSeQgKh} [1] {\QgeRYsKQ{\Pr}{#1}}
\newcommand {\YcVNkVQD} [1] {\QgeRYsKQ{\opE}{#1}}
\newcommand {\yxpqVGXT} [1] {\ensuremath{ \mathsf{#1} }}
\newcommand {\OduTrvoO} [1] {\ensuremath{\text{\textsc{#1}}}}
\newcommand {\hpRexCjN} {\OduTrvoO{3Sat}}
\newcommand {\OktNAkJL} {\yxpqVGXT{NP}}
\newcommand {\LHqjxWJM} {\yxpqVGXT{PSPACE}}
\newcommand {\QZccOLoE} {\yxpqVGXT{EXP}}
\newcommand {\vxJDyEiW} {\yxpqVGXT{NEXP}}
\newcommand {\FmrmOQaW} {\yxpqVGXT{IP}}
\newcommand {\kJmbkUXz} {\yxpqVGXT{MA}}
\newcommand {\sVwdElyU} {\yxpqVGXT{PP}}
\newcommand {\SHvwttaw} {\yxpqVGXT{QMA}}
\newcommand {\ACroxxZA} {\yxpqVGXT{QCMA}}
\newcommand {\cdToKcBp} {\yxpqVGXT{BellQMA}}
\newcommand {\ANJFILuJ} {\yxpqVGXT{LOCCQMA}}
\newcommand {\JWevirQe} {\yxpqVGXT{QIP}}
\newcommand {\VGZgApSi} {\yxpqVGXT{BQP}}
\newcommand {\AIJjtsxK} {\yxpqVGXT{PQP}}
\newcommand {\KOzxJkVZ} [4] {\HxLQfJAj{\SHvwttaw}{#1, #2, #3, #4}}
\newcommand {\fiWpgtYc} [4] {\HxLQfJAj{\cdToKcBp}{#1, #2, #3, #4}}
\newcommand {\lrcGeBqo} [1] {\QgeRYsKQ{\SHvwttaw}{#1}}
\newcommand {\pUeVeClC} [1] {\QgeRYsKQ{\cdToKcBp}{#1}}
\newcommand {\GOljMfAR} [1] {\QgeRYsKQ{\ANJFILuJ}{#1}}
\newcommand {\vtZmVJmM} [2] {\HxLQfJAj{\ACroxxZA}{#1, #2}}
\newcommand {\wQoFYwHz} [1] {\ensuremath{ \mathnormal{#1} }}
\newcommand {\nVzyhtCm} [1] {\ensuremath{ \mathbf{#1} }}
\newcommand {\gctqElrH} {\nVzyhtCm{CNOT}}
\newcommand {\lDGygnkY} {\nVzyhtCm{H}}
\newcommand {\sYcexCeY} [2] {\HxLQfJAj {\nVzyhtCm{R}_{#1}} {#2}}
\newcommand {\YMElKjII} {\ensuremath{ \mathbbm{1} }}
\newcommand {\VSdxZJNO} [1] {\ensuremath{ \mathsf{#1} }}
\newcommand {\zdLEuPyh} [1] {\ensuremath{ \mathcal{#1} }}
\newcommand {\PZzafWXr} [1] {\ensuremath{\mathrm{#1}}}
\newcommand {\KKIedNYP} [1] {\HxLQfJAj{\PZzafWXr{D}}{#1}}
\theoremstyle {plain}
\newtheorem {thm} {Theorem} [section]
\newtheorem {cor} [thm] {Corollary}
\newtheorem {lem} [thm] {Lemma}
\theoremstyle {remark}
\newtheorem {rem} [thm] {Remark}
\theoremstyle {definition}
\newtheorem {defi} [thm] {Definition}
\newcommand {\HNGSAaAW} {\OduTrvoO{Succinct3Col}}
\newcommand {\oVrArWkh} {\ensuremath{C_G}}
\title {\textbf{\cCqyusRS}}
\date {May 11, 2012}
\begin{document}
\maketitle
\begin{abstract}
This paper studies multiple-proof quantum
Merlin-Arthur (\SHvwttaw) proof systems in the setting
when the completeness-soundness gap is small.
Small means that we only lower-bound the gap
with an inverse-exponential function of the
input length, or with an even smaller function.
Using the protocol of Blier and Tapp \cite{Blier2009},
we show that in this case the proof system
has the same expressive power as non-deterministic
exponential time (\vxJDyEiW).
Since single-proof \SHvwttaw{} proof systems,
with the same bound on the gap,
have expressive power at most exponential time (\QZccOLoE),
we get a separation between single and
multi-prover proof systems in the `small-gap setting',
under the assumption that $ \QZccOLoE \neq \vxJDyEiW $.
This implies, among others, the nonexistence of
certain operators called disentanglers
(defined by Aaronson \QgZuCZdC\ \cite{Aaronson2009}),
with good approximation parameters.
\par
We also show that in this setting the proof system
has the same expressive power if we restrict the verifier
to be able to perform only Bell-measurements, \WpedDmmP{}
using a \cdToKcBp{} verifier.
This is not known to hold in the usual setting,
when the gap is bounded by an inverse-polynomial
function of the input length.
To show this we use the protocol of Chen and Drucker
\cite{Chen2010}.
The only caveat here is that we need at least a linear
amount of proofs to achieve the power of \vxJDyEiW,
while in the previous setting two
proofs were enough.
\par
We also study the case when the proof-lengths are
only logarithmic in the input length and observe
that in some cases the expressive power decreases.
However, we show that it doesn't decrease further if we make
the proof lengths to be even shorter.
\end{abstract}
\section{Introduction}
Arthur-Merlin games and the class \kJmbkUXz{} was defined by
Babai \cite{Babai1985} as natural extension of the class \OktNAkJL{}
using randomization.
In the definition of \kJmbkUXz{} the prover (Merlin) gives a polynomial
length `proof' to the verifier (Arthur), who then performs a
polynomial-time randomized computation and has to decide
if an input $x$ is in a language or not.
The verifier is allowed to make some error in the decision,
hence making the class \kJmbkUXz{} possibly more powerful than \OktNAkJL.
If we add communication to the model, \WpedDmmP{} the prover and
the verifier can exchange a polynomial number of messages then
we get the class \FmrmOQaW{} \cite{Goldwasser1989}.\footnote{Babai also defined
an interactive version of \kJmbkUXz, that can be thought of as a
`public-coin' version of \FmrmOQaW.
Later Goldwasser and Sipser \cite{Goldwasser1986} showed
that this class has the same expressive power as \FmrmOQaW.}
The classes \FmrmOQaW{} and \kJmbkUXz{} has been extensively studied and
it is known that in both cases we can make the protocol to have
one-sided, exponentially small error without hurting the
power of the proof system.
\qpcpnInM{} the verifier only makes an error if the input is not
in the language, and even in this case the error
probability is at most an inverse-exponential
function of the input length.
Surprisingly, it turned out that \FmrmOQaW{} is equal to the class
of problems decidable in polynomial space (\LHqjxWJM)
\cite{Lund1992,Shamir1992}.
For more information on these classes see \eRAJAvVL{} the book
of Arora and Barak \cite{Arora2009}.
\par
Quantum Merlin-Arthur proof systems (and the class \SHvwttaw)
were introduced by Knill \cite{Knill1996}, Kitaev
\cite{Kitaev2002}, and also by Watrous \cite{Watrous2000}
as a natural extension
of \kJmbkUXz{} and \OktNAkJL{} to the quantum computational setting.
Similarly, \XaOdbtsT{}s (and the class \JWevirQe) were introduced by
Watrous \cite{Watrous2003} as a quantum
analogue of \FmrmOQaW.
These classes have also been well studied and now it's known
that the power of \XaOdbtsT{}s is the same as the classical ones,
\WpedDmmP{} $ \JWevirQe = \FmrmOQaW = \LHqjxWJM $ \cite{Jain2010}.
Furthermore, \XaOdbtsT{}s still have the same expressive power
if we restrict the number of messages to three and
have exponentially small one-sided error \cite{Kitaev2000}.
The class \SHvwttaw{} can also be made to have exponentially small error,
and has natural complete problems \cite{Aharonov2002}.
Interestingly it's not known if we can make \SHvwttaw{} to have
one-sided error.\footnote{In a recent paper, Jordan
\QgZuCZdC\ \cite{Jordan2011} showed that the proof system can achieve
perfect completeness if the prover's message is classical.}
\par
Several variants of \JWevirQe{} and \SHvwttaw{} have also been studied.
For example one can consider the case where some or
all of the messages are
short, meaning at most logarithmic in the input length
\cite{Marriott2005,Beigi2011,Pereszlenyi2011}.
Our focus will be more on the setting that was introduced by
Ito, Kobayashi and Watrous \cite{Ito2010}.
They studied quantum classes where the gap between the
completeness and soundness parameter is very small.
Their main result is that quantum interactive proofs with
double-exponentially small gap is characterized by
\QZccOLoE{} (deterministic exponential time).
\par
Probably the most interesting generalization of \SHvwttaw{}
is by Kobayashi, Matsumoto and Yamakami \cite{Kobayashi2003}
who defined the class \lrcGeBqo{k}.
In this setting there are $k$ provers who send $k$ quantum proofs
to the verifier, and these proofs are guaranteed to be unentangled.
Note that in the classical setting this generalization is not
interesting since we can just concatenate the $k$ proofs and
treat them as one proof.
However, in the quantum case a single prover can entangle
the $k$ proofs and no method is known to detect such
a cheating behavior.
\par
Obviously the most important question is whether more
provers make the class more powerful or not.
In a later version of their paper, Kobayashi \QgZuCZdC\ \cite{Kobayashi2003}
(and independently Aaronson \QgZuCZdC\ \cite{Aaronson2009})
showed that $ \lrcGeBqo{2} = \lrcGeBqo{k} $ for all
polynomially-bounded $k$ \iZJgfzde{} \lrcGeBqo{2} can be amplified
to exponentially small error.
Later Harrow and Montanaro \cite{Harrow2010} showed that the above equality
indeed holds.
The question now is whether \SHvwttaw{} is equal to \lrcGeBqo{2},
or in other words, does unentanglement actually help?
There are signs that show that the above two classes are
probably not equal.
For example, Liu, Christandl, and Verstraete \cite{Liu2007} found a problem
that has a \lrcGeBqo{2} proof system, but not known to belong
to \SHvwttaw.
Blier and Tapp \cite{Blier2009} showed that all problems in
\OktNAkJL{} have a \lrcGeBqo{2} proof system where the length of
both proofs are logarithmic in the input length.
On the other hand, if \SHvwttaw{} has one logarithmic-length proof
then it has the same expressive power as \VGZgApSi{} \cite{Marriott2005}.
Since \VGZgApSi{} is not believed to contain \OktNAkJL{},
\lrcGeBqo{2} with logarithmic length proofs is probably
more powerful than \SHvwttaw{} with a logarithmic proof.
The above proof system had some inverse-polynomial
gap, and this gap was later improved by several papers
\cite{Beigi2010,Chiesa2011,Gall2011}.
However, in all of these improvements the gap is still an
inverse-polynomial function of the input
length.\footnote{It is not believed that
the gap in this setting can be improved to a constant
because it would imply that $ \lrcGeBqo{2} = \vxJDyEiW $. \cite{Aaronson2009}}
Another evidence is by Aaronson \QgZuCZdC\ \cite{Aaronson2009}
who found a \lrcGeBqo{\CNzHFRry{\sqrt{n}}} proof
system for \hpRexCjN{} with constant gap
and where each proof consist of \FIqGvTZG{n} qubits.
Again, it seems unlikely that \hpRexCjN{}
has a proof system with one \CNzHFRry{\sqrt{n}}-length proof.
\subsection{Our Contribution}
We study multiple-proof \SHvwttaw{} proof systems in the setting
where the completeness-soundness gap is exponentially
small or even smaller.
We examine three variants of these proof systems
as described below.
\subsubsection{\texorpdfstring{\lrcGeBqo{k} with Small Gap}
{QMA[k] with Small Gap}}
The first variant we look at is the small-gap version of \lrcGeBqo{k}
mentioned above.
We show that this class is exactly characterized
by \vxJDyEiW{} if the number of proofs are between $2$ and
\wrDoIJTw{n}, and the completeness-soundness gap is between
exponentially or double-exponentially small.
The power of the proof system is still \vxJDyEiW{} if
we require it to have one-sided error.
More precisely we show the following theorem.
\begin{thm}
$ \displaystyle \vxJDyEiW =
\KOzxJkVZ{\ipeoICuS}{2}{1}{1 - \uaivIgOh{4^{-n}}} =
\bigcup_{\substack{0 < s < c \leq 1, \\ c-s \geq 2^{ -2^{\ipeoICuS} }}}
\KOzxJkVZ{\ipeoICuS}{\ipeoICuS}{c}{s} $,
where \HxLQfJAj{c}{n} and \HxLQfJAj{s}{n} can be calculated in time
at most exponential in $n$ on a classical computer.
\label{NmPOZGKI}
\end{thm}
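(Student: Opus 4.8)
The plan is to prove both equalities at once, via the cycle of three inclusions
\[ \vxJDyEiW \subseteq \KOzxJkVZ{\ipeoICuS}{2}{1}{1 - \uaivIgOh{4^{-n}}} \subseteq \bigcup_{\substack{0 < s < c \leq 1, \\ c-s \geq 2^{-2^{\ipeoICuS}}}} \KOzxJkVZ{\ipeoICuS}{\ipeoICuS}{c}{s} \subseteq \vxJDyEiW , \]
of which the middle one is immediate: a $\KOzxJkVZ{\ipeoICuS}{2}{1}{1 - \uaivIgOh{4^{-n}}}$ protocol is in particular a $\KOzxJkVZ{\ipeoICuS}{\ipeoICuS}{c}{s}$ protocol with $c = 1$ and $s = 1 - \uaivIgOh{4^{-n}}$, whose gap $\uaivIgOh{4^{-n}}$ is far larger than $2^{-2^{\ipeoICuS}}$ and whose completeness and soundness functions are trivially computable in exponential time. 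So all the content sits in the first and last inclusions, which are also what force the particular parameters $4^{-n}$ and $2^{-2^{\ipeoICuS}}$. (Nothing is special about $4^{-n}$; any fixed inverse-exponential function works the same way.)

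\emph{Lower bound $\vxJDyEiW \subseteq \KOzxJkVZ{\ipeoICuS}{2}{1}{1 - \uaivIgOh{4^{-n}}}$.} Here I would scale up the Blier--Tapp protocol \cite{Blier2009}. Recall that \vxJDyEiW{} is captured by a succinct $\mathsf{NP}$-complete problem --- concretely \HNGSAaAW{}, whose instance is a circuit \oVrArWkh{} of size \wrDoIJTw{n} describing a graph $G$ on $N$ vertices with $\log N = \wrDoIJTw{n}$, the question being whether $G$ is $3$-colourable; this is \vxJDyEiW-complete by the usual succinctness theorem applied to (the local reduction for) $3$-colouring. Running Blier--Tapp on $G$, the two provers each send a message of $\FIqGvTZG{N} = \wrDoIJTw{n}$ qubits, purportedly the two halves of an encoding of a uniformly random properly coloured vertex, and the verifier performs the Blier--Tapp consistency and edge tests, answering each query ``is $\DnoYoBfx{u,v}$ an edge of $G$?'' by evaluating \oVrArWkh{} in time \wrDoIJTw{n}. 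Thus the proof length and the verifier's running time are polynomial, the completeness is perfect, and the soundness error is inverse-polynomial in $N$, hence inverse-exponential in $n$; a suitable polynomial padding of the input, chosen relative to the exponent hidden in that polynomial, brings the gap into the stated form $\uaivIgOh{4^{-n}}$.

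\emph{Upper bound $\bigcup_{c,s} \KOzxJkVZ{\ipeoICuS}{\ipeoICuS}{c}{s} \subseteq \vxJDyEiW$.} I would adapt the brute-force enumeration of Ito, Kobayashi and Watrous \cite{Ito2010}. Fix a protocol with $k \leq \wrDoIJTw{n}$ proofs of $m \leq \wrDoIJTw{n}$ qubits each, a \wrDoIJTw{n}-gate verifier, and gap $g = c - s \geq 2^{-2^{p(n)}}$ for a polynomial $p$. Since the maximum acceptance probability over unentangled proofs $\rho_1 \NUxdODWZ \cdots \NUxdODWZ \rho_k$ is multilinear in the $\rho_i$, it is attained at a product of pure states $\raIfJZxu{\psi_1} \NUxdODWZ \cdots \NUxdODWZ \raIfJZxu{\psi_k}$ with $\raIfJZxu{\psi_i} \in \VWqEPttf^{2^m}$. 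The \vxJDyEiW{} machine nondeterministically guesses each $\raIfJZxu{\psi_i}$ up to a point of an $\varepsilon$-net of pure states in $\VWqEPttf^{2^m}$ with $\varepsilon = \Theta(g/k)$; one such point is $2^m$ amplitudes of precision $\varepsilon$, i.e.\ $2^{\wrDoIJTw{n}}$ bits, so the whole guess is $2^{\wrDoIJTw{n}}$ bits --- which fits the nondeterminism budget of \vxJDyEiW{} \emph{precisely because} $g \geq 2^{-2^{\ipeoICuS}}$, so that $\log(1/\varepsilon)$ is only exponential in $n$. It then deterministically assembles, from the verifier's circuit, the $2^{\wrDoIJTw{n}} \times 2^{\wrDoIJTw{n}}$ ``accept'' operator on the $km$ proof qubits, evaluates the acceptance probability of the guessed product state in time $2^{\wrDoIJTw{n}}$ (form the tensor-product state vector and contract it against that operator), and accepts iff the result exceeds the threshold $\frac{1}{2}(c+s)$, which is computable in exponential time by hypothesis. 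Because $\varepsilon = \Theta(g/k)$ the net perturbs the acceptance probability by less than $g/3$, so a yes-instance has an accepting computation, whereas on a no-instance every guessed product state accepts with probability at most $s < \frac{1}{2}(c+s)$ and no computation accepts.

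\emph{The main obstacle} is the upper bound, and within it the two-sided calibration of the net: it must be \emph{fine} enough, at scale $\Theta(g/k)$, to preserve the completeness--soundness gap, yet \emph{coarse} enough that a single net point --- hence the entire nondeterministic guess --- is describable in $2^{\wrDoIJTw{n}}$ bits; reconciling these two demands is exactly what forces the gap bound $c - s \geq 2^{-2^{\ipeoICuS}}$ and nothing smaller, paralleling the double-exponential threshold of \cite{Ito2010}. Carrying it out also requires verifying that the entries of the ``accept'' operator, the tensor-product state, and the threshold $\frac{1}{2}(c+s)$ can each be produced to sufficient precision in deterministic exponential time. The lower bound is comparatively mild; its only delicate points are to check that the Blier--Tapp verifier accesses $G$ solely through polynomially many edge queries --- so that passing from the explicit graph to the circuit \oVrArWkh{} costs merely a polynomial factor --- and to arrange the input padding so that the inverse-exponential gap lands exactly at $\uaivIgOh{4^{-n}}$.
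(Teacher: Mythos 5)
Your proposal is correct and follows essentially the same route as the paper: the lower bound runs the Blier--Tapp protocol on the \vxJDyEiW-complete problem \HNGSAaAW{} (with the graph accessed via the succinct circuit), yielding perfect completeness and an inverse-exponential gap, and the upper bound is the same brute-force argument in which a \vxJDyEiW{} machine guesses exponential-precision classical descriptions of the product proof state and evaluates the verifier's acceptance probability, with the trivial middle inclusion closing the cycle. Your write-up merely fleshes out the $\varepsilon$-net and precision bookkeeping that the paper leaves as a sketch, and substitutes a padding argument for the paper's direct computation of the $\uaivIgOh{4^{-n}}$ soundness gap; neither changes the substance.
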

In the notation above the first parameter of \SHvwttaw{} is
the upper-bound on the length of each proofs the verifier
receives, in qubits. The second parameter denotes
the number of unentangled proofs, while the third
is the completeness and the fourth is the soundness parameter.
For a precise definition of the above notation
see \HVRLKXNY{EriaXXKj}.
Note that in \MJqbMYrm{NmPOZGKI} the \vxJDyEiW{} upper-bound
is trivial, as it follows from exactly the same argument that
shows the \vxJDyEiW{} upper-bound to the `normal-gap' \lrcGeBqo{2}.
Interestingly, there is no other upper-bound known for \lrcGeBqo{2},
and it is a big open question to strengthen this bound
\cite{Aaronson2009}.
The surprising phenomenon is that if we relax the bound
on the gap, then the expressive power of the class jumps
all the way up to the trivial upper-bound.
Note that an \QZccOLoE{} upper-bound for the small-gap, single-prover
\SHvwttaw{} is easily seen, so we have a separation between
\SHvwttaw{} and \lrcGeBqo{k} in the small-gap setting.
For more discussion about this and other consequences see
\HVRLKXNY{HujYVLzw}.
\par
The non-trivial part of the proof is proving the \vxJDyEiW{}
lower-bound.
For this we use the protocol of Blier and Tapp \cite{Blier2009}
on a \vxJDyEiW-complete language which we call \HNGSAaAW,
the succinct version of graph 3-coloring.
The proof of \MJqbMYrm{NmPOZGKI} is
presented in \HVRLKXNY{jvxzezwT}.
\subsubsection{\texorpdfstring{\pUeVeClC{k} with Small Gap}
{BellQMA[k] with Small Gap}}
The class \pUeVeClC{k} was defined by
Aaronson \QgZuCZdC\ \cite{Aaronson2009},
Brand\~{a}o \cite{Brandao2008}, and
Chen and Drucker \cite{Chen2010}.
The above definitions are not exactly the same,
but the subtle difference doesn't matter in either
of the above papers, neither it does in this paper.
The exact definition of the class can be found
in \HVRLKXNY{EriaXXKj}.
Roughly, the difference between \lrcGeBqo{k} and
\pUeVeClC{k} is that in the latter the verifier
has to measure each proof separately and non-adaptively,
then based on the outcomes has to make its decision.
Aaronson \QgZuCZdC\ \cite{Aaronson2009} asked the question whether
\pUeVeClC{k} has the same power as \lrcGeBqo{k} and
if there is a \cdToKcBp{} protocol for \hpRexCjN{}
with similar parameters as theirs.
A partial positive answer to the first question was
given by Brand\~{a}o \cite{Brandao2008}, who showed that
$ \pUeVeClC{\QUPOtzml{1}} = \SHvwttaw $, and
a positive answer to the second question
was given by Chen and Drucker \cite{Chen2010}.
The power of \pUeVeClC{k} with super-constant $k$
remains an open problem.
\par
In this paper we study the small-gap version of
\pUeVeClC{k}, where again small means
exponentially or double-exponentially small.
One can observe that
Brand\~{a}o's proof of $ \pUeVeClC{\QUPOtzml{1}} = \SHvwttaw $
doesn't go through if the gap is so small.\footnote{Aaronson
\QgZuCZdC\ \cite{Aaronson2009} also defined
the class \ANJFILuJ{} similarly to
\cdToKcBp{} but allowing the verifier to make adaptive
and even several measurements on the same proofs.
Brand\~{a}o, Christandl, and Yard \cite{Brandao2011}
showed that $ \GOljMfAR{\QUPOtzml{1}} = \SHvwttaw $.
This proof also breaks down if the gap is small.}
So we don't know the power of \pUeVeClC{k} with
constant $k$ in the small-gap setting.
However, we show that if $k = \uaivIgOh{n}$
then \pUeVeClC{k} has the same power as
\lrcGeBqo{k}, \WpedDmmP{} it also equals to \vxJDyEiW.
This is expressed by the following theorem.
\begin{thm}
$ \displaystyle \vxJDyEiW =
\fiWpgtYc{\ipeoICuS}{\uaivIgOh{n}}{c}{s} =
\bigcup_{\substack{0 < s' < c' \leq 1, \\ c'-s' \geq 2^{ -2^{\ipeoICuS} }}}
\fiWpgtYc{\ipeoICuS}{\ipeoICuS}{c'}{s'} $,
for some $c$ and $s$ with
$ \HxLQfJAj{c}{n} - \HxLQfJAj{s}{n} = \uaivIgOh{4^{-n}} $, and where
\HxLQfJAj{c'}{n} and \HxLQfJAj{s'}{n} can be calculated in time
at most exponential in $n$ on a classical computer.
\label{nKVfzRHR}
\end{thm}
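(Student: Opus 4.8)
The plan is to prove the three displayed expressions equal by a sandwich argument, where only the inclusion $ \vxJDyEiW \subseteq \fiWpgtYc{\ipeoICuS}{\uaivIgOh{n}}{c}{s} $ carries any content. For the two easy directions, I would note that a linear number of provers is in particular polynomially many and that the hypothesised gap $ c - s = \uaivIgOh{4^{-n}} $ is at least $ 2^{-2^{\ipeoICuS}} $, so the left-hand \cdToKcBp{} verifier is an admissible choice for one term of the middle union; and that a \cdToKcBp{} verifier is just a \SHvwttaw{} verifier that happens to apply a product measurement, so every $ \fiWpgtYc{\ipeoICuS}{\ipeoICuS}{c'}{s'} $ is contained in $ \KOzxJkVZ{\ipeoICuS}{\ipeoICuS}{c'}{s'} $, whence the whole union lies in $ \vxJDyEiW $ by \MJqbMYrm{NmPOZGKI}.

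For the main inclusion I would follow the template of the proof of \MJqbMYrm{NmPOZGKI}, but replace the protocol of Blier and Tapp by the \cdToKcBp{} protocol of Chen and Drucker \cite{Chen2010}. Given $ L \in \vxJDyEiW $ and an input $ x $ of length $ n $, the standard succinctness translation reduces $ x \in L $ to an instance of \HNGSAaAW{}: an exponentially large graph on $ N \le 2^{n} $ vertices (padding $ x $ if necessary) presented by a polynomial-size circuit $ C $, which is either $ 3 $-colorable or not. I would have the $ k = \uaivIgOh{n} $ unentangled provers each send the succinct analogue of the Chen--Drucker/Aaronson \QgZuCZdC{} ``proper state'' --- a uniform superposition over the $ N $ vertices labelled by a proper $ 3 $-coloring, which is a state on $ \QUPOtzml{\log N} = \QUPOtzml{n} $ qubits --- the linear number of copies being exactly what the \cdToKcBp{} version of the consistency test needs (and the source of the linear lower bound on the number of provers noted in the abstract). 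The verifier would run Chen and Drucker's checks: a proper-state/uniformity test implemented by product (Bell) measurements on a random selection of the copies, and an edge test on a uniformly random pair of vertices queried through $ C $. Each check is a constant-outcome product measurement whose description depends only on a handful of poly-bit indices, so the verifier samples its random pair, queries $ C $ on it, computes the relevant measurement in polynomial time, and is efficient even though the instance is exponential. Honest provers pass with probability $ c $ equal (or arbitrarily close) to $ 1 $. For soundness I would rerun the Chen--Drucker/Blier--Tapp-style analysis: if the provers' states are $ \delta $-far from encoding one genuine coloring, the proper-state test rejects with probability $ \uaivIgOh{\delta} $; and if they are $ \delta $-close to a coloring of a non-$ 3 $-colorable graph, then some pair must be a monochromatic edge and the edge test rejects with probability $ \uaivIgOh{4^{-n}} $, up to a loss of $ \QUPOtzml{\delta \cdot \wrDoIJTw{N}} $ from the imperfection (one bad pair among fewer than $ 4^{n} $). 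Optimising over the cheating prover's choice of $ \delta $ leaves $ s \le c - \uaivIgOh{4^{-n}} $, which is the claimed gap.

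The main obstacle is this last soundness re-analysis. Chen and Drucker use roughly $ \CNzHFRry{\sqrt{N}} $ proofs, tuned so as to balance their structural-test error against their satisfiability-test error and obtain a \emph{constant} gap; transplanted to the succinct setting that choice would require exponentially many provers, so one is forced down to only $ k = \uaivIgOh{n} $ of them, which weakens the product-measurement consistency test and shrinks the gap to inverse-exponential. The delicate points are to set the $ \delta $-thresholds so that both error contributions are controlled simultaneously and the surviving gap is provably at least $ \uaivIgOh{4^{-n}} $ --- in particular, that the degradation caused by using only $ \uaivIgOh{n} $ provers does not push it below $ 4^{-n} $, and certainly not below $ 2^{-2^{\ipeoICuS}} $ --- and to verify that the product structure of \emph{every} measurement the verifier makes survives the scaling, so that the Chen--Drucker bound on how much cross-prover entanglement can help continues to apply. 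By contrast, checking that $ C $ lets the verifier implement each local test in polynomial time is routine bookkeeping. The precise parameter conventions are those of \HVRLKXNY{EriaXXKj}.
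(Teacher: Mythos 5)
Your proposal matches the paper's proof in all essentials: the two containments you call easy are dispatched exactly as you describe (a \cdToKcBp{} verifier is a special \SHvwttaw{} verifier, so \dWQdKYxc{ULcFIvTl} gives the \vxJDyEiW{} upper bound), and the substantive inclusion is proved by running the Chen--Drucker protocol on \HNGSAaAW{} with $k = \uaivIgOh{n}$ provers each sending the uniform superposition over (vertex, color) pairs, with a consistency test and a uniformity test both realized as nonadaptive product measurements and a surviving gap of $\uaivIgOh{4^{-n}}$. One small correction to your accounting of where the linear number of provers is spent: in the paper the soundness analysis only needs $\AfBMaxZI{Z'} > 2$, i.e., constantly many provers, and the requirement $k \geq 120n$ comes instead from \emph{completeness} --- the honest provers fail the uniformity-test threshold with probability $2^{-\uaivIgOh{k}}$, and this must be dominated by the $\uaivIgOh{4^{-n}}$ soundness gap.
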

In the above the \vxJDyEiW{} upper-bound is again trivial,
so the only thing we need to do is give a \cdToKcBp{}
protocol for \vxJDyEiW.
Just as in the proof of the previous theorem,
we will use the same language (\HNGSAaAW) and
give a proof system for that.
For this we will use the protocol of
Chen and Drucker \cite{Chen2010}.
Note that similarly to their original
protocol we don't have
perfect completeness either.
\par
Knowing that having sufficiently many
provers makes \pUeVeClC{k} as powerful as
\lrcGeBqo{k}, it would be very interesting
to know the power of \pUeVeClC{k} with
small gap and constant $k$.
It's either more powerful then
$\pUeVeClC{1} = \SHvwttaw $, or somewhere between
constant and linear number of provers there is
an increase in the expressive power.
A more detailed discussion about this can be found
in \HVRLKXNY{fCJGhyBh}, and the proof of
\MJqbMYrm{nKVfzRHR} is presented in
\HVRLKXNY{BXwCHKEG}.
\subsubsection{\texorpdfstring{\lrcGeBqo{k} with Small Gap and Short Proofs}
{QMA[k] with Small Gap and Short Proofs}}
We also study the small-gap version of \lrcGeBqo{k}
where each proof is short \WpedDmmP{} at most
\FIqGvTZG{n}-many qubits.
One would expect that in this case the
class gets weaker then \vxJDyEiW.
We could only show this in the case when
the gap is inverse-exponential, which follows
from a few simple observations.
For the case when the gap is smaller than this
we give a lemma that simplifies the proof system,
which may help to prove non-trivial upper-bounds
for these classes later.
The lemma roughly says that instead of
\FIqGvTZG{n}-many qubits, we can consider the
case when all the proof are just $1$ qubits.
The above mentioned results are described in
more detail in \HVRLKXNY{sec:QMAlog}.
\subsection*{Organization of the Paper}
The remainder of the paper is organized as follows.
\HVRLKXNY{EriaXXKj} discusses the background
theorems and definitions needed for the rest of the paper.
\HVRLKXNY{jvxzezwT} presents the proof of
our first main theorem, \MJqbMYrm{NmPOZGKI},
with the last part of the proof presented
in \IhApwfvN{WEFVXHAK}.
In \HVRLKXNY{BXwCHKEG} we explain
the proof of our other main theorem,
\MJqbMYrm{nKVfzRHR}.
We end the paper with some conclusions and open
problems in \HVRLKXNY{HujYVLzw}, with
one last proof presented in
\IhApwfvN{zKhwWgDg}.
\section{Preliminaries}
\label{EriaXXKj}
We assume familiarity with quantum information and
computation \cite{Nielsen2000}; such as
quantum states, unitary operators, measurements, etc.
We also assume the reader is familiar with
computational complexity, both classical \cite{Arora2009}
and quantum \cite{Watrous2008a}.
Throughout the paper we will talk about complexity
classes like \sVwdElyU, \LHqjxWJM, \QZccOLoE, \vxJDyEiW,
\VGZgApSi, \AIJjtsxK, \SHvwttaw, and \JWevirQe.
In this section we only define the classes that
are the most relevant to our discussions,
where the definition of the rest can be found
in the above mentioned references.
The purpose of this section is to present some of the notations
and background information (definitions,
theorems) required to understand the rest of the paper.
\par
We denote the set of functions of $n$ that are
upper-bounded by some polynomial in $n$ by \wrDoIJTw{n}.
If the argument is clear, we omit it and just
write \ipeoICuS.
We denote the imaginary unit by \WHYFCIFc{} instead of
$i$, which we use as an index in summations for example.
\begin{defi}
In the paper we use some well-known quantum gates.
We define them here.
\begin{align*}
\gctqElrH &\faJHPjTb \GyFczGjW{00} + \GyFczGjW{01} +
\ZFdmJtwX{11}{10} + \ZFdmJtwX{10}{11} \\
\lDGygnkY &\faJHPjTb \frac{1}{\sqrt{2}}
\DnoYoBfx{\GyFczGjW{0} + \ZFdmJtwX{0}{1} + \ZFdmJtwX{1}{0} - \GyFczGjW{1} } \\
\sYcexCeY{x}{\omega} &\faJHPjTb \cos \frac{\omega}{2} \GyFczGjW{0}
-\WHYFCIFc \sin \frac{\omega}{2} \ZFdmJtwX{0}{1}
-\WHYFCIFc \sin \frac{\omega}{2} \ZFdmJtwX{1}{0}
+ \cos \frac{\omega}{2} \GyFczGjW{1} \\
\sYcexCeY{z}{\omega} &\faJHPjTb \GyFczGjW{0} + e^{\WHYFCIFc \omega} \GyFczGjW{1}
\end{align*}
\end{defi}
Whenever we talk about quantum polynomial-time algorithms
or quantum verifiers, we always mean polynomial-time uniformly
generated quantum circuits consisting of some
universal set of gates.
For example, the gates \gctqElrH, \lDGygnkY, and \sYcexCeY{z}{\frac{\pi}{4}}
form a universal set, and there are many other sets
that are universal.
Usually it doesn't matter which set we choose when we
define quantum verifiers and classes like \SHvwttaw{} and \JWevirQe,
because it is known that each universal set can approximate
any other set with exponential precision.
However, in the following we will use error parameters
that are smaller then this, hence the power of those
classes may change with respect to what set of gates
the verifier is using.
In this paper we only assume that the verifier
can perform or perfectly simulate the \gctqElrH{} and
the \lDGygnkY{} gate, besides being able to perform any
polynomial-time classical computation.
This assumption is enough for all our results, so
we won't bother about the gate set in the rest of the paper.
\begin{defi}[\cite{Kobayashi2003,Aaronson2009}]
\label{TJOqWPoY}
For functions \FyOFSCmf{\ell, k}{\CYLQkbAO}{\VrXQeMta},
\FyOFSCmf{c, s}{\CYLQkbAO}{ \left( 0, 1 \right] }
a language $L$ is in \KOzxJkVZ{\ell}{k}{c}{s} if there
exist a quantum verifier \wQoFYwHz{V} \ygMPbVVP{} for all
$ n \in \CYLQkbAO $ and inputs $ x \in \GGfSpoIn{0,1}^n $,
\HxLQfJAj{\wQoFYwHz{V}}{x} is a quantum circuit generated in
\wrDoIJTw{n}-time and the following holds.
\begin{description}
\item[Completeness:]
If $x \in L$ then there exist quantum proofs
\raIfJZxu{\phi_1}, \ldots, \raIfJZxu{\phi_{\HxLQfJAj{k}{n}}},
where for all $i$, \raIfJZxu{\phi_i} is made up of
at most \HxLQfJAj{\ell}{n} qubits, and the acceptance
probability of \HxLQfJAj{\wQoFYwHz{V}}{x} on input
$ \raIfJZxu{\phi_1} \NUxdODWZ \cdots \NUxdODWZ
\raIfJZxu{\phi_{\HxLQfJAj{k}{n}}} $ is $ \geq \HxLQfJAj{c}{n} $.
\item[Soundness:]
If $x \notin L$ then for all states
\raIfJZxu{\phi_1}, \ldots, \raIfJZxu{\phi_{\HxLQfJAj{k}{n}}},
where for all $i$, \raIfJZxu{\phi_i} is made up of
at most \HxLQfJAj{\ell}{n} qubits, \HxLQfJAj{\wQoFYwHz{V}}{x}
accepts with probability $ \leq s $,
given $ \raIfJZxu{\phi_1} \NUxdODWZ \cdots \NUxdODWZ
\raIfJZxu{\phi_{\HxLQfJAj{k}{n}}} $ as its input.
\end{description}
If the class is denoted by \KOzxJkVZ{\ell}{k}{c}{<s}, then
the probability bound in the soundness case is
$ < s $ instead of $ \leq s $.
\end{defi}
\begin{rem}
If we just give one parameter to \SHvwttaw, then
it indicates the number of provers.
So the notation \lrcGeBqo{k} is defined as
$ \lrcGeBqo{k} \faJHPjTb \KOzxJkVZ{\wrDoIJTw{n}} {k} {\frac{2}{3}} {\frac{1}{3}} $.
With our notation the class \SHvwttaw{} is defined as
$ \SHvwttaw \faJHPjTb \lrcGeBqo{1} $.
\end{rem}
\begin{defi}[\cite{Brandao2008,Aaronson2009}]
The class \fiWpgtYc{\ell}{k}{c}{s} is defined almost the same
way as \KOzxJkVZ{\ell}{k}{c}{s} in \MOVPSHbF{TJOqWPoY},
except that the verifier \wQoFYwHz{V} is not an
arbitrary \ipeoICuS-time quantum computation.
The restriction we put on the verifier is the following.
\wQoFYwHz{V} upon seeing $x$ performs a classical
randomized \ipeoICuS-time computation and produces
circuits for measurements $M_1$, \ldots, $M_{\HxLQfJAj{k}{n}}$,
where each $M_i$ is a POVM on at most \HxLQfJAj{\ell}{n} qubits.
Then for all $i$, the verifier measures
\raIfJZxu{\phi_i} with $M_i$, and obtains outcome $m_i$.
After all measurements were performed, \wQoFYwHz{V}
runs a classical computation on inputs
$m_1$, \ldots, $m_{\HxLQfJAj{k}{n}}$, and
decides whether to accept or reject.
\end{defi}
Note that in the above definition the verifier
has to measure each proofs separately.
Moreover, non of the circuits of the measurements can depend
on the outcome of any previous measurement.
Chen and Drucker \cite{Chen2010} defined \cdToKcBp{}
in a slightly different way by allowing the verifier
to do quantum computations before and after
the measurements.
Our result also holds if we take their definition.
The reason we chose the above definition is
because we will prove a lower-bound for our
\cdToKcBp{} class, and so with the more restricted
definition our result is slightly stronger.
\begin{defi}
The class \vtZmVJmM{c}{s} is defined analogously to
\SHvwttaw{} of \MOVPSHbF{TJOqWPoY} with the difference
that the proof must be a classical string.
We will always take this string to be polynomial-length.
Since the string is classical, it doesn't make sense
to have several proofs, so we drop the parameters of
proof lengths and number of proofs.
As in the previous definitions, $c$ is the completeness
and $s$ is the soundness parameter.
\end{defi}
\begin{defi}[\cite{Galperin1983}]
Let \HxLQfJAj{G}{V,E} be an undirected graph where
$V = \HelzWRDf{m}$ and $ m \leq 2^n $ for some $n$.
We define \oVrArWkh{} to be a \emph{small circuit representation}
of $G$ if the following hold:
\begin{itemize}
\item \oVrArWkh{} is a circuit containing AND, OR and NOT gates.
\item \oVrArWkh{} has two inputs of $n$ bits each.
\item \oVrArWkh{} has \wrDoIJTw{n} gates.
\item The output of \oVrArWkh{} is given by
\[
\HxLQfJAj{\oVrArWkh}{u,v} =
\begin{cases}
00 & \text{if } u \notin V \text{ or } v \notin V
\text{ or } u \geq v \text{,} \\
10 & \text{if } u < v \text{ and } \DnoYoBfx{u, v} \notin E \text{,} \\
11 & \text{if } u < v \text{ and } \DnoYoBfx{u, v} \in E \text{.}
\end{cases}
\]
\end{itemize}
\end{defi}
\begin{defi}
Let the decision problem \HNGSAaAW{} be the set of small circuit representations
of graphs that are 3-colorable.
\end{defi}
\begin{thm}[\cite{Papadimitriou1986}]
\HNGSAaAW{} is \vxJDyEiW{}-complete.
\end{thm}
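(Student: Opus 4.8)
The plan is to prove the two directions separately.

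\emph{Membership in \vxJDyEiW{}} is routine. Given a small circuit representation \oVrArWkh{} of a graph $G$ on vertex set $\HelzWRDf{m}$ with $m \leq 2^n$, a nondeterministic machine first guesses a colouring $\chi \colon \HelzWRDf{m} \to \GGfSpoIn{1, 2, 3}$, which takes $\QUPOtzml{m}$, hence $2^{\QUPOtzml{n}}$, bits. It then verifies the guess by looping over all pairs $\DnoYoBfx{u, v}$ with $u, v \in \HelzWRDf{2^n}$: it evaluates \HxLQfJAj{\oVrArWkh}{u, v}, which takes $\wrDoIJTw{n}$ time since \oVrArWkh{} has $\wrDoIJTw{n}$ gates, and rejects if $\DnoYoBfx{u, v} \in E$ but $\chi(u) = \chi(v)$. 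There are at most $2^{2n}$ pairs, so the whole computation runs in time $2^{\QUPOtzml{n}}$, which is exponential in the length of the input circuit. Hence $\HNGSAaAW \in \vxJDyEiW$.

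For \emph{\vxJDyEiW-hardness} I would give a \wrDoIJTw{n}-time many-one reduction from an arbitrary $L \in \vxJDyEiW$ to \HNGSAaAW, following the succinct analogue of the classical chain ``$L$ reduces to \hpRexCjN{} reduces to graph $3$-colouring''. Fix a nondeterministic machine deciding $L$ in time $2^{q(n)}$ for a polynomial $q$. Applying the Cook--Levin tableau construction at the exponential scale, $x \in L$ with $|x| = n$ is equivalent to the satisfiability of a $3$CNF formula $\phi_x$ on $2^{\wrDoIJTw{n}}$ variables. The essential point is \emph{uniformity}: each clause of $\phi_x$ encodes a local tableau constraint that depends only on the transition function of the machine, on \wrDoIJTw{n} bits of the clause index, and on the \wrDoIJTw{n} bits of $x$ (plus finitely many boundary exceptions). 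Consequently there is a circuit $D_x$ of size \wrDoIJTw{n}, computable from $x$ in \wrDoIJTw{n} time, that on input a clause index outputs the corresponding clause of $\phi_x$; that is, $D_x$ is a succinct representation of $\phi_x$. It then remains to convert $D_x$ into a small circuit representing a graph. I take the textbook reduction from $3$-colourability: a ``palette'' triangle, one gadget per variable wired to the palette, one OR-gadget per clause, and the connecting edges. This reduction is \emph{local} in exactly the sense needed: the vertices of the resulting graph $G_x$ are indexed by pairs consisting of a bounded-length type and a variable or clause index, so they are named by \wrDoIJTw{n}-bit strings, and whether two such vertices are adjacent is decided by their types together with a constant number of bits read off from $D_x$ (which variables occur in the relevant clause, and with which signs). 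Hence from $D_x$ one builds in \wrDoIJTw{n} time a circuit with two \wrDoIJTw{n}-bit inputs and \wrDoIJTw{n} gates computing adjacency in $G_x$; wrapping it so that it compares its two inputs and emits the codes $00$, $10$, $11$ as required (treating out-of-range indices as not being vertices) puts it in the exact form demanded of a small circuit representation \oVrArWkh. Since $G_x$ is $3$-colourable \iZJgfzde{} $\phi_x$ is satisfiable \iZJgfzde{} $x \in L$, the map $x \mapsto \oVrArWkh$ is the desired reduction.

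The main obstacle is not any individual construction but the bookkeeping that makes the two classical reductions go through \emph{succinctly}: one must verify that every clause of $\phi_x$ and every edge of $G_x$ is produced by a \emph{fixed} polynomial-size circuit whose only input is the polynomially long index, with all dependence on the exponential-size objects confined to that index. This is precisely the phenomenon isolated by Galperin and Wigderson \cite{Galperin1983} and by Papadimitriou and Yannakakis \cite{Papadimitriou1986} --- that ``local'' or projection reductions commute with the succinct-representation operator --- and in a fully detailed write-up I would invoke that framework rather than re-derive it by hand. Minor care is also needed with side points: the number of vertices of $G_x$ need not be a power of two (handled by the ``$u \notin V$'' cases in the definition of \oVrArWkh), and the index length grows from $n$ to \wrDoIJTw{n} but stays polynomial, so the reduction remains polynomial-time.
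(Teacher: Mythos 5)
The paper does not prove this statement at all --- it is imported as a known background fact with a citation to Papadimitriou and Yannakakis, so there is no in-paper proof to compare against. Your sketch (guess-and-check for membership; succinct Cook--Levin to a succinctly represented 3CNF, followed by the local/projection reduction from $3$-satisfiability to $3$-colourability lifted to the succinct setting) is exactly the standard argument from that cited line of work, and it is correct in outline, with the remaining work being the uniformity bookkeeping you explicitly identify and defer to \cite{Galperin1983,Papadimitriou1986}.
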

In the following discussion we will use the
`SWAP-test' of \cite{Barenco1997,Buhrman2001},
and the following property of this test.
\begin{thm}[\cite{Buhrman2001}]
When the SWAP-test is applied to two states
\raIfJZxu{\varphi} and \raIfJZxu{\psi} (with the same dimension),
it accepts with probability
$ \frac{1}{2} \DnoYoBfx{1 + \AfBMaxZI{\NChmtXsR{\varphi}{\psi}}^2} $.
\label{QzwgCixS}
\end{thm}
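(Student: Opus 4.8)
The plan is to run the SWAP-test circuit explicitly on the input $\raIfJZxu{\varphi} \NUxdODWZ \raIfJZxu{\psi}$, track the joint state through its three stages, and read off the probability that measuring the control qubit yields the accepting outcome. Since the statement is a standard fact, this will be a short direct calculation rather than anything requiring a clever idea.

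Concretely, I would recall that the SWAP-test adjoins a control qubit in state $\raIfJZxu{0}$, applies \lDGygnkY{} to it, then a controlled-swap (Fredkin gate) that exchanges the two registers conditioned on the control being $\raIfJZxu{1}$, then \lDGygnkY{} to the control again, and finally measures the control in the computational basis, accepting on outcome $0$. Propagating the state: after the first Hadamard it is $\frac{1}{\sqrt{2}} \DnoYoBfx{\raIfJZxu{0} + \raIfJZxu{1}} \NUxdODWZ \raIfJZxu{\varphi} \NUxdODWZ \raIfJZxu{\psi}$; after the controlled-swap it is $\frac{1}{\sqrt{2}} \DnoYoBfx{ \raIfJZxu{0}\raIfJZxu{\varphi}\raIfJZxu{\psi} + \raIfJZxu{1}\raIfJZxu{\psi}\raIfJZxu{\varphi} }$; and after the second Hadamard it is
\[
\frac{1}{2}\, \raIfJZxu{0} \NUxdODWZ \DnoYoBfx{ \raIfJZxu{\varphi}\raIfJZxu{\psi} + \raIfJZxu{\psi}\raIfJZxu{\varphi} }
+ \frac{1}{2}\, \raIfJZxu{1} \NUxdODWZ \DnoYoBfx{ \raIfJZxu{\varphi}\raIfJZxu{\psi} - \raIfJZxu{\psi}\raIfJZxu{\varphi} } .
\]
The acceptance probability is then the squared norm of the $\raIfJZxu{0}$-branch, namely $\frac{1}{4} \GoyqnUbk{ \raIfJZxu{\varphi}\raIfJZxu{\psi} + \raIfJZxu{\psi}\raIfJZxu{\varphi} }^2$. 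Expanding this inner product and using $\CXtpZUsO{\varphi} = \CXtpZUsO{\psi} = 1$ together with $\NChmtXsR{\varphi}{\psi} \NChmtXsR{\psi}{\varphi} = \AfBMaxZI{\NChmtXsR{\varphi}{\psi}}^2$, the four terms give $\frac{1}{4} \DnoYoBfx{ 1 + \AfBMaxZI{\NChmtXsR{\varphi}{\psi}}^2 + \AfBMaxZI{\NChmtXsR{\varphi}{\psi}}^2 + 1 } = \frac{1}{2} \DnoYoBfx{ 1 + \AfBMaxZI{\NChmtXsR{\varphi}{\psi}}^2 }$, as claimed.

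I do not expect any real obstacle here: the only things to watch are the tensor-factor ordering when applying the controlled-swap and carrying the normalization constants $1/\sqrt{2}$ and $1/2$ correctly through the two Hadamards. As an alternative one could argue at the level of density operators, writing the two-register input as $\GyFczGjW{\varphi} \NUxdODWZ \GyFczGjW{\psi}$ and using that the accepting projector equals the symmetric-subspace projector $\frac{1}{2} \DnoYoBfx{ \YMElKjII + \mathrm{SWAP} }$, so that the acceptance probability is $\pTaTOfgX\BKbTllLo{ \frac{1}{2} \DnoYoBfx{ \YMElKjII + \mathrm{SWAP} } \DnoYoBfx{ \GyFczGjW{\varphi} \NUxdODWZ \GyFczGjW{\psi} } }$ and the swap term contributes $\pTaTOfgX\BKbTllLo{ \GyFczGjW{\varphi} \GyFczGjW{\psi} } = \AfBMaxZI{\NChmtXsR{\varphi}{\psi}}^2$; but the direct circuit computation above is the more elementary route and the one I would write up.
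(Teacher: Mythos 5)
Your derivation is correct: the state propagation through the Hadamard--controlled-SWAP--Hadamard circuit is tracked accurately, and the expansion of $\frac{1}{4}\GoyqnUbk{\raIfJZxu{\varphi}\raIfJZxu{\psi} + \raIfJZxu{\psi}\raIfJZxu{\varphi}}^2$ into $\frac{1}{2}\DnoYoBfx{1 + \AfBMaxZI{\NChmtXsR{\varphi}{\psi}}^2}$ is right. There is nothing to compare against here, since the paper states this theorem as a cited background fact from the literature and gives no proof of its own; your calculation is the standard one and would serve as a self-contained justification.
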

Note that in order to perform the SWAP-test we need
two Hadamard (\lDGygnkY) gates, \gctqElrH{} gates with the
amount linear in the number of qubits
\raIfJZxu{\varphi} and \raIfJZxu{\psi} are stored on,
and we need to measure a qubit in the standard basis.
\begin{defi}
We define the state \raIfJZxu{u_m} as the uniform superposition
of the standard basis states, \WpedDmmP{}
\[ \raIfJZxu{u_m} \faJHPjTb \frac{1}{\sqrt{m}} \sum_{i=0}^{m-1} \raIfJZxu{i} . \]
We also define the projective measurement
that projects onto this state, or more formally
the measurement
$ \GGfSpoIn{ \nVzyhtCm{P}_0, \nVzyhtCm{P}_1 } $ where
\[ \nVzyhtCm{P}_0 \faJHPjTb \GyFczGjW{u_m} \text{,} \qquad
\nVzyhtCm{P}_1 \faJHPjTb \YMElKjII - \GyFczGjW{u_m} \text{.} \]
We say that $\nVzyhtCm{P}_0$ (and $\nVzyhtCm{P}_1$)
corresponds to outcome $0$ (and $1$).
\label{NlcnHZqK}
\end{defi}
Note that the above measurement can be performed
using \NpwuHusD{\log m} Hadamard gates and single-qubit
measurements.
\section{\texorpdfstring{\lrcGeBqo{k} with Small Gap Equals \vxJDyEiW}
{QMA[k] with Small Gap Equals NEXP}}
\label{jvxzezwT}
This section proves \MJqbMYrm{NmPOZGKI}, \WpedDmmP{}
we show that \lrcGeBqo{k} equals \vxJDyEiW{} if $k$ is at least
$2$ and at most \wrDoIJTw{n}, and the completeness-soundness
gap is bounded away by an inverse-exponential or doubly exponential
function of $n$.
\par
The proof of this theorem is divided into \dWQdKYxc{ULcFIvTl}
and \MJqbMYrm{QoOYZYVy},
according to the two directions of the containment.
As mentioned in the Introduction,
only \MJqbMYrm{QoOYZYVy} is actually new.
\begin{lem}
$ \displaystyle \bigcup_{\substack{0 < s < c \leq 1, \\
c-s \geq 2^{ -2^{\ipeoICuS} }}}
\KOzxJkVZ{\ipeoICuS}{\ipeoICuS}{c}{s}
\subseteq \vxJDyEiW $,
where \HxLQfJAj{c}{n} and \HxLQfJAj{s}{n} can be calculated in time
at most exponential in $n$ on a classical computer.
\label{ULcFIvTl}
\end{lem}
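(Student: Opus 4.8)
The plan is to give a standard NEXP-simulation of a multi-prover QMA verifier, being careful that the exponentially small gap is still large enough to certify. First I would recall the trivial upper bound argument: a nondeterministic exponential-time machine can, on input $x$ of length $n$, guess the classical description of the $k = \wrDoIJTw{n}$ unentangled proofs $\raIfJZxu{\phi_1}, \dotsc, \raIfJZxu{\phi_k}$, each on at most \wrDoIJTw{n} qubits, and hence each describable by $2^{\wrDoIJTw{n}}$ complex amplitudes written to $2^{\wrDoIJTw{n}}$ bits of precision. The guessed amplitudes need only approximate a genuine optimal proof; since the acceptance probability of \HxLQfJAj{\wQoFYwHz{V}}{x} is a polynomial (of exponential degree) in the amplitudes with bounded coefficients, a perturbation of size $2^{-2^{\wrDoIJTw{n}}}$ in each amplitude changes the acceptance probability by at most, say, $2^{-2^{\wrDoIJTw{n}}}/4$, which is much smaller than the gap $c(n)-s(n) \geq 2^{-2^{\ipeoICuS}}$ once the precision is chosen a couple of exponents finer.

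Next I would describe the deterministic exponential-time computation that the NEXP machine performs after guessing: form the tensor product $\raIfJZxu{\phi_1} \NUxdODWZ \cdots \NUxdODWZ \raIfJZxu{\phi_k}$ as a vector of $2^{\wrDoIJTw{n}}$ entries, write out the \wrDoIJTw{n}-time-generated circuit \HxLQfJAj{\wQoFYwHz{V}}{x} as a product of $2^{\wrDoIJTw{n}}$-dimensional unitaries (each gate acting on a constant number of qubits, so each such matrix is sparse and explicitly computable), apply them one by one to the input vector, and finally compute the acceptance probability by summing the squared magnitudes of the accepting-branch amplitudes. All of this is a fixed polynomial number of matrix-vector products on objects of size $2^{\wrDoIJTw{n}}$, hence runs in time $2^{\wrDoIJTw{n}}$; and since $c$ and $s$ are computable in exponential time by hypothesis, the machine can compare the result against the midpoint $\tfrac{1}{2}(c(n)+s(n))$ and accept iff it lies on the completeness side. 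In the yes-case some exact optimal proof achieves $\geq c(n)$, so a sufficiently fine rational approximation is guessed for which the computed value exceeds the midpoint; in the no-case every product state — in particular every one the machine might guess — yields value $\leq s(n)$, below the midpoint, so no accepting guess exists.

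The main obstacle, and the only point that needs genuine care, is the interplay between the precision of the guessed amplitudes and the size of the gap: one must verify that writing down amplitudes to $\wrDoIJTw{2^{\wrDoIJTw{n}}}$ bits — still only $2^{\wrDoIJTw{n}}$ bits total — suffices to preserve a doubly-exponentially small gap, and that normalization of the guessed proofs (they need only be guessed up to the same precision and then treated as approximately unit vectors) does not spoil this. This is a routine perturbation estimate: the acceptance probability is $1$-Lipschitz in the trace-norm of the global input state, and the global state built from $k$ near-unit approximate proofs is within $\wrDoIJTw{n}\cdot 2^{-p}$ in trace norm of a genuine product state when each amplitude is accurate to $2^{-p}$ on \wrDoIJTw{n} qubits, so choosing $p = 2^{\wrDoIJTw{n}}$ with a slightly larger polynomial in the exponent than the one bounding the gap closes the argument. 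Everything else is bookkeeping about uniform circuit generation and exponential-time arithmetic, so I would keep that part brief.
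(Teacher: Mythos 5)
Your proposal is correct and follows essentially the same route as the paper, which gives only a short sketch: guess exponential-length classical descriptions of the \ipeoICuS-many proofs to exponentially many bits of precision, simulate the verifier's circuit on the resulting exponential-dimensional vector in deterministic exponential time, and compare the computed acceptance probability against the (exponential-time-computable) thresholds, noting that the precision can be taken fine enough to resolve a doubly-exponentially small gap. Your write-up simply supplies the perturbation bookkeeping that the paper leaves implicit.
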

\begin{proof}[Proof sketch]
Let $ L \in \KOzxJkVZ{\ipeoICuS}{\ipeoICuS}{c}{s} $ with some
$c$ and $s$ satisfying the conditions in the lemma.
The proofs in the \SHvwttaw{} proof system are \ipeoICuS-many quantum states on
\ipeoICuS-many qubits, which are vectors in the complex euclidean
space with exponential dimension.
These vectors can be described up to an exponential number of
bits of accuracy by a classical proof of exponential length.
Given this proof to a \vxJDyEiW{} machine, it can calculate the
acceptance probability of the \SHvwttaw{} verifier to an exponential
number of bits of accuracy;
and it can decide whether this probability is more
then $c$ or less then $s$.
This means that $ L \in \vxJDyEiW $.
\end{proof}
The other direction of the containment is formulated
by the following theorem.
\begin{thm}
$ \displaystyle \vxJDyEiW \subseteq
\KOzxJkVZ{\ipeoICuS}{2}{1}{1 - \uaivIgOh{4^{-n}}} $.
\label{QoOYZYVy}
\end{thm}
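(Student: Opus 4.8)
The plan is to give a two-prover $\SHvwttaw$ protocol with logarithmic-length proofs and perfect completeness for the $\vxJDyEiW$-complete language $\HNGSAaAW$, following the Blier--Tapp construction \cite{Blier2009}. Let $\oVrArWkh$ be a small circuit representation of a graph $G = \HxLQfJAj{G}{V,E}$ on vertex set $V \subseteq \HelzWRDf{2^n}$. When $G$ is $3$-colorable, fix a proper coloring $\chi : V \to \GGfSpoIn{0,1,2}$. The honest provers each send the $\FIqGvTZG{\AfBMaxZI{V}} = \QUPOtzml{n}$-qubit state $\raIfJZxu{\psi_\chi} \faJHPjTb \frac{1}{\sqrt{\AfBMaxZI{V}}} \sum_{v \in V} \raIfJZxu{v} \NUxdODWZ \raIfJZxu{\chi(v)}$, a uniform superposition over vertices tagged with their color. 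The verifier's job is to check, by local tests on the two copies, that (a) each proof really is of this form (uniform over a set of vertices, with a single color register attached to each), and (b) the encoded coloring is proper, \WpedDmmP{} no edge is monochromatic.

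First I would have the verifier flip a coin and, with constant probability each, run one of several subtests. (i) \emph{Consistency / SWAP test:} apply the SWAP-test (\MJqbMYrm{QzwgCixS}) to the two proofs; if they are identical pure states it accepts with certainty, and any deviation costs $1 - \AfBMaxZI{\NChmtXsR{\varphi}{\psi}}^2$ in acceptance probability. (ii) \emph{Uniformity test:} measure the vertex register of one proof with the projective measurement $\GGfSpoIn{\nVzyhtCm{P}_0, \nVzyhtCm{P}_1}$ of \MOVPSHbF{NlcnHZqK} against $\raIfJZxu{u_{\AfBMaxZI{V}}}$ — but since $V$ may be a strict, circuit-defined subset of $\HelzWRDf{2^n}$, I would instead test that the vertex register is supported on $V$ (using $\oVrArWkh$ to recognize non-vertices) and is ``spread out,'' combined with the SWAP-test to pin the state down; the standard Blier--Tapp analysis shows a cheating prover who concentrates amplitude on few vertices is caught. (iii) \emph{Edge / properness test:} sample — or better, have the proof's own superposition supply — a pair $(u,v)$, evaluate $\HxLQfJAj{\oVrArWkh}{u,v}$ to learn whether $(u,v) \in E$, and if so compare the two color registers (measure them in the standard basis after using one proof for $u$ and the other for $v$, exploiting that both proofs should be the \emph{same} state $\raIfJZxu{\psi_\chi}$), rejecting iff an edge is monochromatic. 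Perfect completeness is immediate: on $\raIfJZxu{\psi_\chi} \NUxdODWZ \raIfJZxu{\psi_\chi}$ every subtest accepts with probability $1$ (the SWAP-test because the states are equal; the edge test because $\chi$ is proper).

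The crux is soundness: if $G$ is \emph{not} $3$-colorable, I must show every (possibly entangled — though here the provers are unentangled, so I only quantify over product states, then invoke the reduction to unentangled analysis à la Blier--Tapp) pair of proofs is rejected with probability $\uaivIgOh{4^{-n}}$. The argument: if the consistency and uniformity tests pass with probability $1 - \varepsilon$ for small $\varepsilon$, then (by the SWAP-test bound and the structure tests) both proofs are $\QUPOtzml{\sqrt{\varepsilon}}$-close to a common state of the form $\frac{1}{\sqrt{\AfBMaxZI{V}}}\sum_{v\in V}\raIfJZxu{v}\raIfJZxu{c_v}$ for some assignment $v \mapsto c_v \in \GGfSpoIn{0,1,2}$; since this is not a proper coloring, some edge $(u,v) \in E$ has $c_u = c_v$, and the edge test detects it with probability at least $\frac{1}{\AfBMaxZI{V}} \cdot \frac{1}{\ipeoICuS(n)} \geq \uaivIgOh{2^{-n}}$ (one inverse-exponential factor from sampling the bad edge out of $\leq 2^{2n}$ pairs, modulated by the constant probability of choosing this subtest). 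Balancing this against the $\varepsilon$ from the structure tests — one wants the rejection probability to be $\uaivIgOh{\varepsilon}$ in one regime and $\uaivIgOh{4^{-n}}$ in the other — yields a gap of $\uaivIgOh{4^{-n}}$ overall.

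The main obstacle I anticipate is exactly this soundness book-keeping: quantitatively propagating the closeness-to-ideal-form through three interacting tests while keeping every loss polynomially or singly-exponentially controlled, and in particular handling that $V$ is only \emph{implicitly} given by the circuit $\oVrArWkh$ (so the verifier cannot prepare $\raIfJZxu{u_{\AfBMaxZI{V}}}$ directly and must certify support on $V$ indirectly). A secondary point is confirming that the verifier needs only \gctqElrH{} and \lDGygnkY{} gates plus classical $\ipeoICuS$-time computation: the SWAP-test uses $\lDGygnkY$ and $\gctqElrH$, the measurement of \MOVPSHbF{NlcnHZqK} uses $\lDGygnkY$, evaluating $\oVrArWkh$ is a classical poly-size circuit, and standard-basis comparisons are free — so the gate-set caveat of \HVRLKXNY{EriaXXKj} is satisfied. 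Finally, to reach the full statement $\vxJDyEiW \subseteq \KOzxJkVZ{\ipeoICuS}{2}{1}{1 - \uaivIgOh{4^{-n}}}$ I would, as usual, note the $\vxJDyEiW$-machine's verifier is a $\ipeoICuS$-size circuit so the proof length $\QUPOtzml{n}$ is well within the $\ipeoICuS$ bound, completing the reduction.
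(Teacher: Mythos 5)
Your overall strategy is the same as the paper's: run the Blier--Tapp protocol on the \vxJDyEiW-complete language \HNGSAaAW{} with an equality (SWAP) test, a uniformity test, and a consistency/edge test, obtaining perfect completeness and an \uaivIgOh{4^{-n}} soundness gap. The genuine gap is in your uniformity test. You define the honest proof as a uniform superposition over the vertex set $V$ only, and must then certify that a given proof is close to uniform over $V$ --- a set that is only implicitly defined by the circuit \oVrArWkh. You flag this yourself as the ``main obstacle,'' but your proposed fix (``using \oVrArWkh{} to recognize non-vertices'') does not obviously work: \oVrArWkh{} is a two-input edge oracle that outputs $00$ whenever $u \geq v$ \emph{or} either endpoint is outside $V$, so membership of a single $u$ in $V$ is not directly computable from it in polynomial time, and the verifier can neither prepare $\raIfJZxu{u_{\AfBMaxZI{V}}}$ nor project onto it.

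The paper sidesteps this entirely by a different choice of honest proof: the uniform superposition over \emph{all} of \HelzWRDf{2^n}, with non-vertices assigned a dummy color (equation~\eqref{BGuJsFqG}). The uniformity test is then simply the projection onto $\raIfJZxu{u_{2^n}}$ of \MOVPSHbF{NlcnHZqK}, implementable with Hadamard gates alone, and the consistency test never falsely rejects on non-vertices because $\HxLQfJAj{\oVrArWkh}{u,v} = 11$ only for genuine edges. If you adopt this honest state, your outline goes through and essentially coincides with \FxucuSoM{jEVcePOJ} and its analysis in \IhApwfvN{WEFVXHAK}. One smaller quantitative slip: the probability that the two standard-basis measurements jointly produce the two endpoints of a fixed monochromatic edge is the \emph{product} of two probabilities, each \uaivIgOh{2^{-n}}, hence \uaivIgOh{4^{-n}}; your intermediate bound of \uaivIgOh{2^{-n}} for the edge test is not what the argument actually yields, although your final gap of \uaivIgOh{4^{-n}} is of the right order.
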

We will prove this result through several lemmas.
This will prove \MJqbMYrm{NmPOZGKI}
as well.
\begin{proof}[Proof of \MJqbMYrm{NmPOZGKI}]
The theorem immediately follows from \dWQdKYxc{ULcFIvTl},
\MJqbMYrm{QoOYZYVy} and the observation that
$ \KOzxJkVZ{\ipeoICuS}{2}{1}{1 - \uaivIgOh{4^{-n}}} \subseteq
\bigcup_{\substack{0 < s < c \leq 1, \\
c-s \geq 2^{ -2^{\ipeoICuS} }}}
\KOzxJkVZ{\ipeoICuS}{\ipeoICuS}{c}{s} $.
\end{proof}
In order to prove \MJqbMYrm{QoOYZYVy} we
construct a \SHvwttaw{} verifier \wQoFYwHz{V} for
the \vxJDyEiW-complete language \HNGSAaAW.
We use the protocol of Blier and Tapp \cite{Blier2009},
with similar argument as theirs.
\par
Let the input to \HNGSAaAW{} be denoted by \oVrArWkh{} and it's
length by $ n = \AfBMaxZI{\oVrArWkh} $.
Let \wQoFYwHz{V} get its two unentangled proofs in registers
$ \VSdxZJNO{R}_1 $ and $ \VSdxZJNO{R}_2 $.
Both $ \VSdxZJNO{R}_i $'s have two parts,
$ \VSdxZJNO{R}_i = \VSdxZJNO{N}_i \VSdxZJNO{C}_i $,
where $\VSdxZJNO{N}_i$ is the `node' part and
$\VSdxZJNO{C}_i$ is the `color' part.
$\VSdxZJNO{N}_1$ and $\VSdxZJNO{N}_2$ have associated Hilbert space
$ \VWqEPttf^{2^n} $, while $\VSdxZJNO{C}_1$ and $\VSdxZJNO{C}_2$
have associated space $ \VWqEPttf^3 $.
The procedure \wQoFYwHz{V} performs is described in
\FxucuSoM{jEVcePOJ}.
\par
\begin{algorithm}[h t b]
\caption{Description of verifier \wQoFYwHz{V} in the proof of
\MJqbMYrm{QoOYZYVy}.}
\label{jEVcePOJ}
\begin{algorithmic}[1]
\ZNbYlnmP classical circuit \oVrArWkh, quantum registers
$\VSdxZJNO{R}_1$ and $\VSdxZJNO{R}_2$,
where the state of $\VSdxZJNO{R}_1$ and
$\VSdxZJNO{R}_2$ is separable
\tkOoHzex accept or reject
\STATE With probability $1/3$ do
the \textbf{Equality Test} (line~\ref{yeMuKErF}),
the \textbf{Consistency Test} (line~\ref{YKlinzMJ}), or
the \textbf{Uniformity Test} (line~\ref{DOkcHsyK}).
\STATE \textbf{Equality Test.} Perform the SWAP-test on
$ \VSdxZJNO{R}_1 $ and $ \VSdxZJNO{R}_2 $.
\label{yeMuKErF}
\IF{the SWAP-test fails}
\RETURN reject
\COMMENT{The two registers are not equal.}
\ELSE
\RETURN accept
\ENDIF
\STATE \textbf{Consistency Test.}
\label{YKlinzMJ}
Measure $ \VSdxZJNO{N}_1 $, $ \VSdxZJNO{C}_1 $,
$\VSdxZJNO{N}_2 $ and $ \VSdxZJNO{C}_2 $
in the computational basis and get
$v_1$, $c_1$, $v_2$ and $c_2$.
\IF{($ v_1 = v_2 $) \AND ($ c_1 \neq c_2 $)} \label{DkvnMEbW}
\RETURN reject
\COMMENT{The same vertex has two colors.}
\ELSIF{($ \HxLQfJAj{\wQoFYwHz{C}}{v_1, v_2} = 11 $)
\COMMENT{Assume that $ v_1 < v_2 $ otherwise swap them.}
\AND ($ c_1 = c_2 $)}
\RETURN reject
\COMMENT{Adjacent vertices have same color.}
\ELSE
\RETURN accept
\ENDIF
\STATE \textbf{Uniformity Test.}
Measure $ \VSdxZJNO{N}_1 $ and $ \VSdxZJNO{C}_1 $ separately
according to the measurement of
\MOVPSHbF{NlcnHZqK}.
\label{DOkcHsyK}
\IF{(the outcome on $ \VSdxZJNO{C}_1 $ is $0$)
\AND (the outcome on $ \VSdxZJNO{N}_1 $ is $1$)}
\label{GTSlDeLX}
\RETURN reject
\COMMENT{Not all nodes are present.}
\ELSE
\RETURN accept
\ENDIF
\end{algorithmic}
\end{algorithm}
Note that \wQoFYwHz{V} runs in \wrDoIJTw{n}-time,
because the SWAP-test, evaluating the circuit \oVrArWkh,
and performing the measurement of
\MOVPSHbF{NlcnHZqK} for
$ m=2^n $ all can be performed in polynomial time.
The following lemma,
which is essentially the same as Theorem~2.4 of \cite{Blier2009},
proves completeness for \wQoFYwHz{V}.
\begin{lem}[Completeness]
If $ \oVrArWkh \in \HNGSAaAW $ then there exist a pair of proofs,
with which \wQoFYwHz{V} will
accept with probability $1$.
\label{ueMCbSRk}
\end{lem}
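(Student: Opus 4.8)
The plan is to exhibit a single honest pair of proofs and verify, test by test, that \wQoFYwHz{V} accepts them with certainty. Fix a proper $3$-colouring of $G$ and extend it to a map $\FyOFSCmf{\chi}{\HelzWRDf{2^n}}{\GGfSpoIn{0,1,2}}$ by assigning, say, colour $0$ to every $n$-bit string that is not a vertex of $G$. The proof I would place in \emph{both} of the registers $\VSdxZJNO{R}_1$ and $\VSdxZJNO{R}_2$ is the uniform superposition
\[
\raIfJZxu{\phi} \faJHPjTb \frac{1}{\sqrt{2^n}} \sum_{v = 0}^{2^n - 1} \raIfJZxu{v}_{\VSdxZJNO{N}} \raIfJZxu{\chi(v)}_{\VSdxZJNO{C}} \, ,
\]
so that \wQoFYwHz{V} receives the (separable) product state $\raIfJZxu{\phi}_{\VSdxZJNO{R}_1} \NUxdODWZ \raIfJZxu{\phi}_{\VSdxZJNO{R}_2}$.

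The \textbf{Equality Test} is immediate from \MJqbMYrm{QzwgCixS}: the SWAP-test on two copies of \raIfJZxu{\phi} accepts with probability $\frac{1}{2}\DnoYoBfx{1 + \AfBMaxZI{\CXtpZUsO{\phi}}^2} = 1$. For the \textbf{Consistency Test}, measuring the four registers of $\raIfJZxu{\phi} \NUxdODWZ \raIfJZxu{\phi}$ in the computational basis yields pairs $\DnoYoBfx{v_1, c_1}$ and $\DnoYoBfx{v_2, c_2}$ that are independent, with each $v_i$ uniform on $\HelzWRDf{2^n}$ and $c_i = \chi(v_i)$ a function of $v_i$ alone. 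Thus $v_1 = v_2$ forces $c_1 = c_2$, so the first rejection clause can never fire; and $\HxLQfJAj{\oVrArWkh}{v_1, v_2} = 11$ implies $\GGfSpoIn{v_1, v_2}$ is an edge of $G$, whence $c_1 = \chi(v_1) \neq \chi(v_2) = c_2$ because the colouring is proper on $G$, so the second clause can never fire either --- the test accepts with probability $1$.

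The \textbf{Uniformity Test} is the only step requiring care, and it is where I would focus: although \raIfJZxu{\phi} is entangled across $\VSdxZJNO{N}_1$ and $\VSdxZJNO{C}_1$, the clause to check concerns the outcome on $\VSdxZJNO{N}_1$ \emph{conditioned} on that on $\VSdxZJNO{C}_1$, not the reduced state of $\VSdxZJNO{N}_1$. Conditioning on outcome $0$ on $\VSdxZJNO{C}_1$ (the projection onto $\raIfJZxu{u_3}$) leaves $\VSdxZJNO{N}_1$ in the normalisation of $\sum_{v} \NChmtXsR{u_3}{\chi(v)} \raIfJZxu{v} = \frac{1}{\sqrt{3}} \sum_{v} \raIfJZxu{v}$, which is exactly $\raIfJZxu{u_{2^n}}$ since every basis state appears with equal amplitude; the ensuing measurement of $\VSdxZJNO{N}_1$ (that of \MOVPSHbF{NlcnHZqK} with $m = 2^n$) therefore returns outcome $0$, never $1$, so the rejection clause --- outcome $0$ on $\VSdxZJNO{C}_1$ \emph{and} outcome $1$ on $\VSdxZJNO{N}_1$ --- is never met. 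Here it is essential that the honest node register ranges over all $2^n$ strings (hence the dummy colour on the non-vertices); otherwise the conditional state would not be $\raIfJZxu{u_{2^n}}$. Since \wQoFYwHz{V} performs exactly one of the three tests and each accepts with probability $1$, \wQoFYwHz{V} accepts with probability $1$. The one genuine obstacle is bookkeeping rather than ideas: keeping the two measurements of the Uniformity Test straight and tracking the correct post-measurement state of $\VSdxZJNO{N}_1$.
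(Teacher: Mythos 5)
Your proposal is correct and follows essentially the same route as the paper: both use the honest state $\frac{1}{\sqrt{2^n}}\sum_v \raIfJZxu{v}\raIfJZxu{\chi(v)}$ (with a dummy colour on non-vertices) in both registers, invoke \MJqbMYrm{QzwgCixS} for the Equality Test, use properness of the colouring for the Consistency Test, and compute the post-measurement state of $\VSdxZJNO{N}_1$ after projecting $\VSdxZJNO{C}_1$ onto $\raIfJZxu{u_3}$ to see that the Uniformity Test never rejects. The only difference is that you spell out the Consistency Test case analysis and the conditioning step slightly more explicitly than the paper does.
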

\begin{proof}
For $ i \in \HelzWRDf{m} $ let $ \HxLQfJAj{c}{i} \in \GGfSpoIn{0,1,2} $
be a valid coloring of the graph $G$,
where $m$ is the number of nodes.
For $ i \in \GGfSpoIn{m, \ldots, 2^n-1} $ let $ \HxLQfJAj{c}{i} = 0 $.
Let the state of both $ \VSdxZJNO{R}_1 $ and $ \VSdxZJNO{R}_2 $ be
\begin{align}
\raIfJZxu{\phi} = \frac{1}{\sqrt{2^n}}
\sum_{i=0}^{2^n - 1} \raIfJZxu{i} \raIfJZxu{\HxLQfJAj{c}{i}} \text{,}
\label{BGuJsFqG}
\end{align}
where \raIfJZxu{i} is on the node register (\VSdxZJNO{N})
and \raIfJZxu{\HxLQfJAj{c}{i}} is on the color register (\VSdxZJNO{C}).
From \MJqbMYrm{QzwgCixS}
it follows that the Equality Test succeeds with probability $1$.
Since $ c $ is a valid 3-coloring, the Consistency Test
succeeds with probability $1$ as well.
To see the same for the Uniformity Test,
let us calculate the resulting state
after measuring $0$ on $ \VSdxZJNO{C}_1 $ in
line~\ref{GTSlDeLX} of
\FxucuSoM{jEVcePOJ}.
Up to some normalization factor, the state is
\[ \DnoYoBfx{\YMElKjII \NUxdODWZ \GyFczGjW{u_3}} \raIfJZxu{\phi} =
\frac{1}{3 \sqrt{2^n}} \DnoYoBfx{\sum_{i=0}^{2^n - 1} \raIfJZxu{i}}
\NUxdODWZ \DnoYoBfx{\sum_{k=0}^{2} \raIfJZxu{k}} . \]
This means that the state of $ \VSdxZJNO{N}_1 $
is \raIfJZxu{u_{2^n}}, so the Uniformity Test always succeeds.
\end{proof}
The proof of soundness is presented in \IhApwfvN{WEFVXHAK}
on page~\pageref{WEFVXHAK}, because it
closely follows the analysis of Blier and Tapp \cite{Blier2009}
and of Chiesa and Forbes \cite{Chiesa2011},
although with different parameters.
\section{\texorpdfstring{\pUeVeClC{\uaivIgOh{n}} with Small Gap Equals \vxJDyEiW}
{BellQMA[Omega(n)] with Small Gap Equals NEXP}}
\label{BXwCHKEG}
In this section we prove \MJqbMYrm{nKVfzRHR},
\WpedDmmP{} we show that \SHvwttaw{} with exponentially
small gap still equals to \vxJDyEiW{} if we restrict
the verifier to only be able to perform Bell-measurements.
However, we will need at least \uaivIgOh{n} proofs.
We essentially use the algorithm of Chen
and Drucker \cite{Chen2010} on the succinct version of graph
3-coloring (\HNGSAaAW).
We also use one of their lemmas, but
our proof will be simpler then theirs,
because we don't aim for constant gap.
We don't use the PCP theorem either.
Note that in the previous section we already
argued about the \vxJDyEiW{} upper-bound on the
\SHvwttaw{} classes.
The same argument applies here too.
It's also easy to see that restricting the verifier
can only make the power of the proof system weaker.
So the only statement left to prove, in order to prove
\MJqbMYrm{nKVfzRHR}, is the following.
\begin{thm}
$ \vxJDyEiW \subseteq \fiWpgtYc{\wrDoIJTw{n}}{\uaivIgOh{n}}{c}{s} $,
for some $c$ and $s$
with $ \HxLQfJAj{c}{n} - \HxLQfJAj{s}{n} = \uaivIgOh{4^{-n}} $.
\label{okAAwxmt}
\end{thm}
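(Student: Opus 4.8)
The plan is to adapt the Blier–Tapp-style verifier of \FxucuSoM{jEVcePOJ} so that every step becomes a Bell-measurement, at the cost of using $\uaivIgOh{n}$ copies of each proof instead of two. First I would ask the honest prover to send $K = \uaivIgOh{n}$ unentangled copies of the state $\raIfJZxu{\phi}$ from \eqref{BGuJsFqG} (the superposition of node/color pairs encoding a valid $3$-coloring). The verifier, upon seeing $\oVrArWkh$, picks one of three tests at random exactly as before, but replaces the SWAP-based Equality Test by a test that can be implemented with single-proof measurements only. The key observation — this is precisely the lemma of Chen and Drucker \cite{Chen2010} I would invoke — is that to certify that $\Omega(n)$ proofs are (close to) a tensor power $\raIfJZxu{\phi}^{\NUxdODWZ K}$ of a \emph{single} state, it suffices to measure each proof individually in the computational basis and run a classical consistency check on the outcomes; deviations from a product state of identical copies show up, with noticeable probability, as outcome statistics that a valid coloring could not produce. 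The Consistency Test and the Uniformity Test are already local: the former measures $\VSdxZJNO{N}_i\VSdxZJNO{C}_i$ in the computational basis, which is a Bell-measurement, and the latter uses the measurement of \MOVPSHbF{NlcnHZqK}, which by the remark after that definition is just $\NpwuHusD{\log 2^n}$ Hadamards followed by standard-basis measurements — hence a valid \cdToKcBp{} measurement on one proof. So the whole protocol fits the \fiWpgtYc{\wrDoIJTw{n}}{\uaivIgOh{n}}{c}{s} template.

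The steps, in order, would be: (i) specify the honest proof — $K$ copies of $\raIfJZxu{\phi}$ — and the three randomized sub-tests, each phrased as independent POVMs $M_1,\dotsc,M_K$ on the individual proofs followed by classical post-processing, so the verifier is manifestly a \cdToKcBp{} verifier; (ii) prove completeness, which should be almost immediate from \dWQdKYxc{ueMCbSRk}: on $\raIfJZxu{\phi}^{\NUxdODWZ K}$ the Consistency and Uniformity checks pass, and the replacement Equality/product-consistency check passes, giving acceptance probability $c$ with $1 - c = \uaivIgOh{4^{-n}}$ (we do not get perfect completeness, just as in \cite{Chen2010}); (iii) prove soundness — given $\oVrArWkh \notin \HNGSAaAW$ and arbitrary unentangled proofs $\raIfJZxu{\phi_1} \NUxdODWZ \cdots \NUxdODWZ \raIfJZxu{\phi_K}$, show acceptance probability is at most $s$ with $c - s = \uaivIgOh{4^{-n}}$. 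For (iii) I would follow the soundness analysis referenced in \IhApwfvN{WEFVXHAK}: first use the Chen–Drucker lemma to argue that either the product-consistency check rejects with inverse-exponential probability, or the reduced single-proof states are all close to one common state $\raIfJZxu{\psi}$; in the latter case the Consistency and Uniformity Tests applied to $\raIfJZxu{\psi}$ force $\raIfJZxu{\psi}$ to be close to an encoding of a proper $3$-coloring of $G$, contradicting $\oVrArWkh \notin \HNGSAaAW$, so one of those tests must reject with inverse-exponential probability. Averaging over the three equally-likely tests only costs a constant factor, preserving the $\uaivIgOh{4^{-n}}$ gap.

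The main obstacle, and the part deserving real care, is step (iii) in the regime where the $K$ proofs are \emph{not} close to a common product state: a cheating single prover may entangle the $K$ registers arbitrarily (the unentanglement promise is only \emph{across} the $K$ registers as a whole, but here all $K$ registers come from provers we think of as colluding — so in the \cdToKcBp{} model the relevant states are arbitrary product states $\raIfJZxu{\phi_1}\NUxdODWZ\cdots\NUxdODWZ\raIfJZxu{\phi_K}$, and I must quantify how the computational-basis outcome statistics of a generic such product state differ from those of $K$ identical copies). This is exactly what the Chen–Drucker lemma is designed to handle, so the work is to restate that lemma with the parameters I need, check that an inverse-exponential detection probability is all I require (which is why, as the paper notes, I can avoid the PCP theorem and their constant-gap machinery), and then glue it to the coloring-soundness argument from \cite{Blier2009,Chiesa2011}. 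A secondary bookkeeping obstacle is making sure the Uniformity Test — which in \FxucuSoM{jEVcePOJ} looks at one proof — still does its job of ruling out proofs supported only on the non-vertices $\GGfSpoIn{m,\dotsc,2^n-1}$ when applied to the single common state $\raIfJZxu{\psi}$; this is the same calculation as in \dWQdKYxc{ueMCbSRk} and its soundness counterpart, just carried along through the closeness estimates.
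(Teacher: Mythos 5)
Your overall architecture is on the right track: you correctly see that the SWAP-based Equality Test must be dropped, that the Consistency and Uniformity Tests are already single-proof measurements compatible with the \cdToKcBp{} model, that completeness is not perfect and needs a concentration bound on the number of registers yielding outcome $0$, and that this is exactly where the linear number of proofs is used. The genuine gap is in your replacement for the Equality Test. You propose a ``product-consistency check'' that, via local computational-basis measurements, certifies that the $K$ proofs are close to a tensor power $\raIfJZxu{\psi}^{\NUxdODWZ K}$ of a \emph{single} state, and you attribute this to the lemma of Chen and Drucker. No such test exists in the Bell-measurement model, and that is not what their lemma says: local, non-adaptive measurements cannot distinguish $K$ proofs encoding pairwise very different states (for instance, supported on disjoint sets of vertices, or encoding different colorings) from $K$ identical copies, so your intermediate claim ``either the check rejects or all reduced states are close to one common $\raIfJZxu{\psi}$'' is false as stated. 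The lemma actually used (\dWQdKYxc{vzMhnkxb}, Lemma~2 of \cite{Chen2010}) is a concentration statement of a different kind: if fewer than $k/6$ of the proofs have probability at least $1/12$ of yielding outcome $0$ on their color register, then the Uniformity Test already rejects with constant probability.

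Consequently the paper's soundness argument never passes through a common state, and the protocol of \FxucuSoM{NuAJSVLc} runs only \emph{two} tests (Consistency and Uniformity, each chosen with probability $1/2$; there is no third test). Assuming the Uniformity Test rejects with probability below $200^{-1} \cdot 4^{-n}$, \dWQdKYxc{vzMhnkxb} yields more than $k/6 \geq 2$ ``good'' proofs, and \dWQdKYxc{UDjzaeqm} shows that each good proof has every vertex present with squared amplitude \uaivIgOh{2^{-n}}. One then extracts a coloring from each of two good proofs $q,r$ by taking the color of maximal amplitude per vertex, and the pairwise computational-basis Consistency Test between registers $q$ and $r$ catches, with probability \uaivIgOh{4^{-n}}, either a vertex on which the two extracted colorings disagree or a monochromatic edge (\dWQdKYxc{gFZRHSOy}). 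In other words, the role of the Equality Test is absorbed into the cross-register Consistency Test rather than replaced by a symmetry test; your plan as written would stall at the nonexistent product test.
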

To prove \MJqbMYrm{okAAwxmt},
we construct a \cdToKcBp{} verifier \wQoFYwHz{V} for
\HNGSAaAW{} which will show that
$ \HNGSAaAW{} \in \fiWpgtYc{\wrDoIJTw{n}}{\uaivIgOh{n}}{c}{s} $.
Since \HNGSAaAW{} is \vxJDyEiW-complete,
\MJqbMYrm{okAAwxmt} will follow.
Just as in the previous section,
let the input to \HNGSAaAW{} be denoted by \oVrArWkh{} and it's
length by $ n = \AfBMaxZI{\oVrArWkh} $.
Verifier \wQoFYwHz{V} will receive $k$ quantum proofs
in registers
$\VSdxZJNO{N}_1, \VSdxZJNO{C}_1, \ldots, \VSdxZJNO{N}_k, \VSdxZJNO{C}_k$,
where for each $ i \in \iteYiYHH{k} $,
the state of $\VSdxZJNO{N}_i \VSdxZJNO{C}_i$ is
separable from the rest of the registers.
The registers $\VSdxZJNO{N}_i$ have associated space
$ \VWqEPttf^{2^n} $ and registers $\VSdxZJNO{C}_i$
have associated space $ \VWqEPttf^3 $, similarly as in
the previous section.
The behavior of \wQoFYwHz{V} is described
in \FxucuSoM{NuAJSVLc}.
\begin{algorithm}[h t b]
\caption{Description of verifier \wQoFYwHz{V} in the proof of
\MJqbMYrm{okAAwxmt}.}
\label{NuAJSVLc}
\begin{algorithmic}[1]
\ZNbYlnmP classical circuit \oVrArWkh, quantum registers
$\VSdxZJNO{N}_1, \VSdxZJNO{C}_1, \ldots,
\VSdxZJNO{N}_k, \VSdxZJNO{C}_k$,
where $ \forall i \in \iteYiYHH{k} $,
the state of $\VSdxZJNO{N}_i \VSdxZJNO{C}_i$ is
separable from the rest of the registers
\tkOoHzex accept or reject
\STATE With probability $ \frac{1}{2} $ do the
\textbf{Consistency Test} (line~\ref{NgopDguv})
or the \textbf{Uniformity Test}
(line~\ref{qMwgVWBk}).
\label{GJhSCbBL}
\STATE \textbf{Consistency Test.}
\label{NgopDguv}
\FORALL{$ i \in \iteYiYHH{k} $}
\STATE Measure $\VSdxZJNO{N}_i$ and $\VSdxZJNO{C}_i$
in the computational basis and get $v_i$ and $c_i$.
\label{qAGbukNT}
\ENDFOR
\FORALL{$ 1 \leq i < j \leq k $}
\IF{($ v_i = v_j $) \AND ($ c_i \neq c_j $)}
\RETURN reject
\COMMENT{The same vertex has two colors.}
\label{FMIMsMqc}
\ELSIF{($ \HxLQfJAj{\wQoFYwHz{\oVrArWkh}}{v_i, v_j} = 11 $)
\AND ($ c_i = c_j $)}
\RETURN reject
\COMMENT{Adjacent vertices have same color.}
\label{eIpHlpwC}
\ENDIF
\ENDFOR
\RETURN accept
\STATE \textbf{Uniformity Test.}
\label{qMwgVWBk}
\FORALL{$ i \in \iteYiYHH{k} $}
\STATE Measure $ \VSdxZJNO{C}_i $ with the measurement
of \MOVPSHbF{NlcnHZqK}
and denote the outcome by $x_i$.
\label{ZtoStVJz}
\STATE Measure $ \VSdxZJNO{N}_i $ with the measurement
of \MOVPSHbF{NlcnHZqK}
and denote the outcome by $y_i$.
\label{fEiSBHoN}
\ENDFOR
\STATE Let $ Z \faJHPjTb \lUlEWeUc{i}{x_i = 0} $.
\IF{$ \AfBMaxZI{Z} < k/6 $}
\RETURN reject
\label{czNyFhpM}
\ENDIF
\FORALL{$ i \in Z $}
\IF{$ y_i = 1 $}
\RETURN reject
\COMMENT{Not all nodes are present.}
\ENDIF
\ENDFOR
\RETURN accept
\end{algorithmic}
\end{algorithm}
We split the proof of \MJqbMYrm{okAAwxmt}
into \dWQdKYxc{iQoQORKN},
which proves completeness for \wQoFYwHz{V} and
\dWQdKYxc{gFZRHSOy}, which proves its soundness.
\begin{lem}[Completeness]
If $ \oVrArWkh \in \HNGSAaAW $ then there exist quantum states
on registers $\VSdxZJNO{N}_1, \VSdxZJNO{C}_1, \ldots,
\VSdxZJNO{N}_k, \VSdxZJNO{C}_k$, such that if they are input to
\wQoFYwHz{V}, defined by \FxucuSoM{NuAJSVLc},
then \wQoFYwHz{V} will accept with probability at least
$ 1 - 2^{- \frac{k}{40}} $.
\label{iQoQORKN}
\end{lem}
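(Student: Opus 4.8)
The plan is to follow Chen and Drucker's completeness argument, specialized to the graph $G$ encoded by $\oVrArWkh$, using as honest proof $k$ independent copies of the Blier--Tapp ``coloring superposition'' state from the proof of \dWQdKYxc{ueMCbSRk}. Concretely: since $\oVrArWkh \in \HNGSAaAW$, fix a proper $3$-coloring of the $m$ nodes of $G$, extend it by $\HxLQfJAj{c}{i} \faJHPjTb 0$ for $i \in \GGfSpoIn{m, \ldots, 2^n - 1}$, and put each of the $k$ proof registers $\VSdxZJNO{N}_i \VSdxZJNO{C}_i$ into the state
\[
\raIfJZxu{\phi} = \frac{1}{\sqrt{2^n}} \sum_{i=0}^{2^n - 1} \raIfJZxu{i}\raIfJZxu{\HxLQfJAj{c}{i}} \text{,}
\]
with $\raIfJZxu{i}$ on the node register and $\raIfJZxu{\HxLQfJAj{c}{i}}$ on the color register, so that the full proof is the product state $\raIfJZxu{\phi}^{\NUxdODWZ k}$, which is legal since the blocks $\VSdxZJNO{N}_i \VSdxZJNO{C}_i$ are unentangled from one another. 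I would then bound the rejection probability of \FxucuSoM{NuAJSVLc} separately on each branch of line~\ref{GJhSCbBL}.

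For the Consistency Test branch, measuring $\raIfJZxu{\phi}^{\NUxdODWZ k}$ in the computational basis yields, for each $i$, a uniformly random $v_i \in \HelzWRDf{2^n}$ together with $c_i = \HxLQfJAj{c}{v_i}$. Then $v_i = v_j$ forces $c_i = c_j$, so line~\ref{FMIMsMqc} never rejects; and $\HxLQfJAj{\oVrArWkh}{v_i, v_j} = 11$ means $v_i < v_j$ and $\DnoYoBfx{v_i, v_j} \in E$, so properness of the coloring forces $c_i \neq c_j$ and line~\ref{eIpHlpwC} never rejects. Hence this branch accepts with probability $1$. For the Uniformity Test branch, the crux is the identity already established in the proof of \dWQdKYxc{ueMCbSRk},
\[
\DnoYoBfx{\YMElKjII \NUxdODWZ \GyFczGjW{u_3}} \raIfJZxu{\phi}
= \frac{1}{\sqrt{3}} \, \raIfJZxu{u_{2^n}} \NUxdODWZ \raIfJZxu{u_3} \text{,}
\]
which shows that for each $i$, conditioned on the outcome $x_i = 0$ of the $\VSdxZJNO{C}_i$-measurement of \MOVPSHbF{NlcnHZqK} (an event of probability $\frac{1}{3}$, occurring independently across $i$), the residual state on $\VSdxZJNO{N}_i$ is exactly $\raIfJZxu{u_{2^n}}$, so the subsequent $\VSdxZJNO{N}_i$-measurement of \MOVPSHbF{NlcnHZqK} returns $y_i = 0$ with certainty. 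Consequently the ``not all nodes are present'' check never fires, and the only rejecting event in this branch is $\AfBMaxZI{Z} < k/6$ for $Z = \lUlEWeUc{i}{x_i = 0}$. But $\AfBMaxZI{Z}$ is a sum of $k$ independent $\mathrm{Bernoulli}\DnoYoBfx{1/3}$ indicators with mean $k/3$, so a \fKqDCezI{} gives $\TdSeQgKh{\AfBMaxZI{Z} < k/6} \leq e^{-k/24} \leq 2^{-k/40}$. Averaging over the two equally likely branches, $\wQoFYwHz{V}$ rejects with probability at most $\frac{1}{2} \cdot 2^{-k/40} \leq 2^{-k/40}$, which is the claimed bound.

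I do not anticipate a real obstacle here; the work is bookkeeping rather than any single hard step. The two things worth double-checking are that the per-proof procedures are honest $\cdToKcBp$ measurements — in the Uniformity Test branch the $i$-th proof undergoes the fixed product POVM ``measure $\VSdxZJNO{C}_i$ by \MOVPSHbF{NlcnHZqK}, then $\VSdxZJNO{N}_i$ by \MOVPSHbF{NlcnHZqK}'', with no dependence on the other proofs, and the choice of test in line~\ref{GJhSCbBL} is made by classical coins before the measurements are specified — and that the Chernoff constants indeed beat $2^{-k/40}$ (any standard multiplicative form suffices, since $1/24 > (\ln 2)/40$). The genuinely delicate ``birthday-paradox'' counting in Chen and Drucker's construction lives entirely in the soundness part, \dWQdKYxc{gFZRHSOy}, and is not needed for completeness.
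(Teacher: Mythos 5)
Your proposal is correct and follows essentially the same route as the paper: the same honest proof state $\raIfJZxu{\phi}^{\NUxdODWZ k}$, the observation that the Consistency Test accepts with certainty, the computation that conditioned on $x_i = 0$ (probability $\frac{1}{3}$) the residual state on $\VSdxZJNO{N}_i$ is exactly \raIfJZxu{u_{2^n}} so the node check never fires, and a Chernoff bound on $\AfBMaxZI{Z}$ against the threshold $k/6$. The only differences are cosmetic (your Chernoff exponent $k/24$ versus the paper's $k/48$, and your extra factor $\frac{1}{2}$ from the branch choice), both of which still beat $2^{-k/40}$.
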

\begin{proof}
For all $ i \in \iteYiYHH{k} $, let the state of
$\VSdxZJNO{N}_i \VSdxZJNO{C}_i$ be \raIfJZxu{\phi},
where \raIfJZxu{\phi} is defined by equation~\eqref{BGuJsFqG}
on page~\pageref{BGuJsFqG}.
For exactly the same reason as in the proof of
\dWQdKYxc{ueMCbSRk} the Consistency Test will
succeed with probability $1$.
As for the Uniformity Test, note that for all
$i \in Z$, the measurement of $\VSdxZJNO{N}_i$
in line~\ref{fEiSBHoN}
of \FxucuSoM{NuAJSVLc} yields $1$ with
probability $1$.
The argument for this is also in the proof of
\dWQdKYxc{ueMCbSRk}.
\par
This means that given the above input, the only place where
\FxucuSoM{NuAJSVLc} may reject is at
line~\ref{czNyFhpM}, \WpedDmmP{}
when $ \AfBMaxZI{Z} < \frac{k}{6} $.
So in the following we only need to upper-bound
this probability.
We do it similarly to the proof of Lemma~1 of \cite{Chen2010}.
By direct calculation the probability that
$x_i = 0$ in line~\ref{ZtoStVJz} is
\[ \TdSeQgKh{x_i = 0} =
\fHAwcqZo{\phi} \DnoYoBfx{\YMElKjII \NUxdODWZ \GyFczGjW{u_3}} \raIfJZxu{\phi} =
\frac{1}{3} . \]
This means that $ \YcVNkVQD{\AfBMaxZI{Z}} = \frac{k}{3} $.
Since the $x_i$'s are independent, we can use the
\fKqDCezI{} and get that
\[ \TdSeQgKh{\AfBMaxZI{Z} < \frac{k}{6}} <
e^{- \frac{k}{48}} <
2^{- \frac{k}{40}} . \]
This finishes the proof of the lemma.
\end{proof}
\newcommand{\vXoALQtP}[1]{\ensuremath{ \alpha_{v}^{\DnoYoBfx{#1}} }}
\newcommand{\yJgNrafg}[1]{\ensuremath{ \beta_{v,j}^{\DnoYoBfx{#1}} }}
\newcommand{\zZTHRpCM}{\ensuremath{ \gamma_{v}^{\DnoYoBfx{i}} }}
We are left to prove soundness for \wQoFYwHz{V}.
From now on let's denote the quantum input to
\FxucuSoM{NuAJSVLc} by \raIfJZxu{\varphi_1}, \ldots,
\raIfJZxu{\varphi_k}.
For each $ i \in \iteYiYHH{k} $, we write
\[ \raIfJZxu{\varphi_i} = \sum_{v=0}^{2^n-1} \vXoALQtP{i} \raIfJZxu{v}
\sum_{j=0}^2 \yJgNrafg{i} \raIfJZxu{j} \text{,} \]
where \raIfJZxu{v} is a state on $\VSdxZJNO{N}_i$,
\raIfJZxu{j} is a state on $\VSdxZJNO{C}_i$, furthermore
$ \sum_{v=0}^{2^n-1} \AfBMaxZI{\vXoALQtP{i}}^2 = 1 $ for each $i$, and
$ \sum_{j=0}^2 \AfBMaxZI{\yJgNrafg{i}}^2 = 1 $ for each $i$ and $v$.
Similarly to the notation in \cite{Chen2010} let
\[ Z' \faJHPjTb \lUlEWeUc{i}{ \TdSeQgKh{x_i = 0} \geq \frac{1}{12} } . \]
\par
We need a lemma from \cite{Chen2010} which we will
state and use with a bit different parameters.
Intuitively the lemma says that in order to avoid rejection
in line~\ref{czNyFhpM}, we must
have a constant fraction of registers for which,
with at least a constant probability, the outcome
of the measurement in
line~\ref{ZtoStVJz} is $0$.
\begin{lem}[Lemma~2 of \cite{Chen2010}]
If $ \AfBMaxZI{Z'} \leq k/6 $ and if in
line~\ref{GJhSCbBL} of
\FxucuSoM{NuAJSVLc} the Uniformity
Test is chosen, then the test will reject in
line~\ref{czNyFhpM} with
probability \uaivIgOh{1}.
\label{vzMhnkxb}
\end{lem}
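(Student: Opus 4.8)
The plan is to show that, once the Uniformity Test has been selected and under the hypothesis $\AfBMaxZI{Z'} \leq k/6$, the random variable $\AfBMaxZI{Z}$ has expectation bounded below $k/6$, and then to apply a tail bound (or even just a first-moment argument) to conclude that $\AfBMaxZI{Z} < k/6$ --- and hence rejection in line~\ref{czNyFhpM} --- happens with probability $\uaivIgOh{1}$.

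First I would record the independence structure. Since the proofs $\raIfJZxu{\varphi_1}, \dotsc, \raIfJZxu{\varphi_k}$ are promised to be mutually unentangled and the measurements in line~\ref{ZtoStVJz} of \FxucuSoM{NuAJSVLc} act on the registers $\VSdxZJNO{C}_1, \dotsc, \VSdxZJNO{C}_k$ separately, the events $\GGfSpoIn{x_i = 0}$, $i \in \iteYiYHH{k}$, are mutually independent. Writing $p_i \faJHPjTb \TdSeQgKh{x_i = 0}$, which by \MOVPSHbF{NlcnHZqK} equals $\fHAwcqZo{\varphi_i} \DnoYoBfx{ \YMElKjII \NUxdODWZ \GyFczGjW{u_3} } \raIfJZxu{\varphi_i}$, the quantity $\AfBMaxZI{Z}$ is a sum of $k$ independent indicator variables, so $\YcVNkVQD{\AfBMaxZI{Z}} = \sum_{i \in \iteYiYHH{k}} p_i$.

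Second, I would bound this expectation using the hypothesis. Splitting the sum according to membership in $Z'$, every $i \notin Z'$ contributes $p_i < 1/12$ by the definition of $Z'$, so those terms sum to less than $(k - \AfBMaxZI{Z'})/12$, while the terms with $i \in Z'$ sum to at most $\AfBMaxZI{Z'}$ since $p_i \leq 1$ always. Feeding in $\AfBMaxZI{Z'} \leq k/6$ is the only use of the hypothesis, and I expect the crux of the proof to be exactly the bookkeeping here: one must track the numerical constants --- the $1/12$ in the definition of $Z'$, the bound $\AfBMaxZI{Z'} \leq k/6$, and the threshold $k/6$ of line~\ref{czNyFhpM} --- so that $\YcVNkVQD{\AfBMaxZI{Z}}$ is forced \emph{strictly} below $k/6$, possibly after refining the crude estimate above (for instance by treating the random quantity $\AfBMaxZI{Z \setminus Z'}$, whose mean is below $k/12$, via a \fKqDCezI{} rather than its trivial upper bound).

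Finally, the tail bound. If the previous step yields $\YcVNkVQD{\AfBMaxZI{Z}} \leq (1 - \delta)\, k/6$ for a fixed constant $\delta > 0$, then independence together with a \fKqDCezI{} gives $\TdSeQgKh{\AfBMaxZI{Z} \geq k/6} \leq e^{-\uaivIgOh{k}}$, hence $\TdSeQgKh{\AfBMaxZI{Z} < k/6} = 1 - e^{-\uaivIgOh{k}} = \uaivIgOh{1}$; and if it only yields $\YcVNkVQD{\AfBMaxZI{Z}} < k/6$ with no multiplicative slack, Markov's inequality still bounds $\TdSeQgKh{\AfBMaxZI{Z} \geq k/6}$ away from $1$ by a constant, which is all that is needed. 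Either way the Uniformity Test rejects in line~\ref{czNyFhpM} with probability $\uaivIgOh{1}$, completing the argument --- aside from the constant bookkeeping of the second step, everything is a routine combination of unentanglement-induced independence with a standard concentration inequality.
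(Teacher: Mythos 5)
First, a point of comparison: the paper does not actually prove this lemma --- it is imported (with adjusted constants) from Lemma~2 of Chen and Drucker \cite{Chen2010} and used as a black box, so there is no in-paper argument to measure your proposal against. Judged on its own terms, your proposal has a genuine gap at exactly the place you flag as the ``crux'': the hypothesis $ \AfBMaxZI{Z'} \leq k/6 $ does \emph{not} force $ \YcVNkVQD{\AfBMaxZI{Z}} < k/6 $, and no refinement of the bookkeeping can make it do so. Your own estimate gives
\[ \YcVNkVQD{\AfBMaxZI{Z}} \; \leq \; \AfBMaxZI{Z'} + \frac{k - \AfBMaxZI{Z'}}{12} \; \leq \; \frac{k}{6} + \frac{k}{12} \, , \]
which sits well above the target $k/6$, and this is not just slack in the estimate: the configuration in which every $ i \in Z' $ has $ \TdSeQgKh{x_i = 0} = 1 $ and $ \AfBMaxZI{Z'} $ is essentially $k/6$ is perfectly admissible (take $ \raIfJZxu{\varphi_i} = \raIfJZxu{v} \raIfJZxu{u_3} $ for $ i \in Z' $) and already contributes $ \AfBMaxZI{Z'} \approx k/6 $ to $ \AfBMaxZI{Z} $ deterministically. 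So the first-moment route is structurally dead, not merely in need of sharper constants; and the Markov fallback you offer would in any case only give a bound of the form $ 1 - \QUPOtzml{1/k} $, not a constant, even if $ \YcVNkVQD{\AfBMaxZI{Z}} < k/6 $ held. (The same extremal configuration shows the lemma as restated here is tight to the point of breaking at the boundary $ \AfBMaxZI{Z'} = k/6 $; any actual proof must exploit slack in the constants.)

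The argument that does work --- and the one in \cite{Chen2010} --- is the one you mention only parenthetically and then misdeploy. One does not try to control $ \YcVNkVQD{\AfBMaxZI{Z}} $ at all; instead one concedes the indices of $Z'$ entirely, writes $ \AfBMaxZI{Z} \leq \AfBMaxZI{Z'} + \AfBMaxZI{Z \setminus Z'} $, and observes that acceptance requires $ \AfBMaxZI{Z \setminus Z'} \geq k/6 - \AfBMaxZI{Z'} $, where $ \AfBMaxZI{Z \setminus Z'} $ is a sum of at least $5k/6$ independent indicators (independence being the part you do establish correctly) each with mean below $1/12$. The entire content of the lemma is that the gap between the threshold $1/12$ in the definition of $Z'$ and the rejection threshold $k/6$ leaves enough room to bound this tail probability away from $1$ by a universal constant. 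Your proposal inverts this, using $ \AfBMaxZI{Z \setminus Z'} $ as a tool for pushing $ \YcVNkVQD{\AfBMaxZI{Z}} $ below $k/6$, which is unachievable; if you want a self-contained proof, redo the computation in this conditional form and track explicitly where the constants need slack.
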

We want all nodes to appear with sufficiently
big amplitude in \raIfJZxu{\varphi_i}, for each $i \in Z'$.
This is formalized by the following lemma.
\begin{lem}
Suppose that the Uniformity Test rejects with probability
at most $ 200^{-1} \cdot 4^{-n} $.
Then $ \forall i \in Z' $ and $ \forall v \in \HelzWRDf{2^n} $
it holds that
\[ \AfBMaxZI{\vXoALQtP{i}}^2 > \frac{1}{24 \cdot 2^n} . \]
\label{UDjzaeqm}
\end{lem}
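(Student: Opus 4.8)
The plan is to prove the contrapositive: suppose there is some $i \in Z'$ and some vertex $v \in \HelzWRDf{2^n}$ with $\AfBMaxZI{\vXoALQtP{i}}^2 \leq \frac{1}{24 \cdot 2^n}$, and show that the Uniformity Test then rejects with probability more than $200^{-1} \cdot 4^{-n}$. Fix such an $i$. The key observation is that the Uniformity Test, conditioned on register $i$ being chosen for scrutiny, rejects whenever both the outcome $x_i = 0$ on $\VSdxZJNO{C}_i$ (line~\ref{ZtoStVJz}) and the outcome $y_i = 1$ on $\VSdxZJNO{N}_i$ (line~\ref{fEiSBHoN}) occur; there is also the contribution from line~\ref{czNyFhpM}, but we may simply lower-bound the total rejection probability by the contribution of register $i$ alone. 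So I would lower-bound $\TdSeQgKh{x_i = 0 \text{ and } y_i = 1}$.

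First I would recall that, by the definition of $Z'$, we have $\TdSeQgKh{x_i = 0} \geq \frac{1}{12}$. The measurement on $\VSdxZJNO{C}_i$ in line~\ref{ZtoStVJz} projects onto $\GyFczGjW{u_3}$ for outcome $0$; a direct computation shows the post-measurement state on $\VSdxZJNO{N}_i$, conditioned on $x_i = 0$, is proportional to $\sum_v \vXoALQtP{i} \bigl(\sum_j \yJgNrafg{i}\bigr) \raIfJZxu{v}$, i.e., it is a state whose amplitude on $\raIfJZxu{v}$ is $\vXoALQtP{i}$ times a factor of absolute value at most $1$ (the overlap of the color part with $\raIfJZxu{u_3}$), all renormalized by dividing by $\sqrt{3 \TdSeQgKh{x_i=0}}$. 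The measurement on $\VSdxZJNO{N}_i$ in line~\ref{fEiSBHoN} is the $\GGfSpoIn{\nVzyhtCm{P}_0,\nVzyhtCm{P}_1}$ measurement of \MOVPSHbF{NlcnHZqK} with $m = 2^n$, and $y_i = 1$ corresponds to $\nVzyhtCm{P}_1 = \YMElKjII - \GyFczGjW{u_{2^n}}$. So $\TdSeQgKh{y_i = 1 \mid x_i = 0}$ is $1$ minus the squared overlap of the (normalized) conditional node-state with $\raIfJZxu{u_{2^n}}$.

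The crux is then a short linear-algebra estimate: if some amplitude-squared $\AfBMaxZI{\vXoALQtP{i}}^2$ is small (at most $\frac{1}{24 \cdot 2^n}$, well below the value $\frac{1}{2^n}$ it would have for the uniform state), then the conditional node-state cannot have overlap too close to $1$ with $\raIfJZxu{u_{2^n}}$, so $\TdSeQgKh{y_i = 1 \mid x_i = 0}$ is bounded below by a quantity of order $4^{-n}$. Concretely, writing the unnormalized conditional node-state as $\sum_v \delta_v \raIfJZxu{v}$ with $\AfBMaxZI{\delta_v} \leq \AfBMaxZI{\vXoALQtP{i}}$ and $\sum_v \AfBMaxZI{\delta_v}^2 = 3\TdSeQgKh{x_i=0}$, the squared overlap with $\raIfJZxu{u_{2^n}}$ is $\frac{1}{2^n}\AfBMaxZI{\sum_v \delta_v}^2 \big/ \sum_v \AfBMaxZI{\delta_v}^2$; I would upper-bound the numerator $\AfBMaxZI{\sum_v \delta_v}^2$ using Cauchy--Schwarz together with the constraint that the $v$-th term is individually small, which forces the overlap to miss $1$ by at least roughly $\frac{1}{24 \cdot 2^n} \big/ (3 \TdSeQgKh{x_i=0})$ up to constants. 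Multiplying $\TdSeQgKh{x_i=0} \geq \frac{1}{12}$ by this conditional rejection probability gives a lower bound of the form $c \cdot 4^{-n}$, and I would choose the constant $\frac{1}{24}$ in the hypothesis (and track the constant in line~\ref{czNyFhpM} and the factor coming from the Uniformity Test being selected with probability $\frac12$ and register $i$ contributing among $k$) precisely so that this exceeds $200^{-1} \cdot 4^{-n}$, contradicting the assumption.

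I expect the main obstacle to be the bookkeeping in that final estimate: one must be careful that the conditioning on $x_i = 0$ divides by $\TdSeQgKh{x_i=0}$, which could be as small as $\frac{1}{12}$, and that the color-overlap factors $\AfBMaxZI{\sum_j \yJgNrafg{i}}$ (of absolute value at most $1$ but possibly much smaller) do not hurt us — indeed they only help, since they can only decrease the overlap with $\raIfJZxu{u_{2^n}}$. Getting the absolute constants to line up with the stated $200^{-1}$ and $\frac{1}{24}$ is the one place where care is needed; the structural argument itself is the same amplitude-concentration idea as in Blier--Tapp and Chen--Drucker, specialized to the exponentially small gap regime where we do not need the PCP-strength version.
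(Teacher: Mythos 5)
Your proposal is correct and follows essentially the same route as the paper: conditioning on the color measurement yielding $0$ (which happens with probability at least $\frac{1}{12}$ since $i \in Z'$), your bound $\AfBMaxZI{\delta_v} \leq \AfBMaxZI{\vXoALQtP{i}}$ is exactly the content of \dWQdKYxc{KHySvfFa}, and your Cauchy--Schwarz estimate on the overlap deficit is the content of \dWQdKYxc{tAaagiyx}, after which the constants combine as $\frac{1}{12} \cdot \frac{1}{16 \cdot 4^n} > \frac{1}{200 \cdot 4^n}$ just as in the paper. The only differences are cosmetic (contrapositive versus contradiction, and re-deriving the two auxiliary lemmas inline rather than citing them); the small internal inconsistency between your ``order $4^{-n}$'' and ``order $2^{-n}$'' statements for the overlap deficit is harmless, since either bound suffices.
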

\begin{proof}
Let's pick an $i \in Z'$, and consider the state
\raIfJZxu{\varphi_i} on register $\VSdxZJNO{N}_i \VSdxZJNO{C}_i$.
Suppose that we measured $0$ on $\VSdxZJNO{C}_i$ with
the measurement of \MOVPSHbF{NlcnHZqK},
and denote the resulting state on $\VSdxZJNO{N}_i$ by
\[ \raIfJZxu{\xi_i} = \sum_{v=0}^{2^n-1} \zZTHRpCM \raIfJZxu{v} . \]
Note that since $i \in Z'$, this outcome happens with
probability $ \geq \frac{1}{12} $.
Assume towards contradiction that $ \exists v $ \ygMPbVVP{}
$ \AfBMaxZI{\zZTHRpCM}^2 < \frac{1}{2 \cdot 2^n} $.
Using \dWQdKYxc{tAaagiyx}
from page~\pageref{tAaagiyx},
we get that when we measure \raIfJZxu{\xi_i} with the measurement
of \MOVPSHbF{NlcnHZqK},
we get outcome $1$ with probability at least
$ \frac{1}{16 \cdot 4^{n}} $.
But this means that the Uniformity Test rejects with
probability at least
$ \frac{1}{12 \cdot 16 \cdot 4^{n}} >
\frac{1}{200 \cdot 4^{n}} $.
This contradicts to the statement of the lemma,
so it must be that
$ \AfBMaxZI{\zZTHRpCM}^2 \geq \frac{1}{2 \cdot 2^n} $
for all $v$.
\dWQdKYxc{KHySvfFa} implies that
for all $v$,
\[ \AfBMaxZI{\vXoALQtP{i}}^2 \geq
\frac{1}{12 \cdot 2 \cdot 2^n} . \qedhere \]
\end{proof}
We are now ready to prove soundness for \wQoFYwHz{V}.
\begin{lem}[Soundness]
If $ \oVrArWkh \notin \HNGSAaAW $ then \wQoFYwHz{V} of
\FxucuSoM{NuAJSVLc} will reject with
probability at least $ 12000^{-1} \cdot 4^{-n} $.
\label{gFZRHSOy}
\end{lem}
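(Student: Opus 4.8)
The plan is to argue by contradiction, mimicking the analysis of Blier and Tapp and of Chen and Drucker: assume that \wQoFYwHz{V} of \FxucuSoM{NuAJSVLc} rejects with probability strictly less than $ 12000^{-1} \cdot 4^{-n} $, and from this extract a proper $3$-coloring of $G$, contradicting $ \oVrArWkh \notin \HNGSAaAW $. Write each quantum input as $ \raIfJZxu{\varphi_i} = \sum_{v=0}^{2^n-1} \vXoALQtP{i} \raIfJZxu{v} \sum_{j=0}^2 \yJgNrafg{i} \raIfJZxu{j} $ as in the paragraph preceding \dWQdKYxc{vzMhnkxb} (for soundness we may take the $k$ proofs to be a tensor product of pure states). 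Since \FxucuSoM{NuAJSVLc} runs the Consistency Test and the Uniformity Test each with probability $ \frac{1}{2} $, the assumption forces each of them, conditioned on being chosen, to reject with probability less than $ 6000^{-1} \cdot 4^{-n} $. In particular the Uniformity Test rejects with probability at most $ 200^{-1} \cdot 4^{-n} $, so \dWQdKYxc{UDjzaeqm} yields $ \AfBMaxZI{\vXoALQtP{i}}^2 > \frac{1}{24 \cdot 2^n} $ for every $ i \in Z' $ and every $ v \in \HelzWRDf{2^n} $; and since $ 6000^{-1} \cdot 4^{-n} $ is (for $n$ large enough, which we may assume as $ k = \uaivIgOh{n} $) smaller than the $ \uaivIgOh{1} $ rejection probability guaranteed by \dWQdKYxc{vzMhnkxb}, we must have $ \AfBMaxZI{Z'} > k/6 \geq 2 $.

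Next I would fix an arbitrary $ i^\ast \in Z' $ and define the candidate coloring $ \chi(v) \faJHPjTb \argmax_{a \in \GGfSpoIn{0,1,2}} \AfBMaxZI{\beta_{v,a}^{(i^\ast)}}^2 $, so that $ \AfBMaxZI{\beta_{v,\chi(v)}^{(i^\ast)}}^2 \geq \frac{1}{3} $ for all $v$. The key claim is that $ \chi(v) $ is in fact the strict majority color of every proof in $ Z' $: given $ j \in Z' \setminus \GGfSpoIn{i^\ast} $, the pair $ (i^\ast, j) $ of the Consistency Test rejects at line~\ref{FMIMsMqc} in particular whenever $ v_{i^\ast} = v_j = v $, $ c_{i^\ast} = \chi(v) $ and $ c_j \neq \chi(v) $, an event of probability $ \AfBMaxZI{\vXoALQtP{i^\ast}}^2 \AfBMaxZI{\vXoALQtP{j}}^2 \AfBMaxZI{\beta_{v,\chi(v)}^{(i^\ast)}}^2 \DnoYoBfx{1 - \AfBMaxZI{\beta_{v,\chi(v)}^{(j)}}^2} > \frac{1}{1728 \cdot 4^n} \DnoYoBfx{1 - \AfBMaxZI{\beta_{v,\chi(v)}^{(j)}}^2} $ by independence of the computational-basis measurements on the different registers. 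As this must be below $ 6000^{-1} \cdot 4^{-n} $, we get $ \AfBMaxZI{\beta_{v,\chi(v)}^{(j)}}^2 > 1 - \frac{1728}{6000} > \frac{1}{2} $ for every $v$, so $\chi$ is well defined independently of $ i^\ast $ and $ \AfBMaxZI{\beta_{v,\chi(v)}^{(i)}}^2 \geq \frac{1}{3} $ holds for \emph{all} $ i \in Z' $.

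Finally I would check that $ \chi $ restricted to $ V $ properly colors $ G $. Suppose $ (u,v) \in E $ with $ u < v $ and $ \chi(u) = \chi(v) = a $, and pick $ i_1 < i_2 $ in $ Z' $ (possible since $ \AfBMaxZI{Z'} \geq 2 $). Then the pair $ (i_1, i_2) $ rejects at line~\ref{eIpHlpwC} whenever $ v_{i_1} = u $, $ v_{i_2} = v $ and $ c_{i_1} = c_{i_2} = a $ --- since then $ \HxLQfJAj{\wQoFYwHz{\oVrArWkh}}{u,v} = 11 $ --- an event of probability at least $ \AfBMaxZI{\alpha_u^{(i_1)}}^2 \AfBMaxZI{\beta_{u,a}^{(i_1)}}^2 \AfBMaxZI{\alpha_v^{(i_2)}}^2 \AfBMaxZI{\beta_{v,a}^{(i_2)}}^2 > \frac{1}{24 \cdot 2^n} \cdot \frac{1}{3} \cdot \frac{1}{24 \cdot 2^n} \cdot \frac{1}{3} = \frac{1}{5184 \cdot 4^n} $, using \dWQdKYxc{UDjzaeqm} for the amplitudes on the node registers and the previous paragraph for those on the color registers. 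Since $ \frac{1}{5184} > \frac{1}{6000} $, this again contradicts the bound on the Consistency Test's rejection probability. Hence no such edge exists, $ \chi\vert_V $ is a proper $3$-coloring, $ \oVrArWkh \in \HNGSAaAW $, a contradiction; therefore \wQoFYwHz{V} rejects with probability at least $ 12000^{-1} \cdot 4^{-n} $. I expect the main obstacle to be purely the constant-chasing: arranging that both ``$ c^{-1} 4^{-n} > 6000^{-1} 4^{-n} $'' inequalities (with $ c = 1728/(1-\AfBMaxZI{\beta}^2) $ in one case and $ c = 5184 $ in the other) survive with the single clean soundness bound $ 12000^{-1} \cdot 4^{-n} $, and making sure the ``$n$ large enough'' proviso coming from \dWQdKYxc{vzMhnkxb} is absorbed (e.g.\ into the constant hidden in $ k = \uaivIgOh{n} $).
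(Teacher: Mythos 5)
Your proof is correct and follows essentially the same route as the paper's: both rest on \dWQdKYxc{vzMhnkxb} and \dWQdKYxc{UDjzaeqm}, extract $\argmax$ colorings from the color amplitudes of provers in $Z'$, and exploit the two rejection lines of the Consistency Test with the same constants ($1/5184 > 1/6000$ per test, halved for the choice of test). The only difference is organizational --- you argue by contradiction, first forcing all provers in $Z'$ to agree on one coloring and then forcing that coloring to be proper, whereas the paper picks two provers $q,r \in Z'$ and case-splits directly on whether their $\argmax$ colorings agree --- which is just a contrapositive rewriting of the same argument.
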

\begin{proof}
Suppose that the Uniformity Test rejects with probability at most
$ \frac{1}{200 \cdot 4^{n}} $,
as otherwise we are done.
From \dWQdKYxc{vzMhnkxb}, $ \AfBMaxZI{Z'} > \frac{k}{6} $.
Since $ \frac{k}{6} = \uaivIgOh{n} $, we can always take
$ k \geq 12 $ so we have $ \AfBMaxZI{Z'} > 2 $.
Let's pick two elements $ q,r \in Z' $.
We define two colorings $c_1$ and $c_2$ the
following way,
\begin{align*}
\HxLQfJAj{c_1}{v} &\faJHPjTb \argmax_j \AfBMaxZI{\yJgNrafg{q}}
\intertext{and similarly}
\HxLQfJAj{c_2}{v} &\faJHPjTb \argmax_j \AfBMaxZI{\yJgNrafg{r}} \text{,}
\end{align*}
for all $ v \in \HelzWRDf{2^n} $.
If the maximum is not well defined then just choose
an arbitrary $j$ for which \AfBMaxZI{\yJgNrafg{.}} is maximal.
From \dWQdKYxc{UDjzaeqm}
the probability that we get \DnoYoBfx{v, \HxLQfJAj{c_1}{v}}
when we measure \raIfJZxu{\varphi_q} in the standard basis is
at least $ \AfBMaxZI{\vXoALQtP{q}}^2 \cdot \frac{1}{3} >
\frac{1}{72 \cdot 2^n} $, for all $v$,
and the same lower-bound is true for getting
\DnoYoBfx{v, \HxLQfJAj{c_2}{v}}, when measuring \raIfJZxu{\varphi_r}.
We split the rest of the proof into two cases.
\begin{itemize}
\item Suppose that the two colorings are different, \WpedDmmP{}
$ \exists v $ \ygMPbVVP{} $ \HxLQfJAj{c_1}{v} \neq \HxLQfJAj{c_2}{v} $.
In this case in line~\ref{qAGbukNT}
we get \DnoYoBfx{v, \HxLQfJAj{c_1}{v}} when measuring
$\VSdxZJNO{N}_q \VSdxZJNO{C}_q$ and
\DnoYoBfx{v, \HxLQfJAj{c_2}{v}} when measuring $\VSdxZJNO{N}_r \VSdxZJNO{C}_r$
with probability at least
$ \DnoYoBfx{\frac{1}{72 \cdot 2^n}}^2 >
\frac{1}{6000 \cdot 4^{n}} $.
It means that with at least the above probability
the Consistency Test will reject in
line~\ref{FMIMsMqc}.
\item Suppose that the two colorings are the same, \WpedDmmP{}
$ \forall v : \HxLQfJAj{c_1}{v} = \HxLQfJAj{c_2}{v} $.
Since $G$ is not 3-colorable,
$ \exists v_1, v_2 \in \HelzWRDf{2^n} $ \ygMPbVVP{}
\DnoYoBfx{v_1, v_2} is an edge in $G$ and
$ \HxLQfJAj{c_1}{v_1} = \HxLQfJAj{c_1}{v_2} $, or equivalently
$ \HxLQfJAj{\oVrArWkh}{v_1, v_2} = 11 $.
Similarly as above, with probability at least
$ \frac{1}{6000 \cdot 4^{n}} $,
we get \DnoYoBfx{v_1, \HxLQfJAj{c_1}{v_1}} when measuring
$\VSdxZJNO{N}_q \VSdxZJNO{C}_q$ and \DnoYoBfx{v_2, \HxLQfJAj{c_1}{v_2}}
when measuring $\VSdxZJNO{N}_r \VSdxZJNO{C}_r$
in line~\ref{qAGbukNT}
of the algorithm.
In this case the Consistency Test will reject with
at least the above probability in
line~\ref{eIpHlpwC}.
\end{itemize}
Since in both cases the Consistency Test rejects with
probability at least $ \frac{1}{6000 \cdot 4^{n}} $,
and the test is chosen with probability $\frac{1}{2}$,
the lemma follows.
\end{proof}
\begin{proof}[Proof of \MJqbMYrm{okAAwxmt}]
Note that \FxucuSoM{NuAJSVLc} runs in
polynomial time.
Furthermore, for both the Consistency and
the Uniformity Test, the algorithm starts with
measuring all the quantum registers according to a
fixed measurement.
So \wQoFYwHz{V} is a proper \cdToKcBp{} verifier.
\dWQdKYxc{iQoQORKN} shows that the
completeness of the protocol is
$ c > 1 - 2^{- \frac{k}{40}} $, while
\dWQdKYxc{gFZRHSOy} shows that the
soundness is $ s < 1 - 12000^{-1} \cdot 2^{-2n} $.
If $ k \geq 120 n $ then $ c - s = \uaivIgOh{2^{-2n}} $
so the theorem follows.
\end{proof}
\section{Conclusions and Open Problems}
\label{HujYVLzw}
In this section we discuss some of the consequences
of the previous results, \WpedDmmP{}
the consequences of \MJqbMYrm{NmPOZGKI} and
\ref{nKVfzRHR}.
We also raise some related open problems.
\subsection{Tightness of the Soundness Analyses}
One can observe that both the \lrcGeBqo{2} verifier
of \FxucuSoM{jEVcePOJ} and the \cdToKcBp{}
verifier of \FxucuSoM{NuAJSVLc}
have soundness parameter $ 1 - \uaivIgOh{4^{-n}} $
and gap $\uaivIgOh{4^{-n}}$.
(As shown by \dWQdKYxc{SaalgorQ}
and \dWQdKYxc{gFZRHSOy}.)
Note that this bound is tight up to a constant
factor in case of \FxucuSoM{jEVcePOJ} and
tight up to some low-order term in case of
\FxucuSoM{NuAJSVLc}.
The reason for this is the same as what was
observed in one of the remark in \cite{Chiesa2011}.
\par
The argument is briefly the following.
Suppose that $ \oVrArWkh \notin \HNGSAaAW $ and
that $G$ is such, that there exist a coloring
\ygMPbVVP{} only one pair of nodes are colored inconsistently.
If the prover gives states of the form defined by
equation~\eqref{BGuJsFqG} but using this
coloring, then the verifier won't notice this
in either of the Uniformity Tests nor in the
Equality Test.
The only place where the verifier can catch the prover
is in the Consistency Test when it checks the colors
of the nodes according to the constraints posed by the graph $G$.
The prover gets caught if the verifier gets
the inconsistently colored nodes in the measurement outputs.
This happens with probability \QUPOtzml{2^{-2n}}
in case of \FxucuSoM{jEVcePOJ} and
\QUPOtzml{n^2 2^{-2n}} in case of \FxucuSoM{NuAJSVLc}.
This means that it is possible to fool the verifier
of \FxucuSoM{jEVcePOJ} with probability
$ 1 - \QUPOtzml{4^{-n}} $ and to fool the verifier of
\FxucuSoM{NuAJSVLc} with probability
$ 1 - \CNzHFRry{4^{-n}} $.
\subsection{\texorpdfstring{Separation Between \SHvwttaw{} and \lrcGeBqo{2}
in the Small-Gap Setting}
{Separation Between QMA and QMA[2] in the Small-Gap Setting}}
\label{BhXaffZV}
As we said in the Introduction, it is a
big open problem whether \SHvwttaw{} is equal to
\lrcGeBqo{2} or not, and we mentioned some evidences
that suggest us that they are not equal.
Here we give another evidence, \WpedDmmP{}
we show that under plausible complexity-theoretic
assumptions \lrcGeBqo{2} is \emph{strictly} more
powerful than \SHvwttaw{} in the low-gap setting.
We elaborate on this in the following.
\par
\MJqbMYrm{NmPOZGKI} shows
that \lrcGeBqo{2} with exponentially or double-exponentially
small gap is exactly characterized by \vxJDyEiW.
So it is natural to ask, what is the power of
\SHvwttaw{} with the same gap, or what upper-bounds can
we give for it?
In a related paper, Ito \QgZuCZdC\ \cite{Ito2010}
showed that \XaOdbtsT{}s (or the class \JWevirQe) with double-exponentially
small gap are exactly characterized by \QZccOLoE.
Since \JWevirQe{} contains \SHvwttaw{} with the same gap,
we have a separation between \SHvwttaw{} and \lrcGeBqo{2}
in the setting where the gap is exponentially or
double-exponentially small, unless $ \QZccOLoE = \vxJDyEiW $.
Note that the result of Ito \QgZuCZdC\ is quite involved.
But if we are only interested in the upper-bound
on \SHvwttaw, then we can give a very simple argument for it,
which we state and prove in the following lemma.
\begin{lem}
$ \displaystyle \bigcup_{\substack{0 < s < c \leq 1, \\
c-s \geq 2^{ -2^{\ipeoICuS} }}}
\KOzxJkVZ{\ipeoICuS}{1}{c}{s}
\subseteq \QZccOLoE $,
where \HxLQfJAj{c}{n} and \HxLQfJAj{s}{n} can be calculated in time
at most exponential in $n$ on a classical computer.
\label{XYHrJdGo}
\end{lem}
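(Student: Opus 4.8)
The plan is to follow the proof of \dWQdKYxc{ULcFIvTl}, but to exploit the fact that with a single prover there is nothing to guess: the maximum acceptance probability of a \SHvwttaw{} verifier is the largest eigenvalue of an explicitly computable positive semidefinite operator, and a deterministic exponential-time algorithm can approximate that eigenvalue accurately enough to decide the language.

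In detail, fix $ L \in \KOzxJkVZ{\ipeoICuS}{1}{c}{s} $ with $c$ and $s$ as in the statement, witnessed by a verifier $V$, and fix an input $x$ with $ n = \AfBMaxZI{x} $. Write $ p = \wrDoIJTw{n} $ for the number of proof qubits and $ a = \wrDoIJTw{n} $ for the number of ancilla qubits of $ \HxLQfJAj{V}{x} $, let $U$ be the $ 2^{p+a} \times 2^{p+a} $ unitary implemented by the polynomial-size circuit $ \HxLQfJAj{V}{x} $, and let
\[ Q_x \faJHPjTb \DnoYoBfx{\YMElKjII_p \NUxdODWZ \fHAwcqZo{0^a}} \, U^{\dagger} \DnoYoBfx{\GyFczGjW{1} \NUxdODWZ \YMElKjII} U \, \DnoYoBfx{\YMElKjII_p \NUxdODWZ \raIfJZxu{0^a}} , \]
where the first tensor factor of $ \GyFczGjW{1} \NUxdODWZ \YMElKjII $ acts on the designated output qubit. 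Then the acceptance probability of $ \HxLQfJAj{V}{x} $ on a proof \raIfJZxu{\psi} equals $ \fHAwcqZo{\psi} Q_x \raIfJZxu{\psi} $, the operator satisfies $ 0 \preceq Q_x \preceq \YMElKjII $, and the best acceptance probability over all proofs equals $ \lambda_{\max}\DnoYoBfx{Q_x} $; hence $ x \in L $ forces $ \lambda_{\max}\DnoYoBfx{Q_x} \geq \HxLQfJAj{c}{n} $ and $ x \notin L $ forces $ \lambda_{\max}\DnoYoBfx{Q_x} \leq \HxLQfJAj{s}{n} $. The \QZccOLoE{} algorithm will therefore (i) compute $U$ by multiplying the $\wrDoIJTw{n}$ constant-locality gate matrices and form $Q_x$, which is $ 2^{\wrDoIJTw{n}} $ arithmetic operations since every matrix has dimension $ 2^{\wrDoIJTw{n}} $; (ii) compute $ \HxLQfJAj{c}{n} $ and $ \HxLQfJAj{s}{n} $, possible in exponential time by hypothesis; (iii) approximate $ \lambda_{\max}\DnoYoBfx{Q_x} $ to additive error $ \DnoYoBfx{\HxLQfJAj{c}{n} - \HxLQfJAj{s}{n}}/4 $ by standard numerical linear algebra — e.g.\ via the characteristic polynomial and isolation of its largest root, or by reading the inertia of $ Q_x - t\YMElKjII $ from a symmetric factorization at the threshold $ t = \DnoYoBfx{\HxLQfJAj{c}{n}+\HxLQfJAj{s}{n}}/2 $; and (iv) accept \iZJgfzde{} the computed value is at least $ \DnoYoBfx{\HxLQfJAj{c}{n}+\HxLQfJAj{s}{n}}/2 $, which decides $L$ correctly because $ \HxLQfJAj{c}{n} > \HxLQfJAj{s}{n} $.

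The one point requiring genuine care — and the step I expect to be the main obstacle — is the precision/running-time bookkeeping in (i) and (iii), since the gap may be as small as $ 2^{-2^{\wrDoIJTw{n}}} $, so the target precision is $ 2^{-2^{\wrDoIJTw{n}}} $. One has to check that carrying all computations with $ 2^{\wrDoIJTw{n}} $ bits suffices: the gate entries $ 0, \pm 1, \pm \frac{1}{\sqrt{2}} $ are approximable to that many bits in exponential time, errors in a product of $\wrDoIJTw{n}$ operators of norm at most $1$ accumulate only additively, and the largest eigenvalue of a $ 2^{\wrDoIJTw{n}} \times 2^{\wrDoIJTw{n}} $ Hermitian matrix can be pinned down to $ 2^{\wrDoIJTw{n}} $ bits in time polynomial in the dimension and in the number of precision bits, hence $ 2^{\wrDoIJTw{n}} $ overall. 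Since every bound in sight is $ 2^{\wrDoIJTw{n}} $, the whole procedure is an \QZccOLoE{} computation and $ L \in \QZccOLoE $. This is precisely where the single-prover restriction is used: with two or more provers the relevant quantity is a maximization of a multilinear form over a product of spheres rather than one eigenvalue, for which no comparably efficient procedure is known — consistent with \MJqbMYrm{NmPOZGKI}.
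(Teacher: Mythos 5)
Your proposal is correct and follows essentially the same route as the paper's own (much terser) proof sketch: reduce the single-prover case to computing the largest eigenvalue of the exponential-dimension acceptance operator of the verifier, and note that an \QZccOLoE{} machine can form this matrix and approximate that eigenvalue to $2^{\wrDoIJTw{n}}$ bits, which suffices against a gap of $2^{-2^{\wrDoIJTw{n}}}$. Your write-up simply makes explicit the precision bookkeeping that the paper leaves implicit.
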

\begin{proof}[Proof sketch]
Let $x$ be an input to \KOzxJkVZ{\ipeoICuS}{1}{c}{s} with $c$ and
$s$ having the above property.
The action of the verifier can be described by a
binary-valued measurement
\GGfSpoIn{\nVzyhtCm{P}_0^x, \nVzyhtCm{P}_1^x} on
the proof state, where $\nVzyhtCm{P}_1^x$ corresponds to
acceptance and $\nVzyhtCm{P}_0^x$ corresponds to
rejecting.
Note that the maximum acceptance probability of the
verifier is equal to the spectral norm of
$\nVzyhtCm{P}_1^x$.
(Or in other words the biggest eigenvalue of
$\nVzyhtCm{P}_1^x$.)
Since the proof is on \ipeoICuS-many qubits, the
dimension of $\nVzyhtCm{P}_1^x$ is exponential.
An \QZccOLoE-machine, knowing $x$, can approximate
$\nVzyhtCm{P}_1^x$ with up to an exponential amount
of digits of accuracy.
This is because the verifier is a uniform
quantum circuit of polynomial size.
Now the \QZccOLoE-machine can approximate the spectral
norm of $\nVzyhtCm{P}_1^x$ up to an exponential
amount of bits of accuracy.
\end{proof}
\subsubsection{One-Sided Error Case}
Note that the \vxJDyEiW{} characterization of the small-gap
\lrcGeBqo{2} proof system still holds if we restrict
the proof system to have one-sided error.
Moreover, it is not known whether \SHvwttaw{} can be
made to have one-sided error, so we can investigate
the relation between these classes as well.
Interestingly, it turns out that we can state
an even stronger separation in this case.
This is due to a result by Ito \QgZuCZdC\ \cite{Ito2010}.
\begin{thm}[Theorem~11 of \cite{Ito2010}]
$ \KOzxJkVZ{\ipeoICuS}{1}{1}{<1} \subseteq \LHqjxWJM $.
\end{thm}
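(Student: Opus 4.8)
The plan is to squeeze one more drop out of the ``just compute the maximum acceptance probability'' idea already used for \dWQdKYxc{XYHrJdGo}, now taking advantage of the fact that the completeness parameter is \emph{exactly} $1$. Let $ L \in \KOzxJkVZ{\ipeoICuS}{1}{1}{<1} $ with verifier \wQoFYwHz{V}, and for an input $x$ with $ n = \AfBMaxZI{x} $ describe the action of \HxLQfJAj{\wQoFYwHz{V}}{x} on the \wrDoIJTw{n}-qubit proof register by a two-outcome measurement $ \GGfSpoIn{\nVzyhtCm{P}_0^x, \nVzyhtCm{P}_1^x} $, exactly as in the proof of \dWQdKYxc{XYHrJdGo}, with $\nVzyhtCm{P}_1^x$ the acceptance operator, so that the maximum acceptance probability equals the top eigenvalue of $ \nVzyhtCm{P}_1^x $. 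The first step is the observation that, because $ 0 \preceq \nVzyhtCm{P}_1^x \preceq \YMElKjII $, we have $ x \in L $ \iZJgfzde{} $ 1 $ is an eigenvalue of $ \nVzyhtCm{P}_1^x $ \iZJgfzde{} $ \YMElKjII - \nVzyhtCm{P}_1^x $ is singular: if $ x \in L $, a unit witness \raIfJZxu{\psi} with $ \fHAwcqZo{\psi} \nVzyhtCm{P}_1^x \raIfJZxu{\psi} = 1 $ forces $ \nVzyhtCm{P}_1^x \raIfJZxu{\psi} = \raIfJZxu{\psi} $; if $ x \notin L $, then $ \fHAwcqZo{\psi} \nVzyhtCm{P}_1^x \raIfJZxu{\psi} < 1 $ for every unit vector, and since the maximum over the unit sphere is attained, $1$ is not an eigenvalue. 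So deciding $L$ reduces to deciding singularity of $ \YMElKjII - \nVzyhtCm{P}_1^x $.

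Next I would check, just as in \dWQdKYxc{XYHrJdGo}, that the entries of $ \nVzyhtCm{P}_1^x $ are explicitly computable in polynomial space and, after clearing a common denominator, are small integers. Write $ g = \wrDoIJTw{n} $ for the number of gates of \HxLQfJAj{\wQoFYwHz{V}}{x} and $ h \leq g $ for the number of \lDGygnkY{} gates among them; since the only non-permutation gate available is \lDGygnkY{} (the verifier uses only \lDGygnkY, \gctqElrH{} and classical computation), every amplitude $ \NChmtXsR{c}{\HxLQfJAj{\wQoFYwHz{V}}{x} \, d} $ equals $ 2^{-h/2} $ times a signed count of computational paths, hence $ 2^{-h/2} $ times an integer of absolute value at most $ 2^h $; this integer is a sum over exponentially many paths, each path specified by \wrDoIJTw{n} bits, so it is computable in \wrDoIJTw{n} space with the running partial sum always a \wrDoIJTw{n}-bit integer (this is the \XaOdbtsT-free version of the path-integral simulation behind $ \VGZgApSi \subseteq \LHqjxWJM $). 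Expanding $ \nVzyhtCm{P}_1^x $ in terms of these amplitudes, each entry of the Hermitian matrix $ M \faJHPjTb 2^{h}\DnoYoBfx{\YMElKjII - \nVzyhtCm{P}_1^x} $ is an integer of bit-length \wrDoIJTw{n} computable in \wrDoIJTw{n} space, while $M$ has dimension $ 2^{\wrDoIJTw{n}} $, and by the previous paragraph $ x \in L $ \iZJgfzde{} $ \det M = 0 $.

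The last step is to decide whether this succinctly specified, exponential-dimensional integer matrix is singular, in polynomial space. The value $ \det M $ itself has up to $ 2^{\wrDoIJTw{n}} $ bits by Hadamard's bound, so it cannot be written down; instead one computes $ \det M \bmod p $ for every prime $ p $ up to a bound $ 2^{\wrDoIJTw{n}} $ large enough that $ \prod_p p > 2\AfBMaxZI{\det M} $, and reports ``singular'' exactly when all these residues vanish. Each prime has \wrDoIJTw{n} bits, there are at most $ 2^{\wrDoIJTw{n}} $ of them, and they can be enumerated and primality-tested one at a time in \wrDoIJTw{n} space; and $ \det M \bmod p $ is computed by a log-space-uniform, division-free $ \mathsf{NC}^{2} $ determinant algorithm over $ \jVGfTsvf{F}_p $, which, since both the logarithm of the dimension of $M$ and the bit-length of $p$ are \wrDoIJTw{n}, runs in $ \mathsf{NC}^{2} \subseteq \mathsf{DSPACE}\DnoYoBfx{\log^{2}} = \wrDoIJTw{n} $ space (equivalently, one may invoke that testing singularity of an integer matrix is in $ \mathsf{NC}^{2} $). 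Composing the space-bounded pieces --- computing an entry of $M$, feeding entries to the determinant routine, looping over primes --- yields a single polynomial-space machine, so $ L \in \LHqjxWJM $. I expect the only delicate point, and the main obstacle, to be exactly this bookkeeping: the dimension of $M$ and the magnitude of $ \det M $ are respectively exponential and doubly exponential in $n$, so at no stage may either be materialised, and everything must be handled through poly-space-computable entries and residues modulo poly-sized primes, leaning on the closure of $ \LHqjxWJM $ under the relevant compositions. The restriction to the \lDGygnkY{} and \gctqElrH{} gates is a genuine convenience here, since it keeps all amplitudes in $ \jVGfTsvf{Z}\BKbTllLo{1/\sqrt{2}} $ and makes $M$ literally an integer matrix; a richer universal gate set would push the argument over a larger ring of algebraic numbers.
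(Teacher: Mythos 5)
The paper does not prove this statement itself --- it is imported verbatim as Theorem~11 of \cite{Ito2010} --- so there is no internal proof to compare against. Your argument is correct and is essentially the argument behind the cited result: perfect completeness turns the acceptance condition into an exact algebraic one ($1$ being an eigenvalue of the acceptance operator, equivalently singularity of an exponentially large integer matrix whose entries are \ipeoICuS-space computable via path sums over the \lDGygnkY/\gctqElrH{} circuit), which is then decided by scaling up $\mathsf{NC}^2$ linear algebra (determinant modulo polynomially many-bit primes, combined by the Chinese remainder theorem) to \LHqjxWJM{} on succinct inputs. The bookkeeping you flag as delicate --- never materialising the matrix or its doubly exponential determinant, and composing space-bounded subroutines --- is handled correctly.
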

This means that in the one-sided error case \lrcGeBqo{2}
with exponentially small gap is \emph{strictly}
more powerful then \SHvwttaw{} with even unbounded gap,
unless $ \LHqjxWJM = \vxJDyEiW $!
\subsection{Nonexistence of Disentanglers}
The discussions in \HVRLKXNY{BhXaffZV}
have an interesting consequence to the existence
question of certain operators called disentanglers.
They were defined by Aaronson \QgZuCZdC\ \cite{Aaronson2009},
as the following.
\begin{defi}[Definition~40 of \cite{Aaronson2009}]
Let us have a superoperator
\FyOFSCmf{\Phi} {\KKIedNYP{\VWqEPttf^N}}
{\KKIedNYP{\VWqEPttf^M \NUxdODWZ \VWqEPttf^M}}.\footnote{We
denote the set of all density operators on space
\zdLEuPyh{H} by \KKIedNYP{\zdLEuPyh{H}}.}
We say that $\Phi$ is an \DnoYoBfx{\varepsilon,
\delta}-disentangler if
\begin{itemize}
\item \HxLQfJAj{\Phi}{\rho} is $\varepsilon$-close
to a separable state for every $\rho$, and
\item for every separable state $\sigma$,
there exists a $\rho$ \ygMPbVVP{} \HxLQfJAj{\Phi}{\rho} is
$\delta$-close to $\sigma$.
\end{itemize}
\label{IYzbetOO}
\end{defi}
Note that if there exist a
\DnoYoBfx{\frac{1}{\wrDoIJTw{\log M}}, \frac{1}{\wrDoIJTw{\log M}}
}-disentangler with $ \log N = \wrDoIJTw{\log M} $,
and if that disentangler can be implemented in
\wrDoIJTw{\log M}-time, then $ \SHvwttaw = \lrcGeBqo{2} $.
So it is not believed that such a disentangler exist.
Towards proving this, Aaronson \QgZuCZdC\ showed that no \DnoYoBfx{0,0}-disentangler
exists for any finite $N$ and $M$.
The discussion in the previous section implies that
there exist no disentangler with approximation error
inverse of the square of the dimension.
More precisely we have the following corollary.
\begin{cor}
There exist a function $ \HxLQfJAj{f}{M} = \uaivIgOh{M^{-2}} $,
\ygMPbVVP{} there exist no \wrDoIJTw{\log M}-time implementable
\DnoYoBfx{\HxLQfJAj{f}{M}, \HxLQfJAj{f}{M}
}-disentangler with $ \log N = \wrDoIJTw{\log M} $,
unless $ \QZccOLoE = \vxJDyEiW $.
\end{cor}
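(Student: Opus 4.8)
The plan is to derive this corollary directly from the separation argument in the preceding subsections, essentially by contrapositive. Suppose, for the sake of contradiction, that for \emph{every} function $f$ with $\HxLQfJAj{f}{M} = \uaivIgOh{M^{-2}}$ there \emph{does} exist a \wrDoIJTw{\log M}-time implementable $\DnoYoBfx{\HxLQfJAj{f}{M}, \HxLQfJAj{f}{M}}$-disentangler with $\log N = \wrDoIJTw{\log M}$, and that $\QZccOLoE \neq \vxJDyEiW$. I would pick a particular such $f$ tuned to the gap in \MJqbMYrm{NmPOZGKI}: since that theorem gives a \lrcGeBqo{2} protocol with gap $\uaivIgOh{4^{-n}}$ where each proof is on $M$ qubits with $\log M = \wrDoIJTw{n}$, the relevant scale is $\varepsilon,\delta$ of order $4^{-n}$ compared to a dimension $2^M$; expressing this in terms of $M$ rather than $n$ is where the $M^{-2}$-type bound comes from (the gap is quadratically small in the ambient dimension $2^n$ of the node register, not in $M = \wrDoIJTw{n}$ itself — I will need to be slightly careful with which ``dimension'' the $M^{-2}$ refers to, matching the statement's quantifier ``there exists a function $f$'').

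Next I would run the standard disentangler-collapse argument of Aaronson \QgZuCZdC\ \cite{Aaronson2009}, but keeping track of parameters. Given the hypothesised disentangler $\Phi$, a \lrcGeBqo{2} verifier $V$ with the parameters of \MJqbMYrm{QoOYZYVy} can be simulated by a single-prover \SHvwttaw{} verifier: the single Merlin sends $\rho$, the verifier applies $\Phi$ to obtain a state that is $\HxLQfJAj{f}{M}$-close to some separable $\sigma = \raIfJZxu{\phi_1}\!\fHAwcqZo{\phi_1} \NUxdODWZ \raIfJZxu{\phi_2}\!\fHAwcqZo{\phi_2}$, and then runs $V$ on it. Completeness follows because for the honest separable proof there is a $\rho$ mapping $\delta$-close to it; soundness follows because no matter what $\rho$ Merlin sends, $\Phi(\rho)$ is $\varepsilon$-close to \emph{some} separable state, on which $V$ accepts with probability at most $1 - \uaivIgOh{4^{-n}}$, and closeness perturbs the acceptance probability by at most $O(\varepsilon) = O(\HxLQfJAj{f}{M})$. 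Choosing $f$ small enough relative to the $\uaivIgOh{4^{-n}}$ gap (this is exactly why $f$ of order (dimension)$^{-2}$, i.e.\ $4^{-n}$ up to polynomial factors in the exponent, is the right threshold), the perturbations do not close the gap, so $L \in \KOzxJkVZ{\ipeoICuS}{1}{c'}{s'}$ with $c' - s'$ still inverse-exponential and $c',s'$ classically computable in exponential time. By \dWQdKYxc{XYHrJdGo} this puts $L \in \QZccOLoE$. Applying this to a \vxJDyEiW-complete $L$ (e.g.\ \HNGSAaAW, via \MJqbMYrm{QoOYZYVy}) gives $\vxJDyEiW \subseteq \QZccOLoE$, hence $\QZccOLoE = \vxJDyEiW$, contradicting the assumption.

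I would then conclude that the hypothesised family of disentanglers cannot exist; that is, there is at least one $f = \uaivIgOh{M^{-2}}$ for which no such efficiently-implementable $\DnoYoBfx{f, f}$-disentangler exists (unless $\QZccOLoE = \vxJDyEiW$), which is exactly the statement of the corollary. The main obstacle I anticipate is purely bookkeeping rather than conceptual: tracking the perturbation of the acceptance probability through the application of $\Phi$ and through the two error parameters $\varepsilon$ and $\delta$ separately, and reconciling the ``$M^{-2}$'' in the corollary statement (phrased in the disentangler's output dimension $M$) with the ``$4^{-n}$'' gap and the ``$\log N = \wrDoIJTw{\log M}$'' relation coming from the \lrcGeBqo{2} protocol, where the proof length $M$ is polynomial in $n$ while the Hilbert-space dimension of the node register is $2^n$. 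Once those identifications are pinned down, the rest is the routine disentangler-collapse plus an invocation of \dWQdKYxc{XYHrJdGo}.
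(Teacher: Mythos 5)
Your proposal follows essentially the same route as the paper's proof: compose the hypothesised disentangler with the \lrcGeBqo{2} verifier of \MJqbMYrm{QoOYZYVy} to obtain a single-prover verifier whose completeness and soundness are each perturbed by at most $\HxLQfJAj{f}{3 \cdot 2^n}$, choose the constant in $f$ small enough that the \uaivIgOh{4^{-n}} gap survives, and invoke \dWQdKYxc{XYHrJdGo} to place the language in \QZccOLoE. The bookkeeping worry you flag resolves immediately from \MOVPSHbF{IYzbetOO}: there $M$ is the output \emph{dimension} (not a qubit count), so $M = 3 \cdot 2^n$, $ \HxLQfJAj{f}{M} = \uaivIgOh{M^{-2}} = \uaivIgOh{4^{-n}} $ matches the gap exactly, and \wrDoIJTw{\log M}-time means \wrDoIJTw{n}-time.
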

\begin{proof}
Suppose that there exist such a disentangler $\Phi$,
for $ \HxLQfJAj{f}{M} = \kappa_1 \cdot M^{-2} $,
for some constant $ \kappa_1 $ to be specified later.
Let $ L \in \vxJDyEiW $.
\MJqbMYrm{QoOYZYVy} implies that
$ L \in \KOzxJkVZ{\ipeoICuS}{2}{1}{1 - \kappa_2 \cdot 4^{-n}} $,
for some constant $ \kappa_2 $ and where the
dimension of both proof states are $ 3 \cdot 2^n $.
Let \wQoFYwHz{V} be the corresponding verifier.
We show that $ L \in \KOzxJkVZ{\ipeoICuS}{1}{c}{s} $ with
$ c-s = \uaivIgOh{4^{-n}} $ by constructing a
verifier \wQoFYwHz{W} that uses only one proof.
By \dWQdKYxc{XYHrJdGo} it holds that
$ \KOzxJkVZ{\ipeoICuS}{1}{c}{s} \subseteq \QZccOLoE $
so we get that $ \QZccOLoE = \vxJDyEiW $.
\par
We are left to define verifier \wQoFYwHz{W}.
\wQoFYwHz{W} fist applies $\Phi$ on its quantum proof
then simulates \wQoFYwHz{V} on the output of $\Phi$
and outputs whatever \wQoFYwHz{V} outputs.
Note that $\Phi$ is \ipeoICuS-time implementable and
the size of the proof of \wQoFYwHz{W} is also
polynomial.
To see completeness for \wQoFYwHz{W}, note that
there exist a state $ \raIfJZxu{\psi} \NUxdODWZ \raIfJZxu{\psi} $
with which \wQoFYwHz{V} accepts with probability $1$.
From \MOVPSHbF{IYzbetOO} there exist a
$ \rho $ \ygMPbVVP{} \HxLQfJAj{\Phi}{\rho} is
\HxLQfJAj{f}{3 \cdot 2^n}-close to
$ \raIfJZxu{\psi} \NUxdODWZ \raIfJZxu{\psi} $.
Since $ \HxLQfJAj{f}{3 \cdot 2^n} =
\frac{\kappa_1}{9} \cdot 4^{-n} $,
the probability of acceptance of \wQoFYwHz{W} is
at least $ 1 - \frac{\kappa_1}{9} \cdot 4^{-n} $.
Similarly, for the soundness of \wQoFYwHz{W},
we have that for all separable states,
\wQoFYwHz{V} accepts with probability
at most $ 1 - \kappa_2 \cdot 4^{-n} $.
Again from \MOVPSHbF{IYzbetOO},
for all $\rho$, \HxLQfJAj{\Phi}{\rho} is
\HxLQfJAj{f}{3 \cdot 2^n}-close to a separable state.
So the probability of acceptance of \wQoFYwHz{W}
is at most $ 1 - \kappa_2 \cdot 4^{-n}
+ \frac{\kappa_1}{9} \cdot 4^{-n} $.
If $ \kappa_1 $ is sufficiently small then
$ c-s = \uaivIgOh{4^{-n}} $, so the corollary follows.
\end{proof}
\subsection{\texorpdfstring{Notes on \cdToKcBp{} Proof Systems}
{Notes on BellQMA Proof Systems}}
\label{fCJGhyBh}
An interesting consequence of
\MJqbMYrm{NmPOZGKI} and
\ref{nKVfzRHR} is that
in the small-gap setting, \pUeVeClC{k} proof
systems have the same power as \lrcGeBqo{k} proof
systems if $k$ is at least a linear function of
the input length.
Such a result is not known to hold in the
normal-gap setting.
As we mentioned before, in the normal-gap
setting we now that if $k$ is constant
then \pUeVeClC{k} collapses to \SHvwttaw,
and the proof of this fact doesn't generalize
to the small-gap setting.
This means that it is an interesting open question
to figure out the power of \pUeVeClC{2} with
exponentially small gap.
There could be two possibilities:
\begin{itemize}
\item It is quite unlikely that the small-gap
version of $\pUeVeClC{2} = \vxJDyEiW$,
because it would give a very powerful proof
system, and moreover it would show that
if a verifier is restricted to Bell-measurements
then it gains a lot of extra power if we decrease the bound on
the gap; since $\pUeVeClC{2} = \SHvwttaw$ if the gap
is inverse-polynomial, but $\pUeVeClC{2} = \lrcGeBqo{2}$
if it is inverse-exponential.
\item So we conjecture that the small-gap
$\pUeVeClC{2} \subset \vxJDyEiW$.
But this would imply that somewhere
between constant and linear number of provers,
the power of \pUeVeClC{k} significantly increases.
\end{itemize}
We leave the study of this class for future work.
\subsection{Error and Proof Reduction}
Note that as a side-product of our results,
in both the \pUeVeClC{k} and \lrcGeBqo{k} proof
systems we can `amplify' the error from
double-exponentially small gap to single-exponentially
small gap.
Also in the case of \lrcGeBqo{k}, we can make the proof
system to have \emph{one-sided error},
which up to our knowledge,
has only been shown to hold for \ACroxxZA{} \cite{Jordan2011}.
(Additionally, the number of proofs can be reduced to two,
but this also follows from \cite{Harrow2010},
where an essentially different argument was used.)
\subsection{\texorpdfstring{\lrcGeBqo{k} with Small Gap and Logarithmic-Length Proofs}
{QMA[k] with Small Gap and Logarithmic-Length Proofs}}
\label{sec:QMAlog}
We also examine multi-prover \SHvwttaw{} proof systems
with small gap and where each proof consist of at most
\FIqGvTZG{n} qubits, in the hope that we can separate them
from the ones that have \wrDoIJTw{n}-length proofs.
Unfortunately we were not able to do this in general,
only in the case where the gap is inverse-exponential,
which follows from a few simple observations.
If the gap is smaller, say double-exponentially small
or unbounded, then we give a lemma which simplifies the
proof system by converting it to one with single-qubit
proofs without changing the order of magnitude of the gap.
The details follow.
\begin{lem}
$ \displaystyle \sVwdElyU =
\bigcup_{\substack{0 < s < c \leq 1, \\ c-s \geq 2^{ - \ipeoICuS }}}
\KOzxJkVZ{\FIqGvTZG{n}}{\wrDoIJTw{n}}{c}{s} $,
where \HxLQfJAj{c}{n} and \HxLQfJAj{s}{n} can be calculated in
\wrDoIJTw{n}-time on a classical computer.
\end{lem}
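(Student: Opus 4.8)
**

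\begin{proof}[Proof sketch]
For the containment of $ \sVwdElyU $ in the right-hand side my plan is to use no proof system at all: $ L \in \sVwdElyU $ \iZJgfzde{} there are a polynomial-time probabilistic machine $M$ and a polynomial $p$ with $ \Pr[M(x)\text{ accepts}] \geq \frac{1}{2}+2^{-p(n)} $ for $ x \in L $ and $ \leq \frac{1}{2} $ for $ x \notin L $ (the inverse-exponential gap is automatic, since $M$ has $ 2^{\ipeoICuS} $ computation paths), and a quantum verifier can simulate $M$ exactly --- each coin flip being one $ \lDGygnkY $ gate on a fresh $ \raIfJZxu{0} $ ancilla followed by a standard-basis measurement --- while \emph{ignoring} its proof registers. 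This puts $L$ in $ \KOzxJkVZ{\FIqGvTZG{n}}{\wrDoIJTw{n}}{\frac{1}{2}+2^{-p(n)}}{\frac{1}{2}} $, one of the classes in the union, with polynomial-time computable thresholds and gap $ 2^{-p(n)} \geq 2^{-\ipeoICuS} $.

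For the reverse containment I would start from $ L \in \KOzxJkVZ{\FIqGvTZG{n}}{k}{c}{s} $ with $ k = \wrDoIJTw{n} $, with $c,s$ polynomial-time computable and $ c(n)-s(n) \geq 2^{-q(n)} $; let $ \wQoFYwHz{V} $ be its verifier and $ U_x $ the corresponding circuit. The first observation is that, since $ \wQoFYwHz{V} $ uses only the $ \gctqElrH $ and $ \lDGygnkY $ gates together with classical reversible operations (as fixed in \HVRLKXNY{EriaXXKj}), every path through $ U_x $ carries the same denominator, so each amplitude of $ U_x $ is an integer --- a signed count of computational paths, hence a $ \yxpqVGXT{GapP} $ quantity --- over a fixed power $ 2^{h(n)/2} $; consequently the acceptance probability of $ \wQoFYwHz{V} $ on any tuple of proof vectors with polynomially many bits of dyadic precision, and, crucially, every entry of the acceptance matrix defined below, is of the form $ G(x)/2^{\ipeoICuS} $ for a $ \yxpqVGXT{GapP} $ function $G$.

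The case of a single proof ($ k=1 $) I expect to be clean. The proof register has dimension $ 2^{\FIqGvTZG{n}} = \wrDoIJTw{n} $, and, just as in the proof of \dWQdKYxc{XYHrJdGo}, the maximum acceptance probability is the largest eigenvalue $ \GoyqnUbk{M_x} $ of the $ \wrDoIJTw{n} $-dimensional positive semidefinite matrix $ M_x = \DnoYoBfx{\YMElKjII \NUxdODWZ \fHAwcqZo{0}}\, U_x^{\dagger}\, \Pi_{\mathrm{acc}}\, U_x \DnoYoBfx{\YMElKjII \NUxdODWZ \raIfJZxu{0}} $, whose entries are of the above $ \yxpqVGXT{GapP}/2^{\ipeoICuS} $ form. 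Rounding $c$ down to a polynomial-time computable dyadic $ c' $ with $ s<c'\leq c $, I would invoke Sylvester's criterion: $ \GoyqnUbk{M_x} \geq c' $ \iZJgfzde{} $ c'\YMElKjII-M_x $ fails to be positive definite \iZJgfzde{} some one of its $ \wrDoIJTw{n} $ leading principal minors is $ \leq 0 $; each such minor is a polynomial of polynomially bounded degree in the entries of $ M_x $, hence again a $ \yxpqVGXT{GapP}/2^{\ipeoICuS} $ quantity, so ``this minor is $ \leq 0 $'' is a $ \sVwdElyU $-predicate, and the OR over these polynomially many predicates is in $ \sVwdElyU $ because $ \sVwdElyU $ is closed under polynomial-time truth-table reductions. (In the ``no'' case $ c'\YMElKjII-M_x \succeq \DnoYoBfx{c'-s}\YMElKjII \succ 0 $, so each minor is $ \geq \DnoYoBfx{c'-s}^{\wrDoIJTw{n}}>0 $ and no boundary case arises.) This gives $ \KOzxJkVZ{\FIqGvTZG{n}}{1}{c}{s} \subseteq \sVwdElyU $.

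The hard part will be $ k>1 $ proofs. Now the maximum acceptance probability is the support function of the separable states, $ h_{\mathrm{SEP}}\DnoYoBfx{M_x} = \max_{\psi_1,\dotsc,\psi_k} \fHAwcqZo{\psi_1\dotsm\psi_k} M_x \raIfJZxu{\psi_1\dotsm\psi_k} $, which one can rewrite as $ \max_{\psi_1,\dotsc,\psi_{k-1}} \GoyqnUbk{\tilde{M}_x\DnoYoBfx{\psi_1,\dotsc,\psi_{k-1}}} $, with $ \tilde{M}_x\DnoYoBfx{\psi_1,\dotsc,\psi_{k-1}} $ the $ \wrDoIJTw{n} $-dimensional matrix obtained by sandwiching $ M_x $ with $ \raIfJZxu{\psi_1\dotsm\psi_{k-1}} $ on its first $ k-1 $ registers --- so once those are fixed we are back in the single-proof situation. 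The obstacle is performing the optimization over the remaining $ k-1 $ unentangled, polynomial-dimensional registers while staying inside $ \sVwdElyU $: simply discretizing each register to a $ 2^{-\ipeoICuS} $-net and searching would only give membership in $ \OktNAkJL^{\sVwdElyU} $, so one has to keep the whole maximum as a single algebraic object of polynomially bounded complexity --- a polynomial system of polynomially bounded degree in polynomially many variables with $ \yxpqVGXT{GapP} $-computable data --- and extract its sign relative to $ c' $ from the promised gap and the same $ \sVwdElyU $-closure properties, rather than by an existential search over proofs. Once this is done, every class in the union lies in $ \sVwdElyU $, which together with the first paragraph establishes the equality.
\end{proof}
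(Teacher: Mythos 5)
Your first direction and your single-prover argument are sound: simulating the \sVwdElyU{} machine while ignoring the proofs matches the paper's (one-line) treatment of $\sVwdElyU \subseteq \bigcup \KOzxJkVZ{\FIqGvTZG{n}}{\wrDoIJTw{n}}{c}{s}$, and the Sylvester-criterion/GapP route for $k=1$ is a correct, self-contained alternative that the paper does not use (it is closer in spirit to \dWQdKYxc{XYHrJdGo}). The genuine gap is that the lemma is about \KOzxJkVZ{\FIqGvTZG{n}}{\wrDoIJTw{n}}{c}{s}, \WpedDmmP{} up to polynomially many unentangled provers, and for $k \geq 2$ you only name the obstacle --- optimizing $\fHAwcqZo{\psi_1 \dotsm \psi_k} M_x \raIfJZxu{\psi_1 \dotsm \psi_k}$ over the product-state variety inside \sVwdElyU{} --- together with the \emph{shape} of a solution (``keep the whole maximum as a single algebraic object \dots\ and extract its sign''), without ever exhibiting such an object or a \sVwdElyU{} procedure for its sign. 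Since the multi-prover case is the substantive content of the statement, the containment $\bigcup \KOzxJkVZ{\FIqGvTZG{n}}{\wrDoIJTw{n}}{c}{s} \subseteq \sVwdElyU$ remains unproven in your writeup.

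The paper closes this case by exploiting the \emph{shortness} of the proofs rather than attacking the optimization: each proof lives on \FIqGvTZG{n} qubits, so it equals $\nVzyhtCm{U}\raIfJZxu{0}$ for a unitary on a \wrDoIJTw{n}-dimensional space, describable to exponential precision by a \wrDoIJTw{n}-size circuit (cf.\ \dWQdKYxc{grWbYQoV}); the $k$ quantum proofs are thus replaced by a single \wrDoIJTw{n}-bit classical witness, giving $\KOzxJkVZ{\FIqGvTZG{n}}{\wrDoIJTw{n}}{c}{s} \subseteq \vtZmVJmM{c'}{s'}$ with $c'-s' \geq 2^{-\ipeoICuS}$. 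This is then turned into an exponentially-small-gap \kJmbkUXz{} system, because once the witness is fixed, estimating the verifier's acceptance probability to exponential precision is a $\AIJjtsxK = \sVwdElyU$ computation, and finally placed in \sVwdElyU{} by the $\kJmbkUXz \subseteq \sVwdElyU$ argument. Note that this is precisely the ``discretize the proofs and hand over a classical description'' move you dismissed as yielding only $\OktNAkJL^{\sVwdElyU}$; the paper's point is that the resulting object is an \kJmbkUXz-type system rather than a generic $\exists \cdot \sVwdElyU$ predicate. Your implicit worry about whether the standard $\kJmbkUXz \subseteq \sVwdElyU$ argument survives an exponentially small gap is a fair question to direct at the paper's sketch, but it does not substitute for an argument of your own: as written, your proposal establishes the lemma only for one prover.
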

\begin{proof}[Proof sketch]
The containment $ \sVwdElyU \subseteq
\bigcup_{\substack{0 < s < c \leq 1, \\ c-s \geq 2^{ - \ipeoICuS }}}
\KOzxJkVZ{\FIqGvTZG{n}}{\wrDoIJTw{n}}{c}{s} $ is trivial
since in \sVwdElyU{} the gap is always at least
inverse-exponential.
So we only need to prove the other direction.
Consider the class \KOzxJkVZ{\FIqGvTZG{n}}{\wrDoIJTw{n}}{c}{s}
for some $c$ and $s$, for which the conditions in the
lemma hold.
Note that $ \KOzxJkVZ{\FIqGvTZG{n}}{\wrDoIJTw{n}}{c}{s} \subseteq
\vtZmVJmM{c'}{s'} $, where $ c' - s' \geq 2^{ - \ipeoICuS } $.
(In fact they are equal.)
The reason behind it is that for each proof there exists
a unitary transformation that creates it, say, from \raIfJZxu{0}.
Since the unitaries are on \FIqGvTZG{n}-many qubits,
they can be described up to exponential precision
by quantum circuits of size \wrDoIJTw{n}.
The \ACroxxZA{} verifier expects these descriptions as its proof.
The completeness and soundness follows easily.\footnote{For
a detailed (but a bit different) proof of this fact
see \cite{Gharibian2011}.}
\par
The above \ACroxxZA{} proof system can be converted to an \kJmbkUXz{}
proof system with exponentially small gap in the following way.
The \kJmbkUXz{} proof is the same as the \ACroxxZA{} proof.
Once the proof is fixed the question is to estimate the
acceptance probability of a \ipeoICuS-time quantum computation
to exponential precision.
This problem is in \AIJjtsxK, and since $ \AIJjtsxK = \sVwdElyU $,\footnote{\AIJjtsxK{}
is the unbounded-gap version of \VGZgApSi.
For the exact definition of the class and the proof of the above
equality see for example the survey by Watrous \cite{Watrous2008a}.}
we have an \kJmbkUXz{} proof system with the given parameters.
This \kJmbkUXz{} class is obviously in \sVwdElyU{} via the same argument
as $ \kJmbkUXz \subseteq \sVwdElyU $ for the normal-gap version of \kJmbkUXz.
\end{proof}
\begin{rem}
Note that if we decrease the proof lengths to only $1$
qubits then the resulting class is obviously still
equals to \sVwdElyU.
More formally
\[ \displaystyle \sVwdElyU =
\bigcup_{\substack{0 < s < c \leq 1, \\ c-s \geq 2^{ - \ipeoICuS }}}
\KOzxJkVZ{1}{\wrDoIJTw{n}}{c}{s} \text{,} \]
where \HxLQfJAj{c}{n} and \HxLQfJAj{s}{n} can be calculated in
\wrDoIJTw{n}-time on a classical computer.
\end{rem}
This observation is generalized by the following lemma
to any smaller gaps.
\begin{lem}
$ \displaystyle \KOzxJkVZ{\FIqGvTZG{n}} {\ipeoICuS} {c} {s}
\subseteq \KOzxJkVZ{1} {\ipeoICuS}
{1 - 2^{- t} \cdot \DnoYoBfx{1-c}}
{1 - 2^{- t} \cdot \DnoYoBfx{1-s}} $,
for some $t$ \ygMPbVVP{} $ \HxLQfJAj{t}{n} \in \wrDoIJTw{n} $,
and $c$ and $s$ are arbitrary functions of $n$.
\label{HTTPbCwP}
\end{lem}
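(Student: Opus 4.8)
The plan is to show that a verifier getting $\wrDoIJTw{n}$ proofs of $\FIqGvTZG{n}$ qubits each can be replaced by one getting $\wrDoIJTw{n}$ single‑qubit proofs, by having the prover send the $\ell$‑qubit proofs in ``magic‑state'' form and letting the verifier rebuild them by gate teleportation. Write each honest $\ell$‑qubit proof as $\raIfJZxu{\psi_j} = U_j \raIfJZxu{0^\ell}$, where $\ell = \FIqGvTZG{n}$. Since $U_j$ is a $2^\ell \times 2^\ell$ unitary with $2^\ell = \wrDoIJTw{n}$, it has an \emph{exact} decomposition into $\QUPOtzml{4^\ell} = \wrDoIJTw{n}$ gates from $\GGfSpoIn{\gctqElrH,\ \lDGygnkY,\ \sYcexCeY{z}{\vartheta}}$ (arbitrary angles $\vartheta$), via the cosine–sine / uniformly‑controlled‑rotations normal form. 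In the new proof system the prover does not send the $\ell$‑qubit register at all; instead, for each rotation angle $\vartheta$ occurring in some $U_j$ it supplies, as single‑qubit proofs, the state $\raIfJZxu{A_\vartheta} \faJHPjTb \frac{1}{\sqrt 2}\DnoYoBfx{\raIfJZxu{0} + e^{\WHYFCIFc \vartheta}\raIfJZxu{1}}$, and in fact the whole geometric family $\raIfJZxu{A_\vartheta}, \raIfJZxu{A_{2\vartheta}}, \dotsc, \raIfJZxu{A_{2^{t'-1}\vartheta}}$ with $t' = \wrDoIJTw{n}$, so that the rotation can be implemented by repeat‑until‑success.

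The new verifier \wQoFYwHz{W} rebuilds each \raIfJZxu{\psi_j} on a fresh $\ell$‑qubit register by running, for each rotation, the standard $\sYcexCeY{z}{\vartheta}$‑teleportation gadget: a \gctqElrH{} from the data qubit onto the magic qubit followed by a computational‑basis measurement of the magic qubit, which implements $\sYcexCeY{z}{\vartheta}$ on outcome $0$ and $\sYcexCeY{z}{-\vartheta}$ on outcome $1$; on the wrong outcome \wQoFYwHz{W} consumes the next magic qubit of the geometric family to correct the residual rotation, iterating up to $t'$ times. The key point — and the reason one can get away with single‑qubit proofs even though an entangled $\ell$‑qubit state cannot be distilled from unentangled qubits by a fixed operation — is that every magic qubit is measured and discarded, so no ``junk'' register is left entangled with the rebuilt state: register $j$ always ends up holding a \emph{pure} $\ell$‑qubit state, and this state is exactly \raIfJZxu{\psi_j} whenever all of that register's rotations succeeded within their $t'$ tries. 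After rebuilding all $k$ registers, \wQoFYwHz{W} simply runs the original verifier on $\raIfJZxu{\psi_1}\NUxdODWZ\dotsb$ and copies its answer. The gates are all \gctqElrH{}, \lDGygnkY{} and single‑qubit measurements, and the whole reconstruction plus the decomposition of the $U_j$'s is \wrDoIJTw{n}‑time, so \wQoFYwHz{W} is a legal verifier.

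For soundness, register $j$ is rebuilt only from the single‑qubit proofs allotted to group $j$ and all its intermediate ancillas are measured, so for \emph{any} (possibly cheating, possibly mixed) proofs the $j$‑th rebuilt register is in some density operator $\tau_j$, and the $\tau_j$'s are mutually unentangled; hence the original verifier is fed a legal product input and accepts with probability $\leq s$. For completeness, the honest prover makes each rotation gadget succeed within $t'$ tries except with probability $\leq 2^{-t'}$, so by a union bound all $\QUPOtzml{4^\ell}$ rotations of all $k$ registers succeed — and the rebuilt input is exactly $\raIfJZxu{\psi_1}\NUxdODWZ\dotsb\NUxdODWZ\raIfJZxu{\psi_k}$ — except with probability $\delta$, which we can drive below any chosen inverse‑exponential by enlarging $t'$; conditioned on that, the original verifier accepts with probability $\geq c$. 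Crucially, \wQoFYwHz{W} is arranged so that running out of magic qubits for a rotation does \emph{not} trigger an abort — it keeps the (now possibly wrong but still pure) rebuilt state and runs the original verifier anyway — so that there is no ``failure branch'' a cheating prover could exploit. This already gives $\KOzxJkVZ{\FIqGvTZG{n}}{\ipeoICuS}{c}{s} \subseteq \KOzxJkVZ{1}{\ipeoICuS}{(1-\delta)c}{s}$, and prefixing an independent coin that runs the procedure with probability $2^{-t}$ and accepts otherwise rescales both error probabilities by $2^{-t}$.

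The step I expect to be the main obstacle is the last piece of bookkeeping: reconciling the residual reconstruction error $\delta$ with the \emph{exact} functional form $1 - 2^{-t}(1-c)$ and $1 - 2^{-t}(1-s)$ demanded by the statement. Rejecting on a rebuild failure would keep soundness at $s$ but pay a multiplicative $(1-\delta)$ on completeness, and no rescaling of $\DnoYoBfx{(1-\delta)c,\ s}$ lands exactly on the target point (whose rejection ratio $\tfrac{1-c}{1-s}$ is preserved by rescaling but spoiled by any completeness loss); accepting on failure fixes completeness but, as noted, would let a cheating prover steer into the failure branch and ruin soundness. Pushing the clean form through therefore requires either folding $\delta$ into the $2^{-t}$ factor (which works as long as $1-c(n) \geq 2^{-\wrDoIJTw{n}}$, so that $\delta$ can be taken small compared to $2^{-t}(1-c)$, and this is the case in all the settings where the lemma is applied) or a completeness‑side normalization of the original proof system first; making this choice of $t$ and $t'$ precise, and checking all the circuit‑size bounds, is the part that needs to be carried out with care.
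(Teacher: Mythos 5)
Your overall strategy --- write each $\ell$-qubit proof as $U_j\raIfJZxu{0^\ell}$ with $\ell = \FIqGvTZG{n}$, decompose $U_j$ over \gctqElrH, \lDGygnkY{} and $z$-rotations, have the prover supply the rotations as unentangled single-qubit magic states plus a classical circuit description, rebuild the proofs by gate injection and then run the original verifier --- is the same as the paper's (\dWQdKYxc{grWbYQoV}, \dWQdKYxc{XPxqbsml}, \dWQdKYxc{APGlxUTv}, \dOYmqxlU{OzBzXiMl}). The divergence is in how gadget failures are handled, and that is exactly where your version breaks. The paper's gadget consumes one magic state per rotation, succeeds with probability \emph{exactly} $1/2$ independently of the data and of the angle, produces the rotation \emph{exactly} on success, and the new verifier \emph{accepts outright} on any failure. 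The acceptance probability is therefore exactly $1 - 2^{-t}\DnoYoBfx{1-p}$, where $p$ is the old verifier's acceptance probability on the exactly reconstructed input --- an exact affine rescaling with no additive error anywhere. That exactness is the entire content of the lemma: it is what preserves perfect completeness (the corollary right after the lemma uses $\KOzxJkVZ{\FIqGvTZG{n}}{\ipeoICuS}{1}{<1}$) and gaps that are double-exponentially small or unbounded.

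Your repeat-until-success variant, which presses on with a wrong state when a correction chain is exhausted, necessarily incurs an additive completeness loss $\delta$, and with polynomially many single-qubit proofs and a polynomial-time verifier you cannot push $\delta$ below $2^{-\wrDoIJTw{n}}$. After your coin prefix the completeness is at best $1 - 2^{-t}\DnoYoBfx{1-c+\delta}$ against soundness $1 - 2^{-t}\DnoYoBfx{1-s}$, so the surviving gap is $2^{-t}\DnoYoBfx{c-s-\delta}$, which is positive only if $\delta < c-s$. In every setting where the lemma is actually applied, $c-s$ is double-exponentially small or merely positive with no lower bound, so $\delta$ swallows it; and for $c=1$ perfect completeness is lost. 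Your proposed repair conditions on $1-c \geq 2^{-\wrDoIJTw{n}}$, which is the wrong quantity (it is $c-s$, not $1-c$, that must dominate $\delta$) and is false in the one-sided-error application. Your stated reason for refusing the accept-on-failure route --- that a cheating prover could steer into the failure branch --- is a legitimate concern about the paper's own soundness argument, since the gadget's success probability is only pinned to $1/2$ when the supplied magic qubit is genuinely unbiased; but that route, with its guarantee that the success branch always hands the old verifier a legal unentangled input, is what the paper uses and what makes the exact form $1 - 2^{-t}\DnoYoBfx{1-c}$ versus $1 - 2^{-t}\DnoYoBfx{1-s}$ come out. A correct proof has to engage with that branch rather than buy it off with an additive error.
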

We present the proof of this lemma in
\IhApwfvN{zKhwWgDg}
on page~\pageref{zKhwWgDg}.
Let us explicitly state here the previously mentioned corollary.
\begin{cor}
It holds that
\begin{align*}
\bigcup_{\substack{0 < s < c \leq 1, \\ c-s \geq 2^{ - 2^{\ipeoICuS} }}}
\KOzxJkVZ{\FIqGvTZG{n}}{\ipeoICuS}{c}{s} &=
\bigcup_{\substack{0 < s < c \leq 1, \\ c-s \geq 2^{ - 2^{\ipeoICuS} }}}
\KOzxJkVZ{1}{\ipeoICuS}{c}{s} \text{,}
\intertext{and}
\bigcup_{0 < s < c \leq 1} \KOzxJkVZ{\FIqGvTZG{n}}{\ipeoICuS}{c}{<s} &=
\bigcup_{0 < s < c \leq 1} \KOzxJkVZ{1}{\ipeoICuS}{c}{<s}
\text{.}
\intertext{Note that in \dWQdKYxc{HTTPbCwP}
the one-sided error property is preserved.
So we also have for example}
\KOzxJkVZ{\FIqGvTZG{n}}{\ipeoICuS}{1}{<1} &=
\KOzxJkVZ{1}{\ipeoICuS}{1}{<1} \text{.}
\end{align*}
\end{cor}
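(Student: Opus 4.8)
The plan is to deduce all three equalities from \dWQdKYxc{HTTPbCwP}, which is the only non-trivial ingredient. In each of the three equalities the containment of the single-qubit class in the \FIqGvTZG{n}-qubit class is immediate: a verifier that expects $1$-qubit proofs is a special case of one that may use up to \FIqGvTZG{n} qubits per register, so $ \KOzxJkVZ{1}{\ipeoICuS}{c}{s} \subseteq \KOzxJkVZ{\FIqGvTZG{n}}{\ipeoICuS}{c}{s} $ with identical parameters, and likewise for the strict-soundness and perfect-completeness variants. Taking unions preserves these inclusions, so the work lies entirely in the reverse containments, each of which I would obtain by applying \dWQdKYxc{HTTPbCwP} and bookkeeping the parameters.

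First I would record the one shared computation. \dWQdKYxc{HTTPbCwP} replaces a parameter pair $ \DnoYoBfx{c,s} $ by $ c' = 1 - 2^{-t}\DnoYoBfx{1-c} $ and $ s' = 1 - 2^{-t}\DnoYoBfx{1-s} $ for some polynomial $t$; that is, each threshold is sent through the strictly increasing affine map $ x \mapsto 1 - 2^{-t}\DnoYoBfx{1-x} $. Consequently the image keeps $ 0 < s' < c' \leq 1 $, preserves strict soundness and the value $1$ (this monotonicity is what underlies the one-sided-error preservation remarked on below), and transforms the gap by
\[
c' - s' = 2^{-t} \DnoYoBfx{c - s} \text{.}
\]
Everything else follows by plugging the three unions into these facts.

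For the first equality I would take $ L \in \KOzxJkVZ{\FIqGvTZG{n}}{\ipeoICuS}{c}{s} $ witnessed by a polynomial $p$ with $ c - s \geq 2^{-2^{p}} $. By \dWQdKYxc{HTTPbCwP} then $ L \in \KOzxJkVZ{1}{\ipeoICuS}{c'}{s'} $ with $ c' - s' = 2^{-t}\DnoYoBfx{c-s} \geq 2^{-\DnoYoBfx{t + 2^{p}}} $. The crucial estimate is that a polynomial shift of a double-exponent stays a double-exponent, $ t + 2^{p} \leq 2^{p+t+1} $, whence $ c' - s' \geq 2^{-2^{p+t+1}} $, which is again of the form $ 2^{-2^{\ipeoICuS}} $; thus $L$ lies in the right-hand union. (Exponential-time computability of $ c', s' $, if one tracks it, is inherited from that of $ c, s $ and the polynomial-time computability of $t$.)

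The remaining two equalities need no gap bookkeeping. In the second there is no lower bound on $ c-s $, so it is enough that the image pair still satisfies $ 0 < s' < c' \leq 1 $ and that strict soundness is preserved, both of which are already part of the shared computation; hence any $ L \in \KOzxJkVZ{\FIqGvTZG{n}}{\ipeoICuS}{c}{<s} $ lands in $ \KOzxJkVZ{1}{\ipeoICuS}{c'}{<s'} $. For the third I would set $ c = 1 $, so $ c' = 1 - 2^{-t}\DnoYoBfx{1-1} = 1 $ retains perfect completeness exactly, while a rejecting-instance probability $ p < 1 $ maps to $ 1 - 2^{-t}\DnoYoBfx{1-p} < 1 $, retaining strict soundness $ <1 $; this yields the one-sided-error equality directly. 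The main (indeed only) obstacle is the absorption estimate used for the first equality: it is exactly what makes the double-exponential threshold the natural one here, since multiplying the gap by the inverse-exponential factor $ 2^{-t} $ could push an inverse-polynomial gap out of its regime, but cannot push a double-exponentially small gap out of the double-exponential regime.
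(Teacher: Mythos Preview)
Your proposal is correct and is exactly the intended derivation: the paper states the corollary without proof, treating it as an immediate consequence of \dWQdKYxc{HTTPbCwP}, and what you have written is precisely the natural unpacking of that derivation, including the key observation that the affine map $x\mapsto 1-2^{-t}(1-x)$ is strictly increasing, fixes $1$, and shrinks the gap by the factor $2^{-t}$, which a double-exponential lower bound absorbs.
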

We don't know any better upper-bound for
\KOzxJkVZ{1}{\ipeoICuS}{c}{s} with $c-s \geq 2^{ - 2^{\ipeoICuS} }$
then the trivial \vxJDyEiW.
It would be interesting to strengthen this bound
or give some non-trivial lower-bound.
We leave this question for future work.
\subsection{More Open Problems}
Here we list some more open problems that
we think may be interesting to work on.
\begin{itemize}
\item What is the power of \lrcGeBqo{k} and
\pUeVeClC{k} with unbounded gap?
Can we at least show some upper-bounds?
\item What is the power of \KOzxJkVZ{1}{1}{1}{1 - 2^{- 2^{\ipeoICuS}}}
or \KOzxJkVZ{1}{1}{1}{<1}?
\qpcpnInM{} the proof system has only one qubit
as its proof but we allow double-exponentially
small or unbounded gap.
Are they the same as \AIJjtsxK?
Note that the known
$ \KOzxJkVZ{\FIqGvTZG{n}}{1}{\frac{2}{3}}{\frac{1}{3}}
= \VGZgApSi $ proofs \cite{Marriott2005,Beigi2011} break
down if the gap is so small.
\end{itemize}
\section*{Acknowledgements}
The author would like to thank Rahul Jain and
Penghui Yao for helpful discussions on the topic.
\appendix
\section{Proof of Soundness for \MJqbMYrm{QoOYZYVy}}
\label{WEFVXHAK}
This section proves soundness for verifier
\wQoFYwHz{V} described by \FxucuSoM{jEVcePOJ}
on page~\pageref{jEVcePOJ};
and hence finishes the proof of \MJqbMYrm{QoOYZYVy}.
The proof is done through a few lemmas.
\par
From now on let us suppose that $ \oVrArWkh \notin \HNGSAaAW $,
and let's denote the state of $ \VSdxZJNO{R}_1 $ by \raIfJZxu{\psi},
and the state of $ \VSdxZJNO{R}_2 $ by \raIfJZxu{\varphi}.
These two states can be written in the form
\[ \raIfJZxu{\psi} = \sum_{i=0}^{2^n - 1} \alpha_i \raIfJZxu{i}
\sum_{j=0}^2 \beta_{i,j} \raIfJZxu{j} \text{,} \qquad
\raIfJZxu{\varphi} = \sum_{i=0}^{2^n - 1} \alpha_i' \raIfJZxu{i}
\sum_{j=0}^2 \beta_{i,j}' \raIfJZxu{j} \text{,} \]
where $ \sum_i \AfBMaxZI{\alpha_i}^2 = \sum_i \AfBMaxZI{\alpha_i'}^2 = 1 $, and
for all $i$, $ \sum_j \AfBMaxZI{\beta_{i,j}}^2 = \sum_j \AfBMaxZI{\beta_{i,j}'}^2 = 1 $.
\par
The following lemma says that if the Equality Test succeeds with high
probability then the distribution of outcomes in
the Consistency Test will be similar.
This is analogous to Lemma~2.5 of \cite{Blier2009}.
\begin{lem}
If the Equality test of \FxucuSoM{jEVcePOJ}
succeeds with probability at least $1 - \varepsilon$,
then for all $k$ and $\ell$ it holds that
$ \AfBMaxZI{ \AfBMaxZI{\alpha_k \beta_{k,\ell}}^2 -
\AfBMaxZI{\alpha_k' \beta_{k,\ell}'}^2 } \leq
\sqrt{8 \varepsilon} $.
\label{TAMDFIWI}
\end{lem}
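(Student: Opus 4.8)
The plan is to reduce the statement entirely to the closeness of the two proof states that a nearly-passing Equality Test guarantees. First I would observe that the Equality Test is just the SWAP-test applied to $\VSdxZJNO{R}_1$ and $\VSdxZJNO{R}_2$, \WpedDmmP{} to the states \raIfJZxu{\psi} and \raIfJZxu{\varphi}. By \MJqbMYrm{QzwgCixS} it accepts with probability exactly $\frac{1}{2}\DnoYoBfx{1 + \AfBMaxZI{\NChmtXsR{\psi}{\varphi}}^2}$, so the hypothesis that it succeeds with probability at least $1 - \varepsilon$ immediately yields $\AfBMaxZI{\NChmtXsR{\psi}{\varphi}}^2 \geq 1 - 2\varepsilon$.

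Next I would convert this overlap bound into a bound on the trace distance of the corresponding density operators. Using the standard identity for pure states (see, \eRAJAvVL{} \cite{Nielsen2000}), $\vPiROeRG{\GyFczGjW{\psi}}{\GyFczGjW{\varphi}} = \sqrt{1 - \AfBMaxZI{\NChmtXsR{\psi}{\varphi}}^2}$, and therefore $\vPiROeRG{\GyFczGjW{\psi}}{\GyFczGjW{\varphi}} \leq \sqrt{2\varepsilon}$.

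Finally I would apply the computational-basis measurement that the Consistency Test performs on one pair of registers. Measuring \raIfJZxu{\psi} in the computational basis produces the outcome $\DnoYoBfx{k, \ell}$ with probability $\AfBMaxZI{\alpha_k \beta_{k,\ell}}^2$, while measuring \raIfJZxu{\varphi} produces it with probability $\AfBMaxZI{\alpha_k' \beta_{k,\ell}'}^2$. Since trace distance cannot increase under a measurement channel, the total variation distance between these two distributions is at most $\sqrt{2\varepsilon}$, \WpedDmmP{}
\[ \sum_{k, \ell} \AfBMaxZI{ \AfBMaxZI{\alpha_k \beta_{k,\ell}}^2 - \AfBMaxZI{\alpha_k' \beta_{k,\ell}'}^2 } \leq 2 \sqrt{2 \varepsilon} = \sqrt{8 \varepsilon} . \]
In particular each individual summand is bounded by $\sqrt{8\varepsilon}$, which is exactly the claimed inequality.

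This argument is essentially routine, so I do not expect a real obstacle; the only places that need care are invoking the correct pure-state relation between overlap and trace distance (so as not to lose a square root), and noticing that the per-pair estimate comes for free from the $\ell_1$ bound on the entire outcome distribution, so that no constant worse than $\sqrt{8\varepsilon}$ is incurred.
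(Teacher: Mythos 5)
Your argument is correct and follows essentially the same route as the paper's own proof: bound $\AfBMaxZI{\NChmtXsR{\psi}{\varphi}}^2$ via the SWAP-test acceptance probability, pass to the pure-state trace distance $\sqrt{1 - \AfBMaxZI{\NChmtXsR{\psi}{\varphi}}^2}$, and use that the computational-basis measurement cannot increase it, so the full $\ell_1$ distance of the outcome distributions (hence each single term) is at most $2\sqrt{2\varepsilon} = \sqrt{8\varepsilon}$. No gaps.
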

\begin{proof}
Let $ p_{i,j} = \AfBMaxZI{\alpha_i \beta_{i,j}}^2 $ and
$ q_{i,j} = \AfBMaxZI{\alpha_i' \beta_{i,j}'}^2 $,
and let us denote the probability vector
with elements $p_{i,j}$ by $p$, and similarly for $q$.
Then we have the following.
\begin{align}
\sqrt{1 - \AfBMaxZI{\NChmtXsR{\psi}{\varphi}}^2}
&= \vPiROeRG{\GyFczGjW{\psi}}{\GyFczGjW{\varphi}}
\label{Bhpmbwna} \\
&\geq \frac{1}{2} \xOJdCehH{p - q}
\label{YtqvQXzj} \\
&= \frac{1}{2} \sum_{i,j} \AfBMaxZI{ \AfBMaxZI{\alpha_i \beta_{i,j}}^2
- \AfBMaxZI{\alpha_i' \beta_{i,j}'}^2 } \nonumber \\
&\geq \frac{1}{2} \AfBMaxZI{ \AfBMaxZI{\alpha_k \beta_{k,\ell}}^2
- \AfBMaxZI{\alpha_k' \beta_{k,\ell}'}^2 } \text{,} \nonumber
\end{align}
for any $k$ and $\ell$.
Equations \eqref{Bhpmbwna}
and \eqref{YtqvQXzj}
are from the properties of the trace distance.
\MJqbMYrm{QzwgCixS} implies that
$ \frac{1}{2} \DnoYoBfx{1 + \AfBMaxZI{\NChmtXsR{\psi}{\varphi}}^2}
\geq 1 - \varepsilon $.
With the above derivation the claim of the lemma follows.
\end{proof}
Similarly to Lemma~2.6 of \cite{Blier2009}
(or Lemma~6.1 of \cite{Chiesa2011}), the next lemma states that
vertices with high probability of being observed
have a well-defined color.
\begin{lem}
Suppose that \raIfJZxu{\psi} and \raIfJZxu{\varphi} pass
the Equality Test of \FxucuSoM{jEVcePOJ}
and also line~\ref{DkvnMEbW}
in the Consistency Test with
probability at least $ 1 - 10^{-10} \cdot 4^{-n} $.
Then for all $i$ for which
$ \AfBMaxZI{\alpha_i}^2 \geq 100^{-1} \cdot 2^{-n} $,
there exist one $j$ for which $ \AfBMaxZI{\beta_{i,j}}^2 \geq 0.9 $.
\label{bXmyQJDh}
\end{lem}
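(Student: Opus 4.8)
The plan is to fix an index $i$ with $ \AfBMaxZI{\alpha_i}^2 \geq 100^{-1} \cdot 2^{-n} $ and argue by contradiction: assuming no color $j$ has $ \AfBMaxZI{\beta_{i,j}}^2 \geq 0.9 $, I would show that the verifier rejects with probability strictly larger than $ 10^{-10} \cdot 4^{-n} $, contradicting the hypothesis. The rejection will come from line~\ref{DkvnMEbW} of the Consistency Test, which rejects when the same vertex is measured with two different colors. The key observation is that, conditioned on seeing vertex $i$ in both registers, the colors $c_1$ and $c_2$ are drawn (essentially) independently from the distributions $ \DnoYoBfx{\AfBMaxZI{\beta_{i,j}}^2}_j $ and $ \DnoYoBfx{\AfBMaxZI{\beta_{i,j}'}^2}_j $ respectively. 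If neither distribution is concentrated on a single color with weight $\geq 0.9$, then the probability that $c_1 \neq c_2$ is bounded below by a constant.

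First I would use \dWQdKYxc{TAMDFIWI}: since the Equality Test succeeds with probability at least $1 - 10^{-10} \cdot 4^{-n}$, for every $k, \ell$ we have $ \AfBMaxZI{ \AfBMaxZI{\alpha_k \beta_{k,\ell}}^2 - \AfBMaxZI{\alpha_k' \beta_{k,\ell}'}^2 } \leq \sqrt{8 \cdot 10^{-10} \cdot 4^{-n}} $, which is much smaller than $ \AfBMaxZI{\alpha_i}^2 \cdot 2^{-n} $-scale quantities. Hence $ \AfBMaxZI{\alpha_i'}^2 $ is close to $ \AfBMaxZI{\alpha_i}^2 $, so it is also $ \uaivIgOh{2^{-n}} $, and for each $j$, $ \AfBMaxZI{\beta_{i,j}'}^2 $ is close to $ \AfBMaxZI{\beta_{i,j}}^2 $ (after dividing the amplitude-product bound by $ \AfBMaxZI{\alpha_i}^2 $, which costs only a factor $100 \cdot 2^n$, still leaving an error like $\sqrt{8 \cdot 10^{-10}} \cdot 100 \cdot 2^{-n} \cdot 2^n \cdot \ldots$ — I will track constants carefully). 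So it suffices to work with the single distribution $ P_j \faJHPjTb \AfBMaxZI{\beta_{i,j}}^2 $ and its near-copy $ P_j' $.

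Next, I would lower-bound $ \TdSeQgKh{c_1 \neq c_2 \mid v_1 = v_2 = i} $. With $ P $ not $0.9$-concentrated, the collision probability $ \sum_j P_j P_j' $ is bounded away from $1$ by an absolute constant (using that $ P' $ is close to $ P $, the collision probability is at most $ \sum_j P_j^2 + O(\text{error}) $, and $ \sum_j P_j^2 \leq \max_j P_j \leq 0.9 $ plus a correction). Therefore the probability that line~\ref{DkvnMEbW} rejects is at least $ \frac{1}{3} \cdot \TdSeQgKh{v_1 = i} \cdot \TdSeQgKh{v_2 = i} \cdot \DnoYoBfx{1 - \sum_j P_j P_j'} = \uaivIgOh{ \AfBMaxZI{\alpha_i}^2 \cdot \AfBMaxZI{\alpha_i'}^2 } = \uaivIgOh{4^{-n}} $, with a constant in the $\Omega$ much larger than $10^{-10}$. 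This contradicts the assumed rejection bound, completing the proof.

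The main obstacle I anticipate is bookkeeping the constants so that the final lower bound on the rejection probability genuinely exceeds $ 10^{-10} \cdot 4^{-n} $: one must chain the $\sqrt{8\varepsilon}$ error from \dWQdKYxc{TAMDFIWI} (amplified by $ 1 / \AfBMaxZI{\alpha_i}^2 \leq 100 \cdot 2^n $ when passing from amplitude-products to conditional color distributions) against the $0.1$ ``slack'' from the failed concentration, and verify the numbers close. The $10^{-10}$ in the hypothesis is presumably chosen exactly to make this work, so the argument should go through, but the estimate deserves care rather than hand-waving. A secondary subtlety is that the measurements of $\VSdxZJNO{N}_1,\VSdxZJNO{N}_2,\VSdxZJNO{C}_1,\VSdxZJNO{C}_2$ in line~\ref{YKlinzMJ} are simultaneous on the product state $\raIfJZxu{\psi}\NUxdODWZ\raIfJZxu{\varphi}$, so the joint outcome distribution genuinely factorizes as $ \AfBMaxZI{\alpha_i\beta_{i,j}}^2 \cdot \AfBMaxZI{\alpha_{i'}'\beta_{i',j'}'}^2 $ — this needs to be stated but is immediate from the tensor structure.
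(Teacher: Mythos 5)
Your proposal is correct and follows essentially the same strategy as the paper: derive a contradiction by showing that a non-concentrated color distribution at a heavy vertex $i$ forces a rejection at line~\ref{DkvnMEbW} with probability exceeding $10^{-10}\cdot 4^{-n}$, using \dWQdKYxc{TAMDFIWI} to transfer the relevant amplitudes from \raIfJZxu{\psi} to \raIfJZxu{\varphi}. The paper's execution is slightly more direct: it observes that if no color has weight $\geq 0.9$ then two colors each have weight $\geq 1/20$, and lower-bounds the probability of that single discrepant outcome pair working with the joint quantities $\AfBMaxZI{\alpha_i\beta_{i,j}}^2$ throughout, so the $\sqrt{8\varepsilon}$ error is never amplified by $1/\AfBMaxZI{\alpha_i}^2$; your collision-probability route also closes numerically but needs the extra bookkeeping you flagged.
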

\begin{proof}
Towards contradiction suppose that $ \exists i $ \ygMPbVVP{}
$ \AfBMaxZI{\alpha_i}^2 \geq \frac{1}{100 \cdot 2^{n}} $ and
$ \forall j $ it holds that
$ \AfBMaxZI{\beta_{i,j}}^2 < \frac{9}{10} $.
Then \bmZlBfAS{} we can say that
$ \AfBMaxZI{\beta_{i,0}}^2 \geq \frac{1}{20} $ and
$ \AfBMaxZI{\beta_{i,1}}^2 \geq \frac{1}{20} $.
Since the probability that the Equality Test succeeds
is at least $ 1 - \frac{1}{10^{10} \cdot 4^n}$,
we can apply \dWQdKYxc{TAMDFIWI} and get that
\[ \AfBMaxZI{ \AfBMaxZI{\alpha_i \beta_{i,1}}^2 - \AfBMaxZI{\alpha_i' \beta_{i,1}'}^2 }
\leq \frac{\sqrt{8}}{ 10^5 \cdot 2^n } . \]
This implies that
\begin{align*}
\AfBMaxZI{\alpha_i'}^2 \AfBMaxZI{\beta_{i,1}'}^2
&\geq \AfBMaxZI{\alpha_i}^2 \AfBMaxZI{\beta_{i,1}}^2
- \frac{\sqrt{8}}{10^5 \cdot 2^n} \\
&\geq \frac{1}{2000 \cdot 2^{n}}
- \frac{\sqrt{8}}{10^5 \cdot 2^n} \text{.}
\end{align*}
Then the probability that in line~\ref{YKlinzMJ}
in the Consistency Test we get
$ v_1 = v_2 = i $, $c_1 = 0$ and $c_2 = 1$ is
\[ \TdSeQgKh{v_1=i \text{ and } c_1=0} \cdot \TdSeQgKh{v_2=i \text{ and } c_2=1}
\geq \frac{1}{2000 \cdot 2^{n}}
\DnoYoBfx{\frac{1}{2000 \cdot 2^{n}} - \frac{\sqrt{8}}{10^5 \cdot 2^n}}
> \frac{1}{10^{10} \cdot 4^n} . \]
This contradicts to the assumption that
\raIfJZxu{\psi} and \raIfJZxu{\varphi} pass
line~\ref{DkvnMEbW} with
probability at least $ 1 - 10^{-10} \cdot 4^{-n} $.
\end{proof}
The next lemma is analogous to Lemma~2.7 of \cite{Blier2009}
and also to Lemma~6.2 of \cite{Chiesa2011}.
\begin{lem}
Suppose that \raIfJZxu{\psi} and \raIfJZxu{\varphi} pass
the Equality Test of \FxucuSoM{jEVcePOJ}
and also line~\ref{DkvnMEbW}
in the Consistency Test with
probability at least $ 1 - 10^{-10} \cdot 4^{-n} $.
Then the probability of measuring $0$ on
$\VSdxZJNO{C}_1$ in line~\ref{GTSlDeLX}
in the Uniformity Test is at least $0.05$.
\label{KnUWPcsP}
\end{lem}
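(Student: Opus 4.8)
The plan is to evaluate the probability in question directly from the amplitudes of $\raIfJZxu{\psi}$ and then invoke \dWQdKYxc{bXmyQJDh}. First I would note that the measurement of \MOVPSHbF{NlcnHZqK} applied to $\VSdxZJNO{C}_1$ (with $m=3$) reports outcome $0$ precisely when the reduced state of $\VSdxZJNO{C}_1$ is projected onto $\raIfJZxu{u_3}$, so that, using $\GyFczGjW{u_3} = \frac{1}{3}\sum_{j,j'}\ZFdmJtwX{j}{j'}$, a direct calculation gives
\[
\TdSeQgKh{\text{outcome } 0 \text{ on } \VSdxZJNO{C}_1}
= \fHAwcqZo{\psi}\DnoYoBfx{\YMElKjII \NUxdODWZ \GyFczGjW{u_3}}\raIfJZxu{\psi}
= \frac{1}{3}\sum_{i=0}^{2^n-1}\AfBMaxZI{\alpha_i}^2 \AfBMaxZI{\sum_{j=0}^2 \beta_{i,j}}^2 .
\]
It then suffices to show the sum on the right is at least $0.15$.

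Next I would split the node indices into the ``heavy'' ones, those with $\AfBMaxZI{\alpha_i}^2 \geq 100^{-1}\cdot 2^{-n}$, and the rest. Since there are at most $2^n$ indices and $\sum_i \AfBMaxZI{\alpha_i}^2 = 1$, the non-heavy indices carry total weight strictly less than $100^{-1}$, so the heavy ones carry weight more than $0.99$. For every heavy $i$ the hypotheses of this lemma are exactly the hypotheses of \dWQdKYxc{bXmyQJDh}, so there is a colour $j_0$ with $\AfBMaxZI{\beta_{i,j_0}}^2 \geq 0.9$; the remaining two colour amplitudes then have squared magnitudes summing to at most $0.1$, hence magnitudes summing to at most $\sqrt{0.2}$ by the \vHtwIXhr{}. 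The triangle inequality gives $\AfBMaxZI{\sum_{j=0}^2 \beta_{i,j}} \geq \sqrt{0.9}-\sqrt{0.2} > \frac{1}{2}$, so $\AfBMaxZI{\sum_j \beta_{i,j}}^2 > \frac{1}{4}$ for each heavy $i$.

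Combining the two estimates,
\[
\TdSeQgKh{\text{outcome } 0 \text{ on } \VSdxZJNO{C}_1}
\geq \frac{1}{3}\sum_{i \text{ heavy}}\AfBMaxZI{\alpha_i}^2\cdot\frac{1}{4}
> \frac{0.99}{12} > \frac{1}{20},
\]
which is the claim. I do not expect a genuine obstacle here; the two points that need care are writing the $\VSdxZJNO{C}_1$-marginal outcome probability correctly in terms of the amplitudes (since $\raIfJZxu{\psi}$ is in general entangled across $\VSdxZJNO{N}_1\VSdxZJNO{C}_1$), and tracking the numerical constants so that the final figure clears $0.05$ with the comfortable margin that the stated constant anticipates.
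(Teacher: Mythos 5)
Your proposal is correct and follows essentially the same route as the paper: express the outcome-$0$ probability as $\frac{1}{3}\sum_i \AfBMaxZI{\alpha_i}^2 \AfBMaxZI{\sum_j \beta_{i,j}}^2$, restrict to the heavy indices with $\AfBMaxZI{\alpha_i}^2 \geq 100^{-1}\cdot 2^{-n}$ (which carry weight more than $0.99$), apply \dWQdKYxc{bXmyQJDh} together with the triangle inequality and the \vHtwIXhr{} to each heavy term, and combine. The only cosmetic difference is that you lower-bound $\AfBMaxZI{\beta_{i,j_0}}$ by $\sqrt{0.9}$ where the paper uses the slightly weaker $0.9$, giving you a marginally better per-term constant; both comfortably clear $0.05$.
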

\begin{proof}
Suppose that we measure $\VSdxZJNO{N}_1$ in the
standard basis.
If the outcome is $i$ then the probability of measuring $0$ on
$\VSdxZJNO{C}_1$ with the measurement of
\MOVPSHbF{NlcnHZqK} is
\[ \frac{1}{3} \AfBMaxZI{ \beta_{i,0} + \beta_{i,1} + \beta_{i,2} }^2 . \]
For all $i$ for which $ \AfBMaxZI{\alpha_i}^2 \geq \frac{1}{100 \cdot 2^{n}} $
\dWQdKYxc{bXmyQJDh} applies, which means that
there exist $k_i$ \ygMPbVVP{} $ \AfBMaxZI{\beta_{i,k_i}}^2 \geq \frac{9}{10} $.
Let $ \ell_i \faJHPjTb k_i + 1 \mod 3 $ and
$ m_i \faJHPjTb k_i + 2 \mod 3 $;
then $ \AfBMaxZI{\beta_{i,\ell_i}}^2 + \AfBMaxZI{\beta_{i,m_i}}^2 < \frac{1}{10} $.
We can lower bound the above probability by
\begin{align*}
\frac{1}{3} \AfBMaxZI{ \beta_{i,k_i} + \beta_{i,\ell_i} + \beta_{i,m_i} }^2 &\geq
\frac{1}{3} \AfBMaxZI{ \AfBMaxZI{\beta_{i,k_i}} - \AfBMaxZI{\beta_{i,\ell_i} + \beta_{i,m_i}} }^2 \\
&\geq \frac{1}{3} \DnoYoBfx{ \AfBMaxZI{\beta_{i,k_i}} -
\sqrt{ 2 \cdot \DnoYoBfx{ \AfBMaxZI{\beta_{i,\ell_i}}^2 + \AfBMaxZI{\beta_{i,m_i}}^2 }} }^2 \\
&\geq \frac{1}{3} \DnoYoBfx{ \frac{9}{10} - \sqrt{\frac{2}{10}} }^2 \\
&> \frac{6}{100} \text{,}
\end{align*}
where the second inequality follows from the \vHtwIXhr{}.
Or more precisely we have
$ \AfBMaxZI{\beta_{i,\ell_i} + \beta_{i,m_i}}^2 \leq
2 \cdot \DnoYoBfx{ \AfBMaxZI{\beta_{i,\ell_i}}^2 + \AfBMaxZI{\beta_{i,m_i}}^2 } <
\frac{2}{10} < \frac{9}{10}
\leq \AfBMaxZI{\beta_{i,k_i}}^2 $.
The probability of measuring $0$ on
$\VSdxZJNO{C}_1$ in line~\ref{GTSlDeLX} is
\begin{align*}
\sum_{i=0}^{2^n-1} \AfBMaxZI{\alpha_i}^2 \cdot
\frac{1}{3} \cdot \AfBMaxZI{ \beta_{i,0} + \beta_{i,1} + \beta_{i,2} }^2
&\geq \sum_{\substack{i \text{ for which} \\
\AfBMaxZI{\alpha_i}^2 \geq \frac{1}{100 \cdot 2^n} }}
\AfBMaxZI{\alpha_i}^2 \cdot
\frac{1}{3} \cdot \AfBMaxZI{ \beta_{i,0} + \beta_{i,1} + \beta_{i,2} }^2 \\
&> \frac{6}{100} \cdot \sum_{\substack{i \text{ for which} \\
\AfBMaxZI{\alpha_i}^2 \geq \frac{1}{100 \cdot 2^n} }}
\AfBMaxZI{\alpha_i}^2 \\
&\geq \frac{6}{100} \cdot
\DnoYoBfx{1 - \frac{2^n-1}{100 \cdot 2^n}} \\
&> \frac{5}{100} \text{,}
\end{align*}
where we used the fact that at most $2^n - 1$ nodes ($i$'s) can have
$ \AfBMaxZI{\alpha_i}^2 < \frac{1}{100 \cdot 2^n} $.
\end{proof}
In order to proceed we need two lemmas, one from
\cite{Blier2009} and one from \cite{Chen2010}.
We present them now, together with their proofs.
\begin{lem}[Lemma~2.8 of \cite{Blier2009}]
Let us have a state $\raIfJZxu{\xi} \in \VWqEPttf^m$,
$ \raIfJZxu{\xi} = \sum_{i=0}^{m-1} \gamma_i \raIfJZxu{i} $.
If there exist a $k$ \ygMPbVVP{} $ \AfBMaxZI{\gamma_k}^2 < \frac{1}{2m} $,
then the probability of getting $1$ when we measure
\raIfJZxu{\xi} with the measurement of \MOVPSHbF{NlcnHZqK}
is at least $ \frac{1}{16 m^2} $.
\label{tAaagiyx}
\end{lem}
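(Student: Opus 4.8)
The plan is to compute the outcome-$1$ probability exactly and then show that a large overlap of $\raIfJZxu{\xi}$ with the uniform state $\raIfJZxu{u_m}$ of \MOVPSHbF{NlcnHZqK} forces \emph{every} amplitude $\gamma_i$ to be close to $1/\sqrt{m}$ in magnitude; since $\gamma_k$ is assumed to be noticeably smaller than this, a contradiction will follow.

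By \MOVPSHbF{NlcnHZqK}, the probability of obtaining outcome $1$ equals $\delta \faJHPjTb 1 - \AfBMaxZI{\NChmtXsR{u_m}{\xi}}^2$, where $\NChmtXsR{u_m}{\xi} = \frac{1}{\sqrt{m}}\sum_{i=0}^{m-1}\gamma_i$. (When $m = 1$ we have $\AfBMaxZI{\gamma_0}^2 = 1 \geq \frac{1}{2m}$, so the hypothesis is vacuous and we may assume $m \geq 2$.) The first step is to rotate the irrelevant global phase of $\raIfJZxu{\xi}$ so that $\NChmtXsR{u_m}{\xi} = \sqrt{1-\delta} \geq 0$ --- this affects neither the measurement statistics nor $\AfBMaxZI{\gamma_k}$ --- and then decompose $\raIfJZxu{\xi} = \sqrt{1-\delta}\,\raIfJZxu{u_m} + \sqrt{\delta}\,\raIfJZxu{w}$ with $\raIfJZxu{w}$ a unit vector orthogonal to $\raIfJZxu{u_m}$ (when $\delta > 0$; for $\delta = 0$ there is nothing to do). Taking the inner product with $\raIfJZxu{k}$ and using the triangle inequality together with $\AfBMaxZI{\NChmtXsR{k}{w}} \leq 1$ then gives
\[ \AfBMaxZI{\gamma_k} \;\geq\; \frac{\sqrt{1-\delta}}{\sqrt{m}} - \sqrt{\delta}. \]

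Next I would assume towards contradiction that $\delta < \frac{1}{16m^2}$. Then $\sqrt{\delta} < \frac{1}{4m}$, and since $m \geq 2$ we get $\delta < \frac{1}{64}$, hence $\sqrt{1-\delta} > \frac{99}{100}$. Plugging these two estimates into the displayed bound, and using $\frac{1}{4m} \leq \frac{1}{4\sqrt{2}\,\sqrt{m}}$, yields $\AfBMaxZI{\gamma_k} > \frac{1}{\sqrt{2m}}$, which contradicts the hypothesis $\AfBMaxZI{\gamma_k}^2 < \frac{1}{2m}$. Hence $\delta \geq \frac{1}{16m^2}$, as required.

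I do not expect a genuine obstacle here: the entire content is the observation that high fidelity with the uniform superposition pins down each coordinate. The only things to handle carefully are the phase-fixing and normalization in the decomposition of $\raIfJZxu{\xi}$, and choosing the constants so that the final elementary inequality $\frac{99}{100\sqrt{m}} - \frac{1}{4m} > \frac{1}{\sqrt{2m}}$ really does hold for every $m \geq 2$ --- which is exactly why the (vacuous) case $m = 1$ is separated out at the start.
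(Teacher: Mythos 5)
Your proof is correct, but it takes a genuinely different route from the paper's. The paper argues at the level of probability distributions: it uses the identity $\sqrt{1 - \AfBMaxZI{\NChmtXsR{u_m}{\xi}}^2} = \vPiROeRG{\GyFczGjW{u_m}}{\GyFczGjW{\xi}}$ for pure states, lower-bounds this trace distance by the total variation distance between the two computational-basis outcome distributions, and then keeps only the single term $\AfBMaxZI{\tfrac{1}{m} - \AfBMaxZI{\gamma_k}^2} > \tfrac{1}{2m}$, which yields $1 - \AfBMaxZI{\NChmtXsR{u_m}{\xi}}^2 > \tfrac{1}{16m^2}$ in one line with no case analysis and no numerics. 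You instead work at the level of amplitudes: you write $\raIfJZxu{\xi} = \sqrt{1-\delta}\,\raIfJZxu{u_m} + \sqrt{\delta}\,\raIfJZxu{w}$ with $\raIfJZxu{w} \perp \raIfJZxu{u_m}$ and use the triangle inequality to force $\AfBMaxZI{\gamma_k} \geq \tfrac{\sqrt{1-\delta}}{\sqrt{m}} - \sqrt{\delta}$, deriving a contradiction with $\AfBMaxZI{\gamma_k}^2 < \tfrac{1}{2m}$ when $\delta < \tfrac{1}{16m^2}$. Your version is more elementary (it needs no facts about the trace norm), and your handling of the global phase, the $m=1$ degeneracy, and the final numeric inequality $\tfrac{99}{100} - \tfrac{1}{4\sqrt{2}} > \tfrac{1}{\sqrt{2}}$ all check out; the price is precisely that extra bookkeeping, which the paper's distribution-level argument avoids while landing on the same constant.
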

\begin{proof}
Let $p$ and $q$ be the probability distributions that
arise when we measure \raIfJZxu{\xi} and \raIfJZxu{u_m} in the
computational basis.
Or in other words, let $p$ be the probability vector with
elements $\AfBMaxZI{\gamma_i}^2$, and $q$ be the vector with
all elements equal to $\frac{1}{m}$.
Similarly to \dWQdKYxc{TAMDFIWI}
we have that
\begin{align*}
\sqrt{1 - \AfBMaxZI{\NChmtXsR{u_m}{\xi}}^2}
&= \vPiROeRG{\GyFczGjW{u_m}}{\GyFczGjW{\xi}} \\
&\geq \frac{1}{2} \xOJdCehH{q - p} \\
&= \frac{1}{2} \sum_{i=0}^{m-1} \AfBMaxZI{ \frac{1}{m}
- \AfBMaxZI{\gamma_i}^2 } \\
&\geq \frac{1}{2} \AfBMaxZI{ \frac{1}{m} - \AfBMaxZI{\gamma_k}^2 } \\
&> \frac{1}{4m} \text{.}
\end{align*}
Since the probability of getting $1$ when we measure
\raIfJZxu{\xi} with the measurement of
\MOVPSHbF{NlcnHZqK} is
$ 1 - \AfBMaxZI{\NChmtXsR{u_m}{\xi}}^2 $, the
statement of the lemma follows.
\end{proof}
The following argument appears in the proof of Lemma~3
of \cite{Chen2010}, which we state here as a separate lemma.
\begin{lem}
Suppose that we have a bipartite quantum state
$ \raIfJZxu{\psi} \in \zdLEuPyh{N} \NUxdODWZ \zdLEuPyh{C} $,
with $ \zdLEuPyh{N} = \VWqEPttf^N $ and $ \zdLEuPyh{C} = \VWqEPttf^C $.
We can write this state as
\[ \raIfJZxu{\psi} = \sum_{i=0}^{N-1} \alpha_i \raIfJZxu{i}
\sum_{j=0}^{C-1} \beta_{i,j} \raIfJZxu{j} \text{,} \]
where $ \sum_i \AfBMaxZI{\alpha_i}^2 = 1 $, and
for all $i$, $ \sum_j \AfBMaxZI{\beta_{i,j}}^2 = 1 $.
Suppose that the probability of measuring $0$ on
\zdLEuPyh{C} with the measurement of
\MOVPSHbF{NlcnHZqK} is $p$,
and after the measurement the resulting state on
\zdLEuPyh{N} is
\[ \raIfJZxu{\xi} \in \zdLEuPyh{N} \text{,} \qquad
\raIfJZxu{\xi} = \sum_{i=0}^{N-1} \gamma_i \raIfJZxu{i} \text{,} \]
with $ \sum_i \AfBMaxZI{\gamma_i}^2 = 1 $.
Then for all $i$ it holds that
\[ \AfBMaxZI{\alpha_i}^2 \geq p \cdot \AfBMaxZI{\gamma_i}^2 . \]
\label{KHySvfFa}
\end{lem}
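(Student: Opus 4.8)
The plan is to write down the unnormalised post-measurement vector explicitly and then read off the coefficients $\gamma_i$ in terms of the $\alpha_i$ and the conditional \zdLEuPyh{C}-register states.

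First I would group the \zdLEuPyh{C}-part of \raIfJZxu{\psi} by setting $ \raIfJZxu{\phi_i} \faJHPjTb \sum_{j=0}^{C-1} \beta_{i,j} \raIfJZxu{j} $, which is a unit vector for every $i$ by the assumption $ \sum_j \AfBMaxZI{\beta_{i,j}}^2 = 1 $, so that $ \raIfJZxu{\psi} = \sum_{i=0}^{N-1} \alpha_i \raIfJZxu{i} \raIfJZxu{\phi_i} $. Next I would recall that the measurement of \MOVPSHbF{NlcnHZqK}, applied to register \zdLEuPyh{C}, has $ \nVzyhtCm{P}_0 = \GyFczGjW{u_C} $ for outcome $0$, and apply $ \YMElKjII \NUxdODWZ \nVzyhtCm{P}_0 $ to \raIfJZxu{\psi}. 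Using $ \GyFczGjW{u_C} \raIfJZxu{\phi_i} = \NChmtXsR{u_C}{\phi_i} \raIfJZxu{u_C} $, this collapses the \zdLEuPyh{C}-register to \raIfJZxu{u_C} and leaves the unnormalised \zdLEuPyh{N}-register vector $ \sum_{i=0}^{N-1} \alpha_i \NChmtXsR{u_C}{\phi_i} \raIfJZxu{i} $.

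From here the remaining steps are bookkeeping: the probability of outcome $0$ is $ p = \sum_i \AfBMaxZI{\alpha_i}^2 \AfBMaxZI{\NChmtXsR{u_C}{\phi_i}}^2 $, and normalising the \zdLEuPyh{N}-register state gives $ \gamma_i = \alpha_i \NChmtXsR{u_C}{\phi_i} / \sqrt{p} $ (this presupposes $p > 0$, which is implicit since \raIfJZxu{\xi} is assumed to exist), hence $ \AfBMaxZI{\gamma_i}^2 = \AfBMaxZI{\alpha_i}^2 \AfBMaxZI{\NChmtXsR{u_C}{\phi_i}}^2 / p $. The final step is to note that \raIfJZxu{u_C} and \raIfJZxu{\phi_i} are both unit vectors, so by the \vHtwIXhr{} $ \AfBMaxZI{\NChmtXsR{u_C}{\phi_i}}^2 \leq 1 $, and therefore $ \AfBMaxZI{\gamma_i}^2 \leq \AfBMaxZI{\alpha_i}^2 / p $, which rearranges to the claimed $ \AfBMaxZI{\alpha_i}^2 \geq p \cdot \AfBMaxZI{\gamma_i}^2 $.

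I do not anticipate a real obstacle here: once the post-measurement state is written down, the lemma drops out of Cauchy--Schwarz in a single line. The only points that need a word of care are keeping the normalisation factor $\sqrt{p}$ straight when identifying $\gamma_i$, and the degenerate case $p = 0$, in which the inequality holds vacuously since its right-hand side vanishes.
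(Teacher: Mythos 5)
Your proposal is correct and follows essentially the same route as the paper: the paper computes the joint probability $q_i = p\,\AfBMaxZI{\gamma_i}^2 = \fHAwcqZo{\psi}\DnoYoBfx{\GyFczGjW{i}\NUxdODWZ\GyFczGjW{u_C}}\raIfJZxu{\psi}$ and bounds $\AfBMaxZI{\sum_j \beta_{i,j}}^2 \leq C\sum_j\AfBMaxZI{\beta_{i,j}}^2$ by the \vHtwIXhr, which is exactly your step $\AfBMaxZI{\NChmtXsR{u_C}{\phi_i}}^2 \leq 1$ in different notation. Your explicit handling of the degenerate case $p=0$ is a small point the paper leaves implicit.
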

\begin{proof}
Let $q_i$ denote the probability that
if we measure the \zdLEuPyh{C} part of \raIfJZxu{\psi}
with the measurement of \MOVPSHbF{NlcnHZqK}
we get outcome $0$, then if we measure the \zdLEuPyh{N}
part in the standard basis, we get $i$.
Note that for all $i$,
\[ q_i = p \cdot \AfBMaxZI{\gamma_i}^2 . \]
On the other hand,
\begin{align*}
q_i &= \fHAwcqZo{\psi}
\DnoYoBfx{ \GyFczGjW{i} \NUxdODWZ \GyFczGjW{u_C} } \raIfJZxu{\psi} \\
&= \frac{\AfBMaxZI{\alpha_i}^2}{C} \cdot
\AfBMaxZI{ \sum_{j=0}^{C-1} \beta_{i,j} }^2 \\
&\leq \frac{\AfBMaxZI{\alpha_i}^2}{C} \cdot C \cdot
\sum_{j=0}^{C-1} \AfBMaxZI{\beta_{i,j}}^2 \\
&= \AfBMaxZI{\alpha_i}^2 \text{,}
\end{align*}
where the inequality above follows from the \vHtwIXhr.
The above derivations imply the statement of the lemma.
\end{proof}
Analogously to Lemma~2.9 of \cite{Blier2009}
(and to Lemma~6.3 of \cite{Chiesa2011}), the following
lemma says that if the states pass some of the tests with high probability,
then it must be that all nodes appear with high enough probability.
\begin{lem}
Suppose that \raIfJZxu{\psi} and \raIfJZxu{\varphi} pass
the Equality Test of \FxucuSoM{jEVcePOJ},
line~\ref{DkvnMEbW}
in the Consistency Test,
and also the Uniformity Test with
probability at least $ 1 - 10^{-10} \cdot 4^{-n} $.
Then for all $i$,
$ \AfBMaxZI{\alpha_i}^2 \geq 100^{-1} \cdot 2^{-n} $.
\label{mkiZykmS}
\end{lem}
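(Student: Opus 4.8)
The plan is to argue by contradiction, closely following the structure of Lemma~2.9 of \cite{Blier2009}. Suppose there exists an index $i^*$ with $\AfBMaxZI{\alpha_{i^*}}^2 < \frac{1}{100 \cdot 2^n}$. I want to show that then the Uniformity Test rejects with probability more than $10^{-10} \cdot 4^{-n}$, contradicting the hypothesis. First I would invoke \dWQdKYxc{KnUWPcsP}: since \raIfJZxu{\psi} and \raIfJZxu{\varphi} pass the Equality Test and line~\ref{DkvnMEbW} with probability at least $1 - 10^{-10} \cdot 4^{-n}$, the probability of measuring $0$ on $\VSdxZJNO{C}_1$ in line~\ref{GTSlDeLX} is at least $0.05$. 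Condition on that event and let \raIfJZxu{\xi} be the resulting state on $\VSdxZJNO{N}_1$, written as $\raIfJZxu{\xi} = \sum_i \gamma_i \raIfJZxu{i}$.

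The key step is to transfer the smallness of $\AfBMaxZI{\alpha_{i^*}}^2$ to smallness of $\AfBMaxZI{\gamma_{i^*}}^2$. Here I would use \dWQdKYxc{KHySvfFa} with $\zdLEuPyh{N} = \VWqEPttf^{2^n}$ and $\zdLEuPyh{C} = \VWqEPttf^3$ and $p \geq 0.05$: it gives $\AfBMaxZI{\alpha_{i^*}}^2 \geq p \cdot \AfBMaxZI{\gamma_{i^*}}^2$, so $\AfBMaxZI{\gamma_{i^*}}^2 \leq \frac{\AfBMaxZI{\alpha_{i^*}}^2}{p} < \frac{1}{0.05 \cdot 100 \cdot 2^n} = \frac{1}{5 \cdot 2^n} $. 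Unfortunately this is not quite below the $\frac{1}{2 \cdot 2^n}$ threshold needed by \dWQdKYxc{tAaagiyx}, so I would sharpen the constant: in \dWQdKYxc{KnUWPcsP} the true bound on the probability of measuring $0$ on $\VSdxZJNO{C}_1$ is strictly bigger than $\frac{5}{100}$ (indeed essentially $\frac{6}{100}$ times $(1 - o(1))$), and the constant $\frac{1}{100}$ in the threshold defining the `heavy' vertices can be chosen somewhat more generously, so that $\AfBMaxZI{\gamma_{i^*}}^2 < \frac{1}{2 \cdot 2^n}$ does follow. (Alternatively one tunes the $10^{-10}$ slack.) With the threshold met, \dWQdKYxc{tAaagiyx} applied with $m = 2^n$ yields that measuring \raIfJZxu{\xi} with the measurement of \MOVPSHbF{NlcnHZqK} gives outcome $1$ with probability at least $\frac{1}{16 \cdot 4^n}$.

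Putting it together: the Uniformity Test rejects (at line~\ref{GTSlDeLX}) with probability at least $0.05 \cdot \frac{1}{16 \cdot 4^n} = \frac{1}{320 \cdot 4^n} > \frac{1}{10^{10} \cdot 4^n}$, contradicting the assumption that the states pass the Uniformity Test (together with the other two tests) with probability at least $1 - 10^{-10} \cdot 4^{-n}$. Hence no such $i^*$ exists, and $\AfBMaxZI{\alpha_i}^2 \geq \frac{1}{100 \cdot 2^n}$ for all $i$. The main obstacle I anticipate is exactly the constant-chasing in the middle step — making sure the product of the `probability of seeing outcome $0$ on the color register' (from \dWQdKYxc{KnUWPcsP}) and the contraction factor from \dWQdKYxc{KHySvfFa} leaves $\AfBMaxZI{\gamma_{i^*}}^2$ safely under the $\frac{1}{2m}$ hypothesis of \dWQdKYxc{tAaagiyx}; this may require stating \dWQdKYxc{KnUWPcsP} and the heavy-vertex threshold with slightly different numerical constants than written, but nothing conceptually new is needed.
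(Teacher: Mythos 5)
Your proposal follows exactly the same route as the paper's proof: invoke \dWQdKYxc{KnUWPcsP} to get probability at least $\frac{5}{100}$ of outcome $0$ on $\VSdxZJNO{C}_1$, transfer the smallness of $\AfBMaxZI{\alpha_{i^*}}^2$ to $\AfBMaxZI{\gamma_{i^*}}^2$ via \dWQdKYxc{KHySvfFa}, apply \dWQdKYxc{tAaagiyx}, and multiply the two probabilities to contradict the $1 - 10^{-10}\cdot 4^{-n}$ hypothesis. The only problem is the obstacle you flag, which is not real: you correctly derive $\AfBMaxZI{\gamma_{i^*}}^2 < \frac{1}{5 \cdot 2^n}$, and since $\frac{1}{5 \cdot 2^n} < \frac{1}{2 \cdot 2^n} = \frac{1}{2m}$ (smaller denominator constant would be the bad direction, not larger), the hypothesis of \dWQdKYxc{tAaagiyx} is already satisfied with room to spare. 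No sharpening of \dWQdKYxc{KnUWPcsP}, no retuning of the heavy-vertex threshold or the $10^{-10}$ slack is needed; the paper's proof is precisely your argument with this inequality read in the correct direction.
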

\begin{proof}
Because of \dWQdKYxc{KnUWPcsP}
the probability of measuring $0$ on
$\VSdxZJNO{C}_1$ in line~\ref{GTSlDeLX}
of \FxucuSoM{jEVcePOJ} is at least $\frac{5}{100}$.
Let the state of $\VSdxZJNO{N}_1$ be
$ \raIfJZxu{\xi} = \sum_{i=0}^{2^n-1} \gamma_i \raIfJZxu{i} $,
after we got $0$ on $\VSdxZJNO{C}_1$.
Towards contradiction suppose that $ \exists i $
\ygMPbVVP{} $ \AfBMaxZI{\alpha_i}^2 < \frac{1}{100 \cdot 2^n} $.
Since we got this measurement result with probability
$ \geq \frac{5}{100} $, \dWQdKYxc{KHySvfFa}
implies that $ \AfBMaxZI{\gamma_i}^2 < \frac{1}{5 \cdot 2^n} $.
From \dWQdKYxc{tAaagiyx}
the probability of getting $1$ when measuring
$\VSdxZJNO{N}_1$ in line~\ref{GTSlDeLX}
is at least $ \frac{1}{16 \cdot 4^n} $.
So the probability of failing the Uniformity Test is at least
$ \frac{5}{100} \cdot \frac{1}{16 \cdot 4^n}
> \frac{1}{10^{10} \cdot 4^n} $.
This is a contradiction.
\end{proof}
The following lemma finishes the proof of soundness for
verifier \wQoFYwHz{V}.
\begin{lem}[Soundness]
If $ \oVrArWkh \notin \HNGSAaAW $ then verifier
\wQoFYwHz{V} described by \FxucuSoM{jEVcePOJ}
will reject with probability at least
$ \displaystyle \frac{1}{3 \cdot 10^{10} \cdot 4^n} $.
\label{SaalgorQ}
\end{lem}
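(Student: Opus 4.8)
The plan is a two-case argument resting entirely on the lemmas already established in this appendix. Recall that each of the three tests of \FxucuSoM{jEVcePOJ} is chosen with probability $\frac{1}{3}$. \textbf{Case 1:} at least one of the following holds --- the Equality Test, run on the given state, rejects with probability $\geq 10^{-10} \cdot 4^{-n}$; or line~\ref{DkvnMEbW} of the Consistency Test rejects with probability $\geq 10^{-10} \cdot 4^{-n}$; or the Uniformity Test rejects with probability $\geq 10^{-10} \cdot 4^{-n}$. In this case \wQoFYwHz{V} rejects with probability at least $\frac{1}{3} \cdot 10^{-10} \cdot 4^{-n} = \frac{1}{3 \cdot 10^{10} \cdot 4^n}$, and the lemma holds.

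\textbf{Case 2:} none of those three events occurs with probability $\geq 10^{-10} \cdot 4^{-n}$, i.e., \raIfJZxu{\psi} and \raIfJZxu{\varphi} pass the Equality Test, pass line~\ref{DkvnMEbW} of the Consistency Test, and pass the Uniformity Test, each with probability $> 1 - 10^{-10} \cdot 4^{-n}$. Here I would first invoke \dWQdKYxc{mkiZykmS} to get $\AfBMaxZI{\alpha_i}^2 \geq 100^{-1} \cdot 2^{-n}$ for every $i$, and then --- the hypotheses of \dWQdKYxc{bXmyQJDh} being satisfied as well --- for each $i$ pick the color $\HxLQfJAj{c}{i} \in \GGfSpoIn{0,1,2}$ with $\AfBMaxZI{\beta_{i,\HxLQfJAj{c}{i}}}^2 \geq 0.9$. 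This $c$ is a total $3$-coloring of $\HelzWRDf{2^n}$, hence of the vertex set of $G$; since $ \oVrArWkh \notin \HNGSAaAW $, the graph $G$ is not $3$-colorable, so $c$ must violate an edge: there exist $v_1 < v_2$ with $\DnoYoBfx{v_1,v_2}$ an edge of $G$, equivalently $\HxLQfJAj{\oVrArWkh}{v_1,v_2} = 11$, and $\HxLQfJAj{c}{v_1} = \HxLQfJAj{c}{v_2}$.

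It then remains to lower-bound the chance that the Consistency Test catches this edge. Measuring $\VSdxZJNO{R}_1$ in line~\ref{YKlinzMJ} yields $\DnoYoBfx{v_1,\HxLQfJAj{c}{v_1}}$ with probability $\AfBMaxZI{\alpha_{v_1}}^2 \AfBMaxZI{\beta_{v_1,\HxLQfJAj{c}{v_1}}}^2 \geq \frac{9}{1000 \cdot 2^n}$. Since the coloring was read off from the amplitudes of $\VSdxZJNO{R}_1$, I would bridge to the primed amplitudes of $\VSdxZJNO{R}_2$ via \dWQdKYxc{TAMDFIWI} with $\varepsilon = 10^{-10} \cdot 4^{-n}$, obtaining $\AfBMaxZI{\alpha_{v_2}' \beta_{v_2,\HxLQfJAj{c}{v_2}}'}^2 \geq \frac{9}{1000 \cdot 2^n} - \sqrt{8 \cdot 10^{-10} \cdot 4^{-n}} > \frac{8}{1000 \cdot 2^n}$. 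Because the joint state of $\VSdxZJNO{R}_1$ and $\VSdxZJNO{R}_2$ is separable, the probability that $\VSdxZJNO{R}_1$ gives $\DnoYoBfx{v_1,\HxLQfJAj{c}{v_1}}$ and $\VSdxZJNO{R}_2$ gives $\DnoYoBfx{v_2,\HxLQfJAj{c}{v_2}}$ is at least $\frac{9}{1000 \cdot 2^n} \cdot \frac{8}{1000 \cdot 2^n} = \frac{72}{10^6 \cdot 4^n}$; in that event $c_1 = c_2$ while $\HxLQfJAj{\oVrArWkh}{v_1,v_2} = 11$, so the Consistency Test rejects. Multiplying by the probability $\frac{1}{3}$ that the Consistency Test is chosen, \wQoFYwHz{V} rejects with probability at least $\frac{24}{10^6 \cdot 4^n} > \frac{1}{3 \cdot 10^{10} \cdot 4^n}$, completing the proof.

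I do not expect a genuine obstacle here: the substance of the argument lives in the supporting lemmas (\dWQdKYxc{TAMDFIWI}, \dWQdKYxc{bXmyQJDh}, \dWQdKYxc{mkiZykmS}), and what remains is a case split plus constant-chasing with an enormous amount of slack. The only step needing a moment's care is the passage from the unprimed amplitudes, through which \dWQdKYxc{bXmyQJDh} defines the coloring, to the primed amplitudes needed for the second register --- which is exactly what the Equality-Test estimate of \dWQdKYxc{TAMDFIWI} supplies.
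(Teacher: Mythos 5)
Your proposal is correct and follows essentially the same route as the paper: the same implicit case split ("otherwise we are done"), the same use of \dWQdKYxc{mkiZykmS} and \dWQdKYxc{bXmyQJDh} to extract a well-defined total coloring that must violate some edge, and the same bridge via \dWQdKYxc{TAMDFIWI} from the amplitudes of $\VSdxZJNO{R}_1$ to those of $\VSdxZJNO{R}_2$ before multiplying the two measurement probabilities and the $\frac{1}{3}$ test-selection factor. The only differences are cosmetic choices of constants, and your arithmetic checks out with ample slack.
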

\begin{proof}
Assume that \raIfJZxu{\psi} and \raIfJZxu{\varphi} pass
the Equality Test, the Uniformity Test, and
line~\ref{DkvnMEbW} of
\FxucuSoM{jEVcePOJ} with
probability at least $1 - \frac{1}{10^{10} \cdot 4^n}$,
as otherwise we are done.
Let \HxLQfJAj{c}{i} be equal to the $j$ for which \AfBMaxZI{\beta_{i,j}} is
maximal, or in other words,
\[ \HxLQfJAj{c}{i} \faJHPjTb \argmax_j \AfBMaxZI{\beta_{i,j}} . \]
Because of \dWQdKYxc{mkiZykmS} and
\ref{bXmyQJDh} this maximum
is well defined.
According to \dWQdKYxc{mkiZykmS},
when measuring \raIfJZxu{\psi} in line~\ref{YKlinzMJ},
the probability of obtaining \DnoYoBfx{k, \HxLQfJAj{c}{k}}, for all $k$,
is at least $ \AfBMaxZI{\alpha_k}^2 \cdot \frac{9}{10}
\geq \frac{1}{100 \cdot 2^n} \cdot \frac{9}{10}
> \frac{1}{120 \cdot 2^n} $.
Similarly, from \dWQdKYxc{TAMDFIWI},
for all $k$ the probability that we get \DnoYoBfx{k, \HxLQfJAj{c}{k}}
when measuring \raIfJZxu{\varphi} in
line~\ref{YKlinzMJ}, is at least
$ \frac{1}{120 \cdot 2^n} - \frac{\sqrt{8}}{10^5 \cdot 2^n}
> \frac{1}{240 \cdot 2^n} $.
Since the graph is not 3-colorable
$ \exists u,v \in V $ \ygMPbVVP{}
$ \DnoYoBfx{u,v} \in E $ and $ \HxLQfJAj{c}{u} = \HxLQfJAj{c}{v} $.
If in line~\ref{YKlinzMJ} we get
\DnoYoBfx{u, \HxLQfJAj{c}{u}} and \DnoYoBfx{v, \HxLQfJAj{c}{v}} then
the Consistency Test will reject.
This happens with probability at least
\[ \frac{1}{120 \cdot 2^n} \cdot
\frac{1}{240 \cdot 2^n}
> \frac{1}{10^{10} \cdot 4^n} . \]
Since the Consistency Test is chosen with probability
$\frac{1}{3}$, the statement of the lemma follows.
\end{proof}
\section{Proof of \dWQdKYxc{HTTPbCwP}}
\label{zKhwWgDg}
This section presents the proof of
\dWQdKYxc{HTTPbCwP}
from page~\pageref{HTTPbCwP}.
We will need the following previously known facts.
\begin{lem}[See \eRAJAvVL{} Chapter~4.5.2 of \cite{Nielsen2000}]
An arbitrary unitary operator on $m$ qubits
can be implemented using a circuit
containing \QUPOtzml{m^2 4^m} single-qubit and \gctqElrH{} gates.
\label{grWbYQoV}
\end{lem}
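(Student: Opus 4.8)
The plan is to follow the standard two-step reduction (see Chapter~4.5.2 of \cite{Nielsen2000}). First I would show that an arbitrary $ 2^m \times 2^m $ unitary $U$ can be written as a product of \QUPOtzml{4^m} \emph{two-level} unitaries, \WpedDmmP{} unitaries acting as the identity on all but the two-dimensional subspace spanned by two computational basis states \raIfJZxu{s} and \raIfJZxu{t}. This follows from a Gaussian-elimination-type argument: processing the columns of $U$ one at a time, one multiplies $U$ on the left by two-level unitaries that successively zero out the sub-diagonal entries of the current column, until what remains is the identity; handling column $j$ costs at most $ 2^m - j $ such factors, so the total is at most $ \binom{2^m}{2} = \QUPOtzml{4^m} $, and the entries of every factor are obtained by elementary arithmetic on the current entries of $U$. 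Taking adjoints then expresses $U$ itself as a product of that many two-level unitaries.

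Second I would show that each two-level unitary, acting non-trivially only on \raIfJZxu{s} and \raIfJZxu{t}, can be implemented with \QUPOtzml{m^2} single-qubit and \gctqElrH{} gates. The tool is a Gray code joining $s$ and $t$: a sequence $ s = g_0, g_1, \dotsc, g_\ell = t $ of $m$-bit strings, $ \ell \leq m $, in which consecutive strings differ in exactly one coordinate. A chain of generalized Toffoli gates realises the basis permutation that carries \raIfJZxu{s} to \raIfJZxu{g_{\ell-1}} (which differs from $t$ in a single bit); then the prescribed $ 2 \times 2 $ action is performed by one multiply-controlled single-qubit gate on that bit, controlled on the other coordinates matching $t$; finally the permutation is reversed. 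There are \QUPOtzml{m} multiply-controlled gates, each on at most $m$ qubits, and each decomposes into \QUPOtzml{m} elementary gates (with one clean scratch qubit) or \QUPOtzml{m^2} without, giving \QUPOtzml{m^2} gates per two-level unitary. Multiplying by the \QUPOtzml{4^m} factors from the first step yields the claimed \QUPOtzml{m^2 4^m} bound.

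The step that needs the most care is the decomposition of a multiply-controlled single-qubit gate $ C^{m-1}(W) $ into \gctqElrH{} and single-qubit gates, since the clean \QUPOtzml{m} bound uses an ancilla whereas the ancilla-free version relies on the recursive construction of Toffoli-type gates; in either case one must check that the constants hidden in \QUPOtzml{\cdot} do not depend on $m$, and that no register beyond the $m$ qubits carrying $U$ is actually needed (or that borrowing a single clean qubit is harmless for the use of this lemma in \dWQdKYxc{HTTPbCwP}). The remaining ingredients — correctness of the Gray-code permutation and of the Gaussian-elimination decomposition, and the fact that every gate parameter can be computed classically from the entries of $U$ — are routine.
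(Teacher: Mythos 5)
The paper gives no proof of this lemma at all --- it is stated as a known fact with a pointer to Chapter~4.5.2 of \cite{Nielsen2000} --- and your reconstruction is exactly that textbook argument (reduction to \QUPOtzml{4^m} two-level unitaries, then Gray codes and multiply-controlled single-qubit gates for each), so it is correct and takes the same route as the cited source. You also rightly flag the one real subtlety, that the \QUPOtzml{m^2}-per-factor count rests on the \QUPOtzml{m} ancilla-assisted implementation of multiply-controlled gates; this is harmless for the paper's application in \dWQdKYxc{HTTPbCwP}, where the verifier has polynomially many workspace qubits available.
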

\begin{lem}
Any two-dimensional unitary operator \nVzyhtCm{U}
can be written in the form
\[ \nVzyhtCm{U} = e^{\WHYFCIFc \theta} \sYcexCeY{z}{\alpha} \lDGygnkY
\sYcexCeY{z}{\beta} \lDGygnkY \sYcexCeY{z}{\gamma} \text{,} \]
for $ \alpha, \beta, \gamma, \theta \in
\left[ 0, 2 \pi \right) $.
\label{XPxqbsml}
\end{lem}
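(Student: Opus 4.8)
The plan is to reduce \dWQdKYxc{XPxqbsml} to the standard single-qubit normal-form theorem, rephrased in terms of the gates $\lDGygnkY$ and $\sYcexCeY{z}{\cdot}$ fixed in \HVRLKXNY{EriaXXKj}. The key ingredient is the identity
\[
\lDGygnkY \, \sYcexCeY{z}{\beta} \, \lDGygnkY = e^{\WHYFCIFc \beta / 2} \, \sYcexCeY{x}{\beta} \text{,}
\]
which holds for every real $\beta$ and is checked by a two-line matrix multiplication; it merely records that $\lDGygnkY$ conjugates $\sYcexCeY{z}{\cdot}$ into $\sYcexCeY{x}{\cdot}$, up to the global phase forced by the normalization $\sYcexCeY{z}{\omega} = \GyFczGjW{0} + e^{\WHYFCIFc \omega} \GyFczGjW{1}$ used in the paper. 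Granting this, it is enough to prove that an arbitrary two-dimensional unitary $\nVzyhtCm{U}$ can be written as $\nVzyhtCm{U} = e^{\WHYFCIFc \theta_0} \sYcexCeY{z}{\alpha} \sYcexCeY{x}{\beta} \sYcexCeY{z}{\gamma}$, since then substituting $\sYcexCeY{x}{\beta} = e^{- \WHYFCIFc \beta / 2} \lDGygnkY \sYcexCeY{z}{\beta} \lDGygnkY$ and absorbing the scalar $e^{- \WHYFCIFc \beta / 2}$ into $e^{\WHYFCIFc \theta_0}$ gives exactly the form claimed in the lemma.

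To prove the $\sYcexCeY{z}{\cdot}\,\sYcexCeY{x}{\cdot}\,\sYcexCeY{z}{\cdot}$ decomposition, I would first strip off a global phase: since $\left| \det \nVzyhtCm{U} \right| = 1$, write $\det \nVzyhtCm{U} = e^{2 \WHYFCIFc \delta}$, so that $e^{- \WHYFCIFc \delta} \nVzyhtCm{U} \in SU(2)$ and hence has first column $\DnoYoBfx{a, b}$ and second column $\DnoYoBfx{-\overline{b}, \overline{a}}$ with $\left| a \right|^2 + \left| b \right|^2 = 1$. Next I would expand the product $\sYcexCeY{z}{\alpha} \sYcexCeY{x}{\beta} \sYcexCeY{z}{\gamma}$ explicitly; in the paper's conventions it is the matrix with diagonal entries $\cos \frac{\beta}{2}$ and $e^{\WHYFCIFc \DnoYoBfx{\alpha + \gamma}} \cos \frac{\beta}{2}$ and off-diagonal entries $- \WHYFCIFc e^{\WHYFCIFc \gamma} \sin \frac{\beta}{2}$ and $- \WHYFCIFc e^{\WHYFCIFc \alpha} \sin \frac{\beta}{2}$. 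Matching this, times a free global phase, against the $SU(2)$ form above: equality of the moduli of the diagonal entries gives $\cos \frac{\beta}{2} = \left| a \right|$ (and we may take $\beta \in \left[ 0, \pi \right]$); comparing the two diagonal arguments fixes the residual phase together with $\alpha + \gamma$; comparing the two off-diagonal arguments fixes $\alpha - \gamma$. This pins down $\alpha$, $\beta$, $\gamma$ and an overall phase. The degenerate case $b = 0$ is handled separately and more easily: there $\nVzyhtCm{U}$ is already diagonal, so one takes $\beta = 0$.

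Finally, each of $\alpha, \beta, \gamma, \theta$ may be reduced modulo $2 \pi$ into $\left[ 0, 2 \pi \right)$ without altering the operator, because $\sYcexCeY{z}{\cdot}$ and the scalar $e^{\WHYFCIFc \theta}$ are both $2 \pi$-periodic in this normalization. I do not anticipate any genuine obstacle: the lemma is a classical $SU(2)$ normal form. The one thing to be careful about is the bookkeeping of global phases — every use of $\lDGygnkY \sYcexCeY{z}{\beta} \lDGygnkY = e^{\WHYFCIFc \beta / 2} \sYcexCeY{x}{\beta}$ and every comparison with the unbalanced definition of $\sYcexCeY{z}{\cdot}$ releases a scalar phase, and all of them must be tracked so that they combine into the single factor $e^{\WHYFCIFc \theta}$ with nothing over- or under-counted.
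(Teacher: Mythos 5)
Your proposal is correct and follows essentially the same route as the paper: both reduce the claim to the standard decomposition $ \nVzyhtCm{U} = e^{\WHYFCIFc \theta_0} \sYcexCeY{z}{\alpha} \sYcexCeY{x}{\beta} \sYcexCeY{z}{\gamma} $ and then replace $ \sYcexCeY{x}{\beta} $ by $ \lDGygnkY \sYcexCeY{z}{\beta} \lDGygnkY $ up to a global phase, which you track correctly for the paper's unbalanced normalization of $ \sYcexCeY{z}{\omega} $. The only difference is that the paper cites the $z$--$x$--$z$ form from Nielsen and Chuang, whereas you prove it by explicit matrix matching --- a sound, if longer, substitute.
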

\begin{proof}
It is well-known that one can write any \nVzyhtCm{U} as
$ \nVzyhtCm{U} = e^{\WHYFCIFc \vartheta} \sYcexCeY{z}{\alpha}
\sYcexCeY{x}{\beta} \sYcexCeY{z}{\gamma} $.\footnote{See
\eRAJAvVL{} Theorem~4.1 and Exercise~4.11 from \cite{Nielsen2000}.}
The lemma follows from the fact that
$ \sYcexCeY{x}{\beta} = e^{\WHYFCIFc \xi}
\lDGygnkY \sYcexCeY{z}{\beta} \lDGygnkY $.
\end{proof}
The following definition and lemma are from \cite{Bravyi2005},
but we restate them in a different way,
and include the proof for the sake of completeness.
\begin{defi}
Define the `magic' state \raIfJZxu{m_{\omega}} to be
$ \displaystyle \raIfJZxu{m_{\omega}} \faJHPjTb \frac{1}{\sqrt{2}}
\DnoYoBfx{ \raIfJZxu{0} + e^{\WHYFCIFc \omega} \raIfJZxu{1} } $.
\end{defi}
\begin{lem}
There exists a constant size quantum circuit,
made up of \lDGygnkY, \gctqElrH{} and classical logical gates,
\ygMPbVVP{} on input \raIfJZxu{\varphi} and \raIfJZxu{m_{\omega}},
it produces the state $ \sYcexCeY{z}{\omega} \raIfJZxu{\varphi} $
with probability $ 1/2 $.
(This probability is independent of \raIfJZxu{\varphi}
and $\omega$.)
\label{APGlxUTv}
\end{lem}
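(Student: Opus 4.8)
The plan is to prove this via the standard ``gate-teleportation'' gadget for diagonal one-qubit rotations. Write the input qubit as $ \raIfJZxu{\varphi} = a \raIfJZxu{0} + b \raIfJZxu{1} $ with $ \AfBMaxZI{a}^2 + \AfBMaxZI{b}^2 = 1 $. The circuit I would use does the following: apply a \gctqElrH{} gate with \raIfJZxu{\varphi} as the control qubit and the magic state \raIfJZxu{m_{\omega}} as the target qubit; then measure the target qubit in the computational basis; finally, output the remaining qubit together with a classical flag set to ``success'' if the outcome was $0$ and to ``failure'' otherwise. This clearly has constant size and is built only from one \gctqElrH{} gate, a single standard-basis measurement, and trivial classical logic (in fact no \lDGygnkY{} gate is even needed, though the circuit is of the allowed form regardless).

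To establish correctness I would just compute the state right before the measurement:
\begin{align*}
\gctqElrH \DnoYoBfx{ \raIfJZxu{\varphi} \NUxdODWZ \raIfJZxu{m_{\omega}} }
&= \frac{1}{\sqrt{2}} \DnoYoBfx{ a \raIfJZxu{0} + b e^{\WHYFCIFc \omega} \raIfJZxu{1} } \NUxdODWZ \raIfJZxu{0} \\
&\quad {} + \frac{1}{\sqrt{2}} \DnoYoBfx{ a e^{\WHYFCIFc \omega} \raIfJZxu{0} + b \raIfJZxu{1} } \NUxdODWZ \raIfJZxu{1} \text{.}
\end{align*}
Hence the measurement yields outcome $0$ with probability $ \frac{1}{2} \DnoYoBfx{ \AfBMaxZI{a}^2 + \AfBMaxZI{b}^2 } = \frac{1}{2} $, which is manifestly independent of \raIfJZxu{\varphi} and of $\omega$; and conditioned on that outcome the remaining qubit is in the state $ a \raIfJZxu{0} + b e^{\WHYFCIFc \omega} \raIfJZxu{1} = \sYcexCeY{z}{\omega} \raIfJZxu{\varphi} $, exactly as claimed. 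For completeness I would also note that on the other outcome the remaining qubit equals $ e^{\WHYFCIFc \omega} \sYcexCeY{z}{-\omega} \raIfJZxu{\varphi} $, which we make no use of.

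I do not expect any genuine obstacle here: the only points that require care are fixing the orientation of the \gctqElrH{} gate --- control on the data qubit \raIfJZxu{\varphi}, target on the magic state --- and the one-line check of the post-measurement state, both immediate from the display above. The lemma will then feed, together with \dWQdKYxc{XPxqbsml} and \dWQdKYxc{grWbYQoV}, into the proof of \dWQdKYxc{HTTPbCwP}: every $ \sYcexCeY{z}{\cdot} $ appearing in the gate decomposition of the verifier's \FIqGvTZG{n}-qubit unitaries is implemented by one copy of this gadget, each consuming a single-qubit magic-state proof, and the $ 2^{-t} $ factor in the completeness and soundness parameters of \dWQdKYxc{HTTPbCwP} is precisely the probability that all $ t $ gadgets succeed simultaneously.
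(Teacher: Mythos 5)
Your proof is correct and is essentially the same magic-state-injection gadget as the paper's: the paper phrases the first step as the two-qubit parity measurement $\GGfSpoIn{\GyFczGjW{00}+\GyFczGjW{11},\ \GyFczGjW{01}+\GyFczGjW{10}}$ followed by a disentangling \gctqElrH, whereas you apply the \gctqElrH{} first and then measure the target in the computational basis, which is the same circuit up to this reordering and a swap of which wire carries the output. The probability-$1/2$ success analysis and the resulting state $\sYcexCeY{z}{\omega}\raIfJZxu{\varphi}$ match the paper exactly.
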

\begin{proof}
Let $ \raIfJZxu{\varphi} = a \raIfJZxu{0} + b \raIfJZxu{1} $
for some $ a,b \in \VWqEPttf $, with
$ \AfBMaxZI{a}^2 + \AfBMaxZI{b}^2 = 1 $.
On input $ \raIfJZxu{m_{\omega}} \NUxdODWZ \raIfJZxu{\varphi} $
the circuit performs the projective measurement defined
by projectors
\[ \nVzyhtCm{P}_1 = \GyFczGjW{00} + \GyFczGjW{11} \text{,} \qquad
\nVzyhtCm{P}_2 = \GyFczGjW{01} + \GyFczGjW{10} \text{.} \]
Note that
\begin{align*}
\nVzyhtCm{P}_1 \DnoYoBfx{ \raIfJZxu{m_{\omega}} \NUxdODWZ \raIfJZxu{\varphi} } &=
\frac{1}{\sqrt{2}} \DnoYoBfx{ a \raIfJZxu{00} + e^{\WHYFCIFc \omega} b \raIfJZxu{11} }
\text{, and} \\
\nVzyhtCm{P}_2 \DnoYoBfx{ \raIfJZxu{m_{\omega}} \NUxdODWZ \raIfJZxu{\varphi} } &=
\frac{1}{\sqrt{2}} \DnoYoBfx{ b \raIfJZxu{01} + e^{\WHYFCIFc \omega} a \raIfJZxu{10} }
\text{.}
\end{align*}
This means that both measurement outcomes
happen with probability $1/2$.
If the outcome is $2$, the circuit
fails to create the state.
However, if the measurement outcome is $1$,
then it performs a \gctqElrH{},
which disentangles the second qubit, and gets the state
\[ a \raIfJZxu{0} + e^{\WHYFCIFc \omega} b \raIfJZxu{1} =
\sYcexCeY{z}{\omega} \raIfJZxu{\varphi} \text{.} \qedhere \]
\end{proof}
By combining the lemmas above we get the following corollary.
\begin{cor}
Let \nVzyhtCm{U} be any unitary operator on $m$ qubits.
There exist an algorithm \wQoFYwHz{A}, using
\lDGygnkY, \gctqElrH{} and classical logical gates, that takes as input
the classical description of the
circuit representing \nVzyhtCm{U}
(as in \dWQdKYxc{grWbYQoV} and
\ref{XPxqbsml})
and the unentangled magic states
(as in \dWQdKYxc{APGlxUTv}), and
produce the state $ \nVzyhtCm{U} \raIfJZxu{0} $,
with probability $ 2^{- \QUPOtzml{m^2 4^m}} $.
Both the length of the input and the running time of
\wQoFYwHz{A} is \QUPOtzml{m^2 4^m}.
\label{OzBzXiMl}
\end{cor}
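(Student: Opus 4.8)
The plan is to chain the three preceding lemmas together mechanically. Given the $m$-qubit unitary $\nVzyhtCm{U}$, I would first invoke \dWQdKYxc{grWbYQoV} to replace it by an equivalent circuit $C$ built from at most \QUPOtzml{m^2 4^m} gates, each of which is either a \gctqElrH{} gate or a single-qubit unitary. The \gctqElrH{} gates lie in our allowed gate set, so \wQoFYwHz{A} can apply them directly. For each single-qubit gate $\nVzyhtCm{W}$ appearing in $C$, I would apply \dWQdKYxc{XPxqbsml} to write $\nVzyhtCm{W} = e^{\WHYFCIFc \theta} \sYcexCeY{z}{\alpha} \lDGygnkY \sYcexCeY{z}{\beta} \lDGygnkY \sYcexCeY{z}{\gamma}$; the global phase $e^{\WHYFCIFc \theta}$ does not affect the produced state and is dropped, the two \lDGygnkY{} gates are again in our gate set, and so the only nontrivial ingredient is implementing the at most three $\sYcexCeY{z}{\cdot}$ rotations.

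For a rotation $\sYcexCeY{z}{\omega}$ acting on some qubit, the classical description of $C$ that \wQoFYwHz{A} receives as input specifies the angle $\omega$, and among its unentangled quantum inputs \wQoFYwHz{A} is given a corresponding magic state $\raIfJZxu{m_{\omega}}$. Feeding the current register together with this magic state into the constant-size gadget of \dWQdKYxc{APGlxUTv}, the algorithm obtains $\sYcexCeY{z}{\omega}$ applied to that register with probability exactly $1/2$, independently of everything else, and it reads off from the gadget's measurement outcome whether the application succeeded; if it did not, \wQoFYwHz{A} halts and declares failure. Carrying this out for every gate of $C$, the total number of gadget invocations is at most three times the number of single-qubit gates, hence still \QUPOtzml{m^2 4^m}, and \wQoFYwHz{A} outputs $\nVzyhtCm{U} \raIfJZxu{0}$ precisely when all of these independent $1/2$-probability events succeed --- which occurs with probability $2^{- \QUPOtzml{m^2 4^m}}$. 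The input consists of the description of $C$ together with one magic-state qubit per $\sYcexCeY{z}{\cdot}$ gate, of total size \QUPOtzml{m^2 4^m}, and the running time is \QUPOtzml{m^2 4^m} as well, since each \gctqElrH, each \lDGygnkY, and each constant-size gadget costs \QUPOtzml{1} and the surrounding classical bookkeeping is cheap.

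There is no genuine obstacle here; the corollary is pure composition of \dWQdKYxc{grWbYQoV}, \ref{XPxqbsml} and \ref{APGlxUTv}. The only point requiring a moment of care is that the gadget of \dWQdKYxc{APGlxUTv} is inherently probabilistic and cannot be retried --- \wQoFYwHz{A} holds only the magic states it was handed, and a magic state already consumed by a failed attempt is of no further use --- so the per-gate success probabilities genuinely multiply rather than being boostable toward $1$. It is also worth recording that the gadget's outcome is heralded, so \wQoFYwHz{A} always knows whether it has produced $\nVzyhtCm{U} \raIfJZxu{0}$; this is precisely what makes the corollary usable in the proof of \dWQdKYxc{HTTPbCwP}, where on a heralded failure the new single-qubit-proof verifier simply accepts. (When $m = \QUPOtzml{\log n}$ one has $4^{m} = \wrDoIJTw{n}$, so the failure probability is $2^{- \wrDoIJTw{n}}$, consistent with the $2^{-t}$ gap loss appearing there.)
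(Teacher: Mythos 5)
Your proposal is correct and is exactly the composition the paper intends: it states \dOYmqxlU{OzBzXiMl} with no written proof beyond ``by combining the lemmas above,'' and your chaining of \dWQdKYxc{grWbYQoV}, \dWQdKYxc{XPxqbsml} and \dWQdKYxc{APGlxUTv} --- with the heralded, non-retryable $1/2$-probability gadget per $\sYcexCeY{z}{\cdot}$ gate multiplying to $2^{-\QUPOtzml{m^2 4^m}}$ --- is that combination spelled out. No discrepancy with the paper's approach.
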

We are ready to prove
\dWQdKYxc{HTTPbCwP}.
\begin{proof}[Proof of \dWQdKYxc{HTTPbCwP}]
Let $ L \in \KOzxJkVZ{\ell}{k}{c}{s} $, where
$ \ell \in \FIqGvTZG{n} $ and
$ k \in \wrDoIJTw{n} $.
(As usual, $n$ is the length of the input.)
\cuIynkaR{} assume that all the proofs have length $\ell$.
Let \wQoFYwHz{V} be the corresponding verifier.
We now construct a proof system
that recognizes the same language $L$, and
all the proofs are $1$ qubit long.
\par
Denote the new verifier by \wQoFYwHz{W}.
For its proofs, \wQoFYwHz{W} expects to get the circuit descriptions of
$k$ unitary operators
$ \nVzyhtCm{U}_1 $, \ldots, $ \nVzyhtCm{U}_k $, together with
the corresponding unentangled $1$ qubit magic states.
Let us denote the number of magic states by $t$.
Since each $\nVzyhtCm{U}_i$ lives on $\ell$ qubits,
the total number of bits and qubits \wQoFYwHz{W} gets
from the provers are at most
$ \QUPOtzml{k \cdot \ell^2 4^{\ell}} \in \wrDoIJTw{n} $.
\wQoFYwHz{W} uses algorith \wQoFYwHz{A} from
\dOYmqxlU{OzBzXiMl}, $k$ times, to get the states
$ \nVzyhtCm{U}_1 \raIfJZxu{0} $, \ldots, $ \nVzyhtCm{U}_k \raIfJZxu{0} $.
If in any of these $k$ cases \wQoFYwHz{A} fails then
\wQoFYwHz{W} accepts. Otherwise \wQoFYwHz{W} runs \wQoFYwHz{V}
on $ \nVzyhtCm{U}_1 \raIfJZxu{0} \NUxdODWZ \cdots
\NUxdODWZ \nVzyhtCm{U}_k \raIfJZxu{0} $
and accepts \iZJgfzde{} \wQoFYwHz{V} accepts.
The running time of \wQoFYwHz{W} is obviously
polynomial.
Let us denote the probability of acceptance of
\wQoFYwHz{V} on input $ \nVzyhtCm{U}_1 \raIfJZxu{0} \NUxdODWZ \cdots
\NUxdODWZ \nVzyhtCm{U}_k \raIfJZxu{0} $ by $p$.
There are two cases for \wQoFYwHz{W}.
\begin{itemize}
\item With probability $ 1 - 2^{-t} $ one of the runs of algorithm
\wQoFYwHz{A} fails and \wQoFYwHz{W} accepts.
\item With probability $ 2^{-t} $ all the \wQoFYwHz{A}s
succeed. In this case \wQoFYwHz{W} obtains the state
$ \nVzyhtCm{U}_1 \raIfJZxu{0} \NUxdODWZ \cdots
\NUxdODWZ \nVzyhtCm{U}_k \raIfJZxu{0} $.
Given this state as input to \wQoFYwHz{V},
it accepts with probability $p$.
\end{itemize}
The overall probability with which \wQoFYwHz{W} accepts is
\[ 1 - 2^{-t} + 2^{-t} p = 1 - 2^{-t} \DnoYoBfx{1-p} \text{.} \]
\par
We are left to argue about completeness and soundness.
For the completeness assume that there exist states
$ \raIfJZxu{\varphi_1} $, \ldots, $ \raIfJZxu{\varphi_k} $,
with which \wQoFYwHz{V} accepts with probability
at least $c$.
Then there exists unitary operators
$ \nVzyhtCm{U}_1' $, \ldots, $ \nVzyhtCm{U}_k' $ \ygMPbVVP{}
$ \raIfJZxu{\varphi_i} = \nVzyhtCm{U}_i' \raIfJZxu{0} $ for
$ i \in \iteYiYHH{k} $.
The honest provers of \wQoFYwHz{W} can give the descriptions of
these unitaries together with the corresponding magic states,
so the completeness parameter follows.
For the soundness assume that for all states
$ \raIfJZxu{\varphi_1} $, \ldots, $ \raIfJZxu{\varphi_k} $,
\wQoFYwHz{V} accepts with probability at most $s$.
Then even if \wQoFYwHz{W} gets arbitrary circuit descriptions
and magic states, they correspond to some
unitaries when all the runs of \wQoFYwHz{A} succeed.
So the soundness follows similarly to the completeness.
\end{proof}
\LeqcErMT
\end{document}